\numberwithin{equation}{section}
\newtheorem{theorem}{Theorem}
\newtheorem{lemma}{Lemma}
\newtheorem{remark}[theorem]{Remark}
\newtheorem{definition}[theorem]{Definition}
\newtheorem{guess}{Hypothesis}
\def\equfill@{\arrowfill@\Relbar\Relbar\Relbar}
\newcommand{\equfill}[2][]{\ext@arrow 0395\equfill@{#1}{#2}}
\def\Eqlfill@{\arrowfill@\Relbar\Relbar\Relbar}
\newcommand{\extendEql}[1][]{\ext@arrow 0359\Eqlfill@{#1}}
\begin{document}
\title{The Algebro-Geometric Initial Value Problem for the  Relativistic Lotka-Volterra
Hierarchy  and  Quasi-Periodic Solutions}
\author{Peng Zhao, Engui Fan\footnote{Corresponding author and E-mail: faneg@fudan.edu.cn},\quad Yu Hou
\vspace{3mm}\\
Institue of Mathematics, Fudan University, Handan Road 220, \\
  {Shanghai 200433, P R China}}
  \date{}
\maketitle
\leftline{\bf\ Abstract}
 We provide a detailed treatment of  relativistic Lotka-Volterra hierarchy and a kind of initial value problem with special emphasis on its the theta function representation of all algebro-geometric solutions. The basic tools involve
 hyperelliptic curve $\mathcal{K}_n$ associated with the Burchnall-Chaundy polynomial, Dubrovin-type
 equations for auxiliary divisors and associated trace formulas. With the help of a foundamental meromorphic
 function $\tilde{\phi}$ on $\mathcal{K}_p$ and trace formulas, the complex-valued algebro-geometric solutions of
 of RLV hierarchy are derived.\\

\section{Introduction}
\qquad Nonlinear integrable lattice systems have been studied extensively in relation with various aspects and they usually possess rich mathematical structure such as Lax pairs, Hamilton structure, conservation law, etc.
The Toda lattice (TL),
\begin{equation}\label{1.1}
\text{TL}:
\begin{cases}
  a_{t}=a(b^+-b),\cr
  b_t=a-a^-,\cr
\end{cases}
\end{equation}
is one of the most important integrable systems. It is well-known soliton equations such as the KdV, modified KdV, and nonlinear Schr\"{o}dinger equations are closely related to or derived from the Toda equation by
suitable limiting procedures \cite{119,120}. Another celebrated integrable lattice system is Lotka-Volterra (LV) lattice,
\begin{equation}\label{1.2}
\text{LV}:
\begin{cases}
  u_t=u(v-v^-),\cr
  v_t=v(u^+-u),\cr
\end{cases}
\end{equation}
or
\begin{equation}\label{1.3}
  a_t=a(a^+-a^-),
\end{equation}
by setting $a(2n-1)=u(n), a(2n)=v(n).$
Ruijsenaars found a relativistic integrable generalization of non-relativistic Toda lattice through  solving a relativistic
version of the Calogero-Moser system \cite{RT}.The Lax representation,
inverse scattering problem of the Ruijsenaars-Toda lattice and its connection with soliton
dynamics were investigated. A general approach to constructing relativistic generalizations
of integrable lattice systems, applicable to the whole lattice KP hierarchy,
was proposed by Gibbons and Kupershmidt \cite{i}.
However, nobody have known what the relativistic Lotka-Volterra  lattice is until
Y. B. Suris and O. Ragnisco found it \cite{RLV}.

There are very close connnection among the Toda lattice, the Lotka-Volterra lattice, the relativistic Toda lattice and the relativistic Lotka-Volterra lattice. It is well known that (\ref{1.1}) and (\ref{1.2}) are both discrete version of the KdV equation in the sense of different limiting process and
in the non-relativistic limit $h\rightarrow 0$, the relativistic Toda system (RT),
\begin{equation}\label{1.3}
\text{RT}:
\begin{cases}
  a_t=a(b^+-b+ha^+-ha^-),\cr
  b_t=(1+hb)(a-a^-),\cr
\end{cases}
\end{equation}
and the  "relativistic splitting" relativistic Lotka-Volterra systems (RLV),
\begin{equation}\label{1.4}
\text{RLV}:
\begin{cases}
   u_t=u(v-v^-+huv-hu^-v^-),\cr
   v_t=v(u^+-u+hu^+v^+-huv),\cr
\end{cases}
\end{equation}
reduced to the well-known Toda lattice equation (\ref{1.1}) and Lotka-Volterra lattice equation (\ref{1.2}), respectively. Moreover,
 the Miura relation can be summarized in the following diagram:
$$\xymatrix{\text{RLV} \ar[r]^{h\rightarrow 0} \ar[d]_{\tau_{1}} & \text{LV} \ar[d]^{\tau_{2}} \\ \text{RT} \ar[r]_{h\rightarrow 0} & \text{TL}}$$
$$\text{RLV}_1\xrightarrow{\tau_1}\text{RT}, \quad\text{RLV}_2\xrightarrow{\tau_2}\text{RT} $$
Here
\begin{equation*}
  \tau_1:
  \begin{cases}
  a=uv,\cr b=u+v^-,\cr
  \end{cases}
\end{equation*}
\begin{equation*}
  \tau_2:
  \begin{cases}
  a=u^+v,\cr b=u+v.\cr
  \end{cases}
\end{equation*}

Mathematical structure related to relativistic volterra lattice (\ref{1.4}) 
such as Lax integrablity \cite{300}, $2\times 2$ Lax representation \cite{Laxpresentation}, conservation laws \cite{zwxz}, bilinear structure and determinant solution \cite{KMMO} have been closely studied and hence the purpose of this paper is to uniformly construct algebro-geometric
solutions of the relativistic Lotka-Volterra hierarchy which invariably is connected with geometry and Riemann theta functions, parameterized by some Riemann surface. Algebro-geometric solutions (finite-gap solutions or
quasi-period solutions),  as  an important character of integrable system,
 is  a kind of explicit solutions closely related to the inverse spectral theory \cite{99,t}.
Around 1975, several independent groups in UUSR and USA, namely, Novikov, Dubrovin and Krichever in Moscow, Matveev and Its in Leningrad, Lax, McKean, van Moerbeke and
M. Kac in New York, and Marchenko, Kotlyarov and Kozel in Kharkov, developed the so-called finite finite-gap theory of
nonlinear  KdV   equation based on
the works of Drach, Burchnall and Chaunchy, and Baker \cite{21,5}.  The algebro-geometric method they established allowed us to find an important
class of exact solutions to the soliton equations.
As a degenerated case of this solutions, the multisoliton solutions and elliptic functions may be obtained \cite{t,1}.
Its and Matveev first derived explicit expression of the quasi-period solution of KdV equation in 1975 \cite{4}, which is closely related to
the finite-gap spectrum of the associated differential operator.  Further exciting results appeared later, including the finite-gap solutions of
  Toda lattice,  the Kadomtsev-Petviashvili  equation and others \cite{d,5,1}, which could be
found in the wonderful work of Belokolos, et al \cite{t}.  In recent years, a systematic approach based on the nonlinearization technique
of Lax pairs or the restricted flow technique to derive the algebro-geometric solutions of (1+1)-  and (2+1)-dimensional soliton equations has been obtained
\cite{7,11}. An alternate systematic approach proposed by Gesztesy and Holden can be used to construct  algebro-geometric solutions has been extended to the whole (1+1) dimensional continuous and discrete hierarchy  models  \cite{A1,A2,u,19,13}.

The outline of our present paper is as follow. In section 2,
the relativistic Lotka-Volterra equation (\ref{1.4}) is extended to a whole hierarchy through the polynomial recursive relation. The hyperelliptic curve
associated with RLV hierarchy is given in terms of the polynomial. In section 3, We focus on the stationary RLV hierarchy. based on the polynomial
recursion formalism introduced in Section 2 and a fundamental meromorphic function $\tilde{\phi}$ on the hyperelliptic curve $\mathcal{K}_n$, we study the Baker-Akhiezer function $\Psi,$ trace formulas, from which the algebro-geometric solutions for stationary RLV hierarchy are constructed in terms of
Riemann theta functions. In section 4, we extend the algebro-geometric
analysis of Section 3 to the time-dependent RLV hierarchy based on a kind of special initial value problem.
Finally, in section 5  we give  Lagrange interpolation representation that will be  used in this paper.

\section{ Relativistic Lotka-Volterra hierarchy and associated hyperelliptic curve}
\qquad In this section, we investigate the relativistic Lotka-Volterra hierarchy and then derive
the hyperelliptic algebraic curves associated with the algebro-geometric solutions of the newly
constructed hierarchy. Throughout this paper, we have the following definition.
\begin{definition} We denote by $\ell$($\mathbb{Z}$) the set of all the complex-valued sequences
 $\{f(n)\}_{n=-\infty}^{+\infty}$. This is a vector space with respect to the naturally defined operation.
 A subspace $\ell^{2}(\mathbb{Z})\subset\ell(\mathbb{Z})$ is defined by the set of
\{$f\in\ell(\mathbb{Z})|\sum^{+\infty}_{n=-\infty}|f(n)|^{2}<+\infty,n\in\mathbb{Z}$\}.
\end{definition}
\begin{definition} We denote by $S^{\pm}$ the shift operators acting on
$\psi=\{\psi(n)\}_{n=-\infty}^{+\infty}\in\ell(\mathbb{Z})$ according to $(S^{\pm}\psi)(n)=\psi(n\pm1)$.
 The identity operator I acting on $\psi=\{\psi(n)\}_{n=-\infty}^{+\infty}\in\ell(\mathbb{Z})$ according
to $(I\psi)(n)=\psi(n)$. We also define $\psi^{\pm}=S^{\pm}\psi,\psi\in\ell(\mathbb{Z})$.
\end{definition}

 We introduce the following $2\times2$ matrix problem \cite{Laxpresentation}
 \begin{equation}\label{2.1}
 \begin{split}
   &S^{+}\Psi=U(\lambda)\Psi,\\
   &U(\lambda)=\left(
      \begin{array}{cc}
       \lambda \tilde{p}-\lambda^{-1} & \tilde{q}  \\
        \tilde{r} & \lambda \\
       \end{array}
    \right), \quad \lambda\in\mathbb{C},
 \end{split}
 \end{equation}
   where $\Psi=(\Psi_1,\Psi_2)^{T}\in\ell(\mathbb{Z})\times\ell(\mathbb{Z}), \tilde{r}=(\tilde{p}-1)\tilde{q}^{-1}, \tilde{p}=p, \tilde{q}=e^{(S^+-I)^{-1}\ln q}$ and $\tilde{p}, \tilde{q}, p, q$ are potential functions and $\lambda$ is the spectral parameter.
   Here $p=p(n,t),q=q(n,t)\in\ell(\mathbb{Z}), (n,t)\in\mathbb{Z}\times\mathbb{R}$ in time-dependent case and $p=p(n),q=q(n)\in\mathit{l}(\mathbb{Z}), n\in\mathbb{Z}$ in stationary case.

   Define sequences $\{a_\ell(n)\}_{\ell\in\mathbb{N}_0}, \{b_\ell(n)\}_{\ell\in\mathbb{N}_0}$ and $\{c_\ell(n)\}_{\ell\in\mathbb{N}_0}$ in $\ell(\mathbb{Z})$ recursively by
   \begin{equation}\label{2.2}
    \tilde{p}a_\ell^{-}-\tilde{p}a_\ell=a_{\ell+1}^--a_{\ell+1}+\tilde{r}b_{\ell}-\tilde{q}c_{\ell}^{-},\quad\ell\in\mathbb{N}_{0},
   \end{equation}
   \begin{equation}\label{2.3}
    \tilde{p}b_\ell^{-}-b_{\ell+1}^{-}-\tilde{q}a_{\ell+1}^{-}=\tilde{q}a_{\ell+1}+b_{\ell},\quad\ell\in\mathbb{N}_{0},
   \end{equation}
   \begin{equation}\label{2.4}
    \tilde{r}a_{\ell+1}^{-}+c_{\ell}^{-}=\tilde{p}c_{\ell}-c_{\ell+1}-\tilde{r}a_{\ell+1},\quad\ell\in\mathbb{N}_{0},
   \end{equation}
   \begin{equation}\label{2.5}
    a_0=1/2, b_{0}=-\tilde{q}^{+}, c_0=-\tilde{r}.
   \end{equation}
    Explicitly, one obtains
   \begin{equation}\label{2.6}
     \begin{split}
       &a_1=-\tilde{q}^{+}\tilde{r}+\delta_1/2,\\
       &b_1=-\tilde{p}^{+}\tilde{q}^{+}+(\tilde{q}^{+})^{2}\tilde{r}
       +\tilde{q}^{+}\tilde{q}^{++}\tilde{r}^{+}+\tilde{q}^{++}-\tilde{q}^{+}\delta_1,\\
       &c_1=-\tilde{p}\tilde{r}+\tilde{q}^{+}\tilde{r}^{2}+\tilde{r}^{-}+\tilde{q}\tilde{r}\tilde{r}^{-}-\tilde{r}\delta_1,\\
       &a_2=-\tilde{p}^+\tilde{q}^+\tilde{r}+(\tilde{q}^+)^2\tilde{r}^2-\tilde{p}\tilde{q}^+\tilde{r}+\tilde{q}^{++}\tilde{r}
       +\tilde{q}^+\tilde{r}^{-}+\tilde{q}^{++}\tilde{q}^+\tilde{r}^+\tilde{r}+\tilde{q}^+\tilde{q}\tilde{r}\tilde{r}^-\\
       &-\tilde{q}^+\tilde{r}\delta_1+\delta_2/2,\\
       &\dotsi \dotsi.
     \end{split}
   \end{equation}
   Here $\{\delta_{\ell}\}_{\ell\in\mathbb{N}}$ denote summation constants which naturally arise when solving (\ref{2.2})-(\ref{2.5}).

   If we denote by $\bar{a}_{\ell}=a_{\ell}|_{\delta_{j}=0,j=1,\dotsi,\ell}, \bar{b}_{\ell}=b_{\ell}|_{\delta_{j}=0,j=1,\dotsi,\ell}, \bar{c}_{\ell}=c_{\ell}|_{\delta_{j}=0,j=1,\dotsi,\ell}$, $\ell\in\mathbb{N}$ the homogeneous coefficients of $a_\ell, b_\ell, c_\ell$, that is,
   \begin{equation}\label{2.7}
     \begin{split}
       &\bar{a}_1=-\tilde{q}^{+}\tilde{r},\\
       &\bar{b}_1=-\tilde{p}^{+}\tilde{q}^{+}+(\tilde{q}^{+})^{2}\tilde{r}
       +\tilde{q}^{+}\tilde{q}^{++}\tilde{r}^{+}+\tilde{q}^{++},\\
       &\bar{c}_1=-\tilde{p}\tilde{r}+\tilde{q}^{+}\tilde{r}^{2}+\tilde{r}^{-}+\tilde{q}\tilde{r}\tilde{r}^{-},\\
       &\bar{a}_2=-\tilde{p}^+\tilde{q}^+\tilde{r}+(\tilde{q}^+)^2\tilde{r}^2-\tilde{p}\tilde{q}^+\tilde{r}+\tilde{q}^{++}\tilde{r}
       +\tilde{q}^+\tilde{r}^{-}+\tilde{q}^{++}\tilde{q}^+\tilde{r}^+\tilde{r}+\tilde{q}^+\tilde{q}\tilde{r}\tilde{r}^-\\
       &\dotsi\dotsi.
     \end{split}
   \end{equation}
   By induction one infers that
   \begin{equation}\label{2.8}
       a_{\ell}=\sum_{k=0}^{\ell}\delta_{\ell-k}\bar{a}_{k},\quad \ell\in\mathbb{N}_0,
   \end{equation}
   \begin{equation}\label{2.9}
       b_{\ell}=\sum_{k=0}^{\ell}\delta_{\ell-k}\bar{b}_{k},\quad \ell\in\mathbb{N}_0,
   \end{equation}
   \begin{equation}\label{2.10}
       c_{\ell}=\sum_{k=0}^{\ell}\delta_{\ell-k}\bar{c}_{k},\quad \ell\in\mathbb{N}_0,
   \end{equation}
   introducing $\delta_0=1$.

   To construct the relativistic Lotka-Volterra hierarchy we will consider the following
   ansatz
   \begin{equation}\label{2.11}
   V_n(\lambda)=\left(
        \begin{array}{cc}
          A_{2n+2}^-(\lambda) & B_{2n+1}^-(\lambda) \\
          C_{2n+1}^-(\lambda) & -D_{2n+2}^-(\lambda)
        \end{array}\right),\quad n\in\mathbb{N}_0,
   \end{equation}
   where $A_{2n+2}, B_{2n+1}, C_{2n+1}$ and $D_{2n+2}$ are chosen as polynomials, namely
   \begin{equation}\label{2.12}
       A_{2n+2}(\lambda)=\sum_{\ell=0}^{n+1}a_\ell\lambda^{-\left(2n+2-2\ell\right)}=
       a_0\lambda^{-(2n+2)}+a_1\lambda^{-2n}+\dotsi+a_{n+1},
   \end{equation}
   \begin{equation}\label{2.13}
       B_{2n+1}(\lambda)=\sum_{\ell=0}^{n}b_\ell\lambda^{-\left(2n+1-2\ell\right)}=
       b_0\lambda^{-(2n+1)}+b_1\lambda^{-(2n-1)}+\dotsi+b_{n}\lambda^{-1},
   \end{equation}
   \begin{equation}\label{2.14}
       C_{2n+1}(\lambda)=\sum_{\ell=0}^{n}c_\ell\lambda^{-\left(2n+1-2\ell\right)}=
       c_0\lambda^{-(2n+1)}+c_1\lambda^{-(2n-1)}+\dotsi+c_{n}\lambda^{-1},
   \end{equation}
   \begin{equation}\label{2.15}
      D_{2n+2}(\lambda)=\sum_{\ell=0}^{n+1}d_\ell\lambda^{-\left(2n+2-2\ell\right)}=
       d_0\lambda^{-(2n+2)}+d_1\lambda^{-2n}+\dotsi+d_{n+1}
   \end{equation}
   and the coefficient $\{a_\ell\}_{\ell=0}^{n+1}, \{b_\ell\}_{\ell=0}^{n}, \{c_\ell\}_{\ell=0}^{n}, \{d_\ell\}_{\ell=0}^{n+1}$ are defined in $\ell(\mathbb{Z})$.

   The stationary zero-curvature equation
   \begin{equation}\label{2.16}
      0=U(\lambda)V_n(\lambda)-V_n^+(\lambda)U(\lambda)
   \end{equation}
   is equivalent to the following equalities
   \begin{equation}\label{2.17}
     (\tilde{p}\lambda-\lambda^{-1})A_{2n+2}^-+\tilde{q}C^{-}_{2n+1}=
     (\tilde{p}\lambda-\lambda^{-1})A_{2n+2}+\tilde{r}B_{2n+1},
   \end{equation}
   \begin{equation}\label{2.18}
     (\tilde{p}\lambda-\lambda^{-1})B_{2n+1}^--\tilde{q}D_{2n+2}^-=\tilde{q}A_{2n+2}+B_{2n+1}\lambda,
   \end{equation}
   \begin{equation}\label{2.19}
     \tilde{r}A_{2n+2}^-+C_{2n+1}^-\lambda=(\tilde{p}\lambda-\lambda^{-1})C_{2n+1}-\tilde{r}D_{2n+2},
   \end{equation}
   \begin{equation}\label{2.20}
     \tilde{r}B_{2n+1}^--D_{2n+2}^-\lambda=\tilde{q}C_{2n+1}-D_{2n+2}\lambda.
   \end{equation}
   From (\ref{2.16}), one finds the matrix $V_n(\lambda)$ is similar to $V_{n}^{+}(\lambda)$ and then we have
   $$\text{trace}\left(V_n(\lambda)\right)=\text{trace}\left(V_{n}^+(\lambda)\right),$$
   namely, $$A^{-}_{2n+2}-D^{-}_{2n+2}=A_{2n+2}-D_{2n+2}.$$
   Hence $A_{2n+2}-D_{2n+2}$ is $n$-independence.
   Without loss of generality we can choose
   \begin{equation}\label{2.21}
       d_{\ell}=a_{\ell},\quad \ell=0,1,2,\dotsi,n, \quad d_{n+1}=\triangle,
   \end{equation}
   where $\triangle=1=(\dotsi,1,1,1,\dotsi)\in\ell(\mathbb{Z})$ is a constant value sequence.
   In stationary case we have \begin{equation}\label{2021}A_{2n+2}=D_{2n+2},\end{equation}
   since this can be always be achieved by adding a constant coefficient polynomial times the identity matrix to
   $V_n$, which not affect the stationary zero-curvature equation (\ref{2.16}).

   Plugging the ansatz (\ref{2.12})-(\ref{2.15}) (\ref{2.21}) into (\ref{2.17})-(\ref{2.20}) and comparing the coefficients yields
   the following relations for $\{a_{\ell}\}_{\ell=0}^{n+1},\{b_{\ell}\}_{\ell=0}^{n},\{c_{\ell}\}_{\ell=0}^{n}.$
   That is,
   \begin{equation}\label{2.22}
        \tilde{p}a_\ell^{-}-\tilde{p}a_\ell=a_{\ell+1}^--a_{\ell+1}+\tilde{r}b_{\ell}-\tilde{q}c_{\ell}^{-},\quad\ell=0,1,\dotsi,n-1,
   \end{equation}
   \begin{equation}\label{2.23}
        \tilde{p}b_\ell^{-}-b_{\ell+1}^{-}-\tilde{q}a_{\ell+1}^{-}=\tilde{q}a_{\ell+1}+b_{\ell},\quad\ell=0,1,\dotsi,n-1,
   \end{equation}
   \begin{equation}\label{2.24}
       \tilde{r}a_{\ell+1}^{-}+c_{\ell}^{-}=\tilde{p}c_{\ell}-c_{\ell+1}-\tilde{r}a_{\ell+1},\quad\ell=0,1,\dotsi,n-1,
   \end{equation}
   \begin{equation}\label{2.25}
       \tilde{r}b_{\ell+1}^--a_{\ell+1}^-=\tilde{q}c_{\ell+1}-a_{\ell+1},\quad\ell=0,\dotsi,n-1,
   \end{equation}
   \begin{equation}\label{2.26}
       \tilde{p}a_{n+1}^--\tilde{p}a_{n+1}=0,
   \end{equation}
   \begin{equation}\label{2.27}
       \tilde{p}b_{n}^--\tilde{q}\triangle=\tilde{q}a_{n+1}+b_{n},
   \end{equation}
   \begin{equation}\label{2.28}
       \tilde{r}a_{n+1}^-+c_n^-=\tilde{p}c_n-\tilde{r}\triangle,
   \end{equation}
   \begin{equation}\label{2.29}
       -\triangle^-=-\triangle.
   \end{equation}
   Hence if $\{a_{\ell}\}_{\ell=0}^{n+1},\{b_{\ell}\}_{\ell=0}^{n},\{c_{\ell}\}_{\ell=0}^{n}$ are defined by
   (\ref{2.2})-(\ref{2.5}), then one finds (\ref{2.22})-(\ref{2.24}) (\ref{2.29}) hold naturally.
   From (\ref{2.22})-(\ref{2.24}), one calculate
   \begin{equation}\label{2.30}
      \begin{split}
         &\tilde{r}b_{\ell+1}-\tilde{q}a_{\ell+1}\\
         &=\tilde{r}\left(\tilde{p}b_{\ell}^--\tilde{q}a_{\ell+1}^--\tilde{q}a_{\ell+1}-b_{\ell}\right)-\tilde{q}
         \left(\tilde{p}c_{\ell}-\tilde{r}a_{\ell+1}-\tilde{r}a_{\ell+1}-c_{\ell}^-\right)\\
         &=\tilde{p}(\tilde{r}b_{\ell}^--\tilde{q}c_{\ell})-\tilde{r}b_{\ell}+\tilde{q}c_{\ell}^-\\
         &=\tilde{p}(a_{\ell}^--a_{\ell})-\tilde{r}b_{\ell}+\tilde{q}c_{\ell}^-\\
         &=a_{\ell+1}^--a_{\ell+1},\\
      \end{split}
   \end{equation}
   where we used the induction
   $$\tilde{r}b_{k}^--a_{k}^-=\tilde{q}c_{k}-a_{k},\quad k=0,1,\dotsi,\ell$$
   in the third equality of (\ref{2.30}).
   Therefore (\ref{2.25}) is the direct result of (\ref{2.22})-(\ref{2.24}).
   Using (\ref{2.2})-(\ref{2.5}), (\ref{2.26})-(\ref{2.29}) are equivalent to
   \begin{equation}\label{2.31}
        \tilde{p}a_{n+1}^--\tilde{p}a_{n+1}=0,
   \end{equation}
   \begin{equation}\label{2.32}
        b_{n+1}^{-}+\tilde{q}a_{n+1}^--\tilde{q}\triangle=0,
   \end{equation}
   \begin{equation}\label{2.33}
       c_{n+1}+\tilde{r}a_{n+1}+\tilde{r}\triangle=0.
   \end{equation}
   Noticing the condition (\ref{2.25}) and above analysis, (\ref{2.31}) and (\ref{2.32})
   give rise to the stationary relativistic Lotka-Volterra hierarchy, which we introduce
   as follows
   \begin{equation}\label{2.34}
      \text{s-RLV}_n(p,q)=\text{s-}\widetilde{\text{RLV}_n}(\tilde{p},\tilde{q},\tilde{r})
      =\left(
         \begin{array}{c}
           a_{n+1}^--a_{n+1} \\
           b_{n+1}^{-}+\tilde{q}a_{n+1}^--\tilde{q}\triangle \\
         \end{array}
       \right)=0,\quad p\in\mathbb{N}_0.
   \end{equation}
   Explicitly,
   \begin{equation}\label{2.35}
       \begin{split}
        &\text{s-RLV}_0(p,q)=\text{s-}\widetilde{\text{RLV}_0}(\tilde{p},\tilde{q},\tilde{r})\\
        &=\left(\begin{array}{c}
           -\tilde{q}\tilde{r}^-+\tilde{q}^+\tilde{r} \\
           -\tilde{p}\tilde{q}+\tilde{q}\tilde{q}^+\tilde{r}+\tilde{q}^+-\tilde{q}(\triangle+\delta_1/2)
         \end{array}\right)\\
         &=\left(
             \begin{array}{c}
               pq-p^-q^-+q^--q \\
               \tilde{q}\left(pq-p-(\delta_1/2+\triangle)\right) \\
             \end{array}
           \right)=0,
         \end{split}
   \end{equation}
   where we use the relation
   \begin{equation}\label{2.36}
       \begin{split}
       &\tilde{q}^+/\tilde{q}=e^{(S^+-I)^{-1}q^+}/e^{(S^+-I)^{-1}q}=q,\\
       &\tilde{q}\tilde{r}=\tilde{p}-1=p-1, \\
       &\tilde{q}\tilde{r}^+=(p-1)q.
       \end{split}
   \end{equation}
   Taking $\delta_1=-2\triangle=-2$, (\ref{2.35}) gives rise to
   the stationary relativistic Lotka-Volterra equation
   \begin{equation}\label{2.37}
        \left(
          \begin{array}{c}
          pq-p^-q^-+q^--q\\
            pq-p \\
          \end{array}
        \right)=0,
   \end{equation}
   which means
   \begin{equation}\label{2.38}
     \begin{split}
      &\left(
         \begin{array}{c}
           pq-p^-q^-+q^--q \\
           p^+q^+-pq-p^++p \\
         \end{array}
       \right)\\
      &=\left(
        \begin{array}{cc}
          I & 0 \\
          0 & S^+-I \\
        \end{array}
      \right)
      \left(
        \begin{array}{c}
          pq-p^-q^-+q^--q \\
         pq-p \\
        \end{array}
      \right)\\
      &=0.
      \end{split}
   \end{equation}
   Next we turn to the time-dependent relativistic Lotka-Volterra hierarchy. The zero-curvature equation is
  \begin{equation}\label{2.39}
     U_{t_n}(\lambda)+U(\lambda)V_n(\lambda)-V_n^+(\lambda)U(\lambda)=0.
  \end{equation}
  $U(\lambda), V_n(\lambda),\{a_\ell(\cdot,t)\}_{\ell=0}^{n+1},\{b_\ell(\cdot,t)\}_{\ell=0}^{n},
  \{c_\ell(\cdot,t)\}_{\ell=0}^{n},\{d_\ell(\cdot,t)\}_{\ell=0}^{n+1}$ are defined by
  (\ref{2.1}), (\ref{2.2})-(\ref{2.5}), (\ref{2.11})-(\ref{2.15}) and (\ref{2.21}).
  Then (\ref{2.39}) implies
  \begin{equation}\label{2.40}
      \left(
        \begin{array}{cc}
          -\tilde{p}_{t_n}\lambda & -\tilde{q}_{t_n} \\
          -\tilde{r}_{t_n} & 0 \\
        \end{array}
      \right)+
      \left(
        \begin{array}{cc}
          \Delta_{11}&\Delta_{12}\\
          \Delta_{21}&\Delta_{22} \\
        \end{array}
      \right)=0,
  \end{equation}
  where
  \begin{equation}\label{2.41}
   \Delta_{11}=(\tilde{p}\lambda-\lambda^{-1})A_{2n+1}^-+\tilde{q}C_{2n+1}^--(\tilde{p}\lambda-\lambda^{-1})A_{2n+2}+\tilde{r}B_{2n+1},
  \end{equation}
  \begin{equation}\label{2.42}
    \Delta_{12}=(\tilde{p}\lambda-\lambda^{-1})B_{2n+1}^--\tilde{q}D_{2n+2}^--\tilde{q}A_{2n+2}-B_{2n+1}\lambda,
  \end{equation}
  \begin{equation}\label{2.43}
    \Delta_{21}=\tilde{r}A_{2n+2}^-+C_{2n+1}^-\lambda-(\tilde{p}\lambda-\lambda^{-1})C_{2n+1}+\tilde{r}D_{2n+2}
  \end{equation}
  \begin{equation}\label{2.44}
    \Delta_{22}=\tilde{r}B_{2n+2}^--D_{2n+2}^-\lambda-\tilde{q}C_{2n+1}+D_{2n+2}\lambda.
  \end{equation}
  Inserting (\ref{2.2})-(\ref{2.5}) into (\ref{2.40}) then yields
  \begin{equation}\label{2.45}
    -\tilde{p}_{t_n}+\tilde{p}(a_{n+1}^--a_{n+1})=0,
  \end{equation}
  \begin{equation}\label{2.46}
    -\tilde{q}_{t_n}+b_{n+1}^{-}+\tilde{q}a_{n+1}^--\tilde{q}\triangle=0
  \end{equation}
  \begin{equation}\label{2.47}
    -\tilde{r}_{t_n}+c_{n+1}+\tilde{r}a_{n+1}+\tilde{r}\triangle=0.
  \end{equation}
  Using (\ref{2.23}) (\ref{2.25}) (\ref{2.45}) (\ref{2.46}), we directly calculate
  \begin{equation}\label{2.48}
  \begin{split}
    &\tilde{r}_{t_n}=\partial_{t_n}((\tilde{p-1})/\tilde{q})=\tilde{p}_{t_n}/q-(\tilde{p}-1)\tilde{q}_{t_n}/\tilde{q}^2\\
    &=\tilde{p}(a_{n+1}^--a_{n+1})/\tilde{q}-(\tilde{p}-1)(b_{n+1}^{-}+\tilde{q}a_{n+1}^--\tilde{q}\triangle)/\tilde{q}^2\\
    &=\tilde{p}(a_{n+1}^--a_{n+1})/\tilde{q}-\tilde{r}(\tilde{p}b_{n}^--b_n-\tilde{q}a_{n+1}-\tilde{q}\triangle)/\tilde{q}\\
    &=\tilde{p}(a_{n+1}^--a_{n+1})/\tilde{q}-\tilde{r}\tilde{p}b_{n}^-/\tilde{q}+\tilde{r}b_n/\tilde{q}+(\tilde{r}a_{n+1}+\tilde{r}\triangle)\\
    &=[\tilde{p}(a_{n+1}^--a_{n+1})+\tilde{r}(b_n-\tilde{p}b_n)]/\tilde{q}+(\tilde{r}a_{n+1}+\tilde{r}\triangle)\\
    &=[\tilde{p}(a_{n+1}^--a_{n+1})+\tilde{r}(-b_{n+1}^--\tilde{q}a_{n+1}^--\tilde{q}a_{n+1}]/\tilde{q}+(\tilde{r}a_{n+1}+\tilde{r}\triangle)\\
    &=[\tilde{p}(a_{n+1}^--a_{n+1})+(1-\tilde{p})(a_{n+1}^--a_{n+1})-\tilde{r}b_{n+1}^-]/\tilde{q}+(\tilde{r}a_{n+1}+\tilde{r}\triangle)\\
    &=[a_{n+1}^--a_{n+1}-\tilde{r}b_{n+1}^-]/\tilde{q}+(\tilde{r}a_{n+1}+\tilde{r}\triangle)\\
    &=c_{n+1}+\tilde{r}a_{n+1}+\tilde{r}\triangle.\\
  \end{split}
  \end{equation}
  Thus, (\ref{2.45})-(\ref{2.47}) are essentially two independent equations (\ref{2.45}) and (\ref{2.46}).
  Varying $n\in\mathbb{N}_0$, (\ref{2.45}) and (\ref{2.46}) give rise to time-dependent relativistic
  Lotka-Volterra hierarchy, which we introduce as follows
  \begin{equation}\label{2.49}
      \text{RLV}_n(p,q)=\widetilde{\text{RLV}_n}(\tilde{p},\tilde{q},\tilde{r})
      =\left(
         \begin{array}{c}
           -\tilde{p}_{t_n}+\tilde{p}(a_{n+1}^--a_{n+1})\\
           -\tilde{q}_{t_n}+b_{n+1}^{-}+\tilde{q}a_{n+1}^--\tilde{q}\triangle \\
         \end{array}
       \right)=0,\quad p\in\mathbb{N}_0.
   \end{equation}
   Explicitly,
   \begin{equation}\label{2.50}
       \begin{split}
        &\text{RLV}_0(p,q)=\widetilde{\text{RLV}_0}(\tilde{p},\tilde{q},\tilde{r})\\
        &=\left(\begin{array}{c}
           -\tilde{p}_{t_0}+\tilde{p}(-\tilde{q}\tilde{r}^-+\tilde{q}^+\tilde{r}) \\
           -\tilde{q}_{t_0}-\tilde{p}\tilde{q}+\tilde{q}\tilde{q}^+\tilde{r}+\tilde{q}^+-\tilde{q}(\triangle+\delta_1/2)
         \end{array}\right)\\
         &=\left(
             \begin{array}{c}
               -p_{t_0}+p(pq-p^-q^-+q^--q) \\
               \partial_{t_0}(e^{(S^{+}-I)^{-1}\ln q})+e^{(S^+-I)^{-1}\ln q}\left(pq-p-(\delta_1/2+\triangle)\right) \\
             \end{array}
           \right)\\
         &=\left(
             \begin{array}{c}
               -p_{t_0}+p(pq-p^-q^-+q^--q) \\
               e^{(S^+-I)^{-1}\ln q}(S^{-1}-I)^{-1}(q_{t_n}/q)+e^{(S^+-I)^{-1}\ln q}\left(pq-p-(\delta_1/2+\triangle)\right)\\
             \end{array}\right)\\
         &=0
         \end{split}
  \end{equation}
   where we used
   \begin{equation}\label{2.51}
      \partial_{t_n}(e^{(S^{+}-I)^{-1}\ln q})=e^{(S^+-I)^{-1}}\ln q\partial_{t_n}((S^+-I)^{-1}\ln q)=e^{(S^+-I)^{-1}\ln q}(S^{-1}-I)^{-1}(q_{t_n}/q),
   \end{equation}
   in the third equality of (\ref{2.50}).
   Taking $\delta_1=-2\Delta=-2$, (\ref{2.50}) is just the relativistic Lotka-Volterra equation, which is
   equivalent to the form of
   \begin{equation}\label{2.52}
       \begin{split}
        &\text{RLV}_0(p,q)=\widetilde{\text{RLV}_0}(\tilde{p},\tilde{q},\tilde{r})\\
         &=\left(\begin{array}{c}
            -p_{t_0}+p(pq-p^-q^-+q^--q) \\
           -q_{t_0}+q(p^+q^+-p^+-pq+p)
         \end{array}\right)\\
         &=0.\\
         \end{split}
   \end{equation}
   In order to derive the algebraic curve associated with the relativistic Lotka-Volterra hierarchy,
   we need the some change about $U(\lambda)$, which is
   \begin{equation}\label{2.53}
      U(\lambda)\rightarrow \tilde{U}(\xi)=\xi U(\xi^{-1})=
      \left(
        \begin{array}{cc}
          \tilde{p}-z & \tilde{q}\xi \\
          \tilde{r}\xi & 1 \\
        \end{array}
      \right),\quad z=\xi^2.
      \end{equation}
      Let
      \begin{equation}\label{2.54}
        \tilde{V}_{n}(\xi)=V_{n}(\xi^{-1}),
      \end{equation}
      then $\tilde{U}(\xi), \tilde{V}_{n}(\xi)$ remain satisfy the zero-curvature equation (\ref{2.16}) and (\ref{2.39}), that is
      \begin{equation}\label{2.55}
        \tilde{U}(\xi)\tilde{V}_n(\xi)-\tilde{V}_n^+(\xi)\tilde{U}(\xi)=0
      \end{equation}
      \begin{equation}\label{2.56}
        \tilde{U}_{t_n}(\xi)+\tilde{U}(\xi)\tilde{V}_n(\xi)-\tilde{V}_n^+(\xi)\tilde{U}(\xi)=0
      \end{equation}
      and hence the final form of (\ref{2.34}) (\ref{2.49}) is invariant.
      Assume
      \begin{equation}\label{2.57}
        \tilde{V}_{n}(\xi)=V_{n}(\xi^{-1})
        =\left(
        \begin{array}{cc}
        \tilde{A}_{n+1}(z) & \xi\tilde{B}_{n}(z) \\
        \xi\tilde{C}_{n}(z) & -\tilde{D}_{n+1}(z) \\
        \end{array}
        \right)
      \end{equation}
      and then one derives
      \begin{equation}\label{2.58}
        \tilde{A}_{n+1}(z)=A_{2n+2}(\xi^{-1}),
      \end{equation}
      \begin{equation}\label{2.59}
        \tilde{B}_{n}(z)=\xi^{-1}B_{2n+1}(\xi^{-1}),
      \end{equation}
      \begin{equation}\label{2.60}
        \tilde{C}_{n}(z)=\xi^{-1}C_{2n+1}(\xi^{-1}),
      \end{equation}
      \begin{equation}\label{2.61}
        \tilde{D}_{n+1}(z)=D_{2n+2}(\xi^{-1})
      \end{equation}
      by comparing with (\ref{2.11}). Then
      (\ref{2.17})-(\ref{2.10}) change into
      \begin{equation}\label{2.17a}
     (\tilde{p}-z)\tilde{A}_{n+1}^-+z\tilde{q}\tilde{C}_n^{-}=
     (\tilde{p}-z)\tilde{A}_{n+1}+z\tilde{r}\tilde{B}_{n+1},
   \end{equation}
   \begin{equation}\label{2.18a}
     (\tilde{p}-z)\tilde{B}_{n}^--\tilde{q}\tilde{D}_{n+1}^-=\tilde{q}\tilde{A}_{n+1}+\tilde{B}_{n},
   \end{equation}
   \begin{equation}\label{2.19a}
     \tilde{r}\tilde{A}_{n+1}^-+\tilde{C}_{n}^-=(\tilde{p}-z)\tilde{C}_{n}-\tilde{r}\tilde{D}_{n+1},
   \end{equation}
   \begin{equation}\label{2.20a}
     z\tilde{r}\tilde{B}_{n}^--D_{n+1}^-=z\tilde{q}\tilde{C}_{n}-\tilde{D}_{n+1},
   \end{equation}
   respectively.

      Taking into account (\ref{2.55}) (\ref{2021}), one infers that the expression $R_{2n+2}(z)$, defined as
      \begin{equation}\label{2.62}
         R_{2n+2}(z)=-(\tilde{A}_{n+1})^2-z\tilde{B}_n\tilde{C}_n,
      \end{equation}
      is a lattice constant, that is, $R_{2n+2}=R_{2n+2}^-$, since taking determinants in the stationary
      zero-curvature equation (\ref{2.55}) immediately yields
      \begin{equation}\label{2.63}
        \tilde{p}(1-z)[-(\tilde{A}_{n+1})^2-\tilde{B}_n^-\tilde{C}_n^-+(\tilde{A}_{n+1})^2+\tilde{B}_{n}\tilde{C}_n]=0.
      \end{equation}
      Hence, $R_{2n+2}$ only depends on $z$, and one may write $R_{2n+2}$ as
      \begin{equation}\label{2.64}
        R_{2n+2}(z)=-(1/4)\prod_{m=0}^{2n+1}(z-E_m),\quad n\in\mathbb{N}_0.
      \end{equation}

\begin{remark}\label{remark3}
(i) Taking the transformation $p\rightarrow hp, q\rightarrow hq,$ one concludes that the equation (\ref{2.34}) and (\ref{2.49}) change into the normal form of relativistic Lotka-Volterra system \cite{RLV,Laxpresentation}.

(ii) Taking the transformation $\tilde{p}=p=\tilde{v}^+/\tilde{v},\tilde{v}\in\ell(\mathbb{Z})$, (\ref{2.34}) and (\ref{2.49}) change into
\begin{equation}\label{2.65}
    \text{s-}\overline{\widetilde{\text{RLV}}}_n(\tilde{v},\tilde{q},\tilde{r})=\text{s-}\widetilde{\text{RLV}_n}(\tilde{v}^+/\tilde{v},\tilde{q},\tilde{r})=0,
\end{equation}
\begin{equation}\label{2.66}
   \overline{\widetilde{\text{RLV}}}_n(\tilde{v},\tilde{q},\tilde{r})=\widetilde{\text{RLV}_n}(\tilde{v}^+/\tilde{v},\tilde{q},\tilde{r})=0.
\end{equation}

\end{remark}
\section{Algebro-geometric Solutions of Stationary Relativistic Lotka-Volterra Hierarchy}

In this section, we present a detailed study of the stationary Toda hierarchy.
Our principle tools are derived from the polynomial recursion formalism introduced in section 2 and
a fundamental meromorphic function $\phi$
on $\mathcal{K}_n$. With the help of $\phi$ we study the Baker-Akhiezer vector $\Psi$, the common eigenfunction\
of $\tilde{U}(\xi)$ and $\tilde{V}_n(\xi)$, trace formulas, and theta function representations of
$\phi$, $\Psi$, $p$ and $q$.

Throughout this paper we have the following hypothesis
\begin{guess}
     \text{Assume}
      \begin{equation}\label{3.1}
             \tilde{p}=p\neq 0,1,\quad\tilde{p}(n),\tilde{q}(n),\tilde{r}(n),p(n),q(n)\in\ell(\mathbb{Z}),\quad n\in\mathbb{Z},
     \end{equation}
and $n\in\mathbb{N}_0$ in (\ref{2.34}) fixed.
\end{guess}

\begin{guess}
    The affine part of the algebraic curve associated with the $n$-th equation in the stationary relativistic
    Lotka-Volterra hierarchy (\ref{2.34}), which takes the form of
    \begin{equation}\label{3.2}
       \mathcal{K}_n:\mathcal{F}_n(z,y)= y^2+4R_{2n+2}(z)=y^2-\prod_{m=0}^{2n+1}(z-E_m)=0,\quad\{E_m\}_{m=0}^{2n+1}\subseteq\mathbb{C}\backslash\{0\}
    \end{equation}
    is nonsingular.
     That is,
    \begin{equation}\label{3.3}
      \begin{split}
        &E_m\neq E_{m^{'}}\quad for\quad m\neq m^{'}, m, m^{'}=0,1,2,\dotsi,2n+1.\\
        \end{split}
    \end{equation}
\end{guess}
$\mathcal{K}_{n}$ defined in (\ref{3.2}) is compactified by joining two points $P_{\infty\pm}, P_{\infty+}\neq P_{\infty-}$ at infinity, but for notational simplicity the compactification is also denoted by $\mathcal{K}_n$.

One can introduce the complex structure on $\mathcal{K}_{n}$ to yield a compact Riemann surface of genus $n$ \cite{r,s,16}. This is a hyperelliptic Riemann surface. It can be regarded as a double covering of sphere surface $\mathbb{C}_{\infty}=\mathbb{C}\bigcup\{\infty\}$.
Points $P$ on $\mathcal{K}_p\backslash\{P_{\infty\pm}\}$ are represented as pairs $P=(z,y)$, where $y(\cdot)$ is the meromorphic function on $\mathcal{K}_n$ satisfying $\mathcal{F}_{n}(z,y)=0$.

We write
\begin{equation}\label{3.4}
    \tilde{B}_{n}(z)=(-\tilde{q}^{+})\prod_{j=1}^{n}(z-\mu_j),\quad \tilde{C}_{n}(z)=(-\tilde{r})\prod_{j=1}^{n}(z-\nu_j),\quad\mu_j,\nu_j\in\ell(\mathbb{Z}).
\end{equation}
Next we ¡¯lift¡¯ the point $\mu_j,$
$\nu_j$ from the $\mathbb{C}_{\infty}$ to the compact Riemann surface $\mathcal{K}_n,$
\begin{equation}\label{3.5}
    \hat{\mu}_j=(\mu_j,-2\tilde{A}_{n+1}(\mu_j,n)),\quad \hat{\nu}_j=(\nu_j,2\tilde{A}_{n+1}(\nu_j,n)),
    \quad j=1,2,\dotsi,n
\end{equation}
and introduce the points $P_{0,\pm}$ on $\mathcal{K}_n$
\begin{equation}\label{3.6}
    P_{0,\pm}=(0,\pm2\tilde{A}_{n+1}(0,n)),\quad \tilde{A}_{n+1}^2(0,n)=\frac{1}{4}\prod_{m=0}^{2n+1}E_m.
\end{equation}
We introduce the holomorphic sheet exchange map on $\mathcal{K}_n$
$$*:\quad \mathcal{K}_n\rightarrow\mathcal{K}_n\quad P=(z,y)\mapsto P^{*}=(z,-y)\quad P_{\infty\pm}\mapsto P_{\infty\pm}^{*}=P_{\infty\mp}.$$

Next one can define a fundamental meromorphic
function $\tilde{\phi}$ on $\mathcal{K}_n$
\begin{equation}\label{3.7}
    \phi(P,n)=\frac{\frac{1}{2}y-\tilde{A}_{n+1}(z,n)}{B_{n}(z,n)}
    =\frac{z\tilde{C}_{n}(z,n)}{\frac{1}{2}y+\tilde{A}_{n+1}(z,n)},
    \quad P=(z,y)\in\mathcal{K}_n.
\end{equation}
The divisor $(\tilde{\phi}(\cdot))$ of $\tilde{\phi}$ is
\begin{equation}\label{3.8}
    (\phi(\cdot,n))=\mathcal{D}_{P_{0,+}\underline{\hat{\nu}}(n)}-\mathcal{D}_{P_{\infty+}\underline{\hat{\mu}}(n)},
\end{equation}
where we abbreviated
\begin{equation}\label{3.9}
    \underline{\hat{\mu}}(n)=\left(\mu_1(n),\mu_2(n),\dotsi,\mu_n(n)\right),\quad \underline{\hat{\nu}}(n)=\left(\nu_1(n),\nu_2(n),\dotsi,\nu_n(n)\right).
\end{equation}
Given $\phi(\cdot,n)$, the stationary Baker-Akhiezer vector $\Psi(\cdot,n,n_0)$ on
$\mathcal{K}_n$ is then defined by
\begin{equation}\label{3.10}
    \Psi(P,\xi,n,n_0)=\left(
                 \begin{array}{c}
                    \Psi_1(P,\xi,n,n_0) \\
                    \Psi_2(P,\xi,n,n_0)
                  \end{array}
                  \right),
\end{equation}

\begin{equation}\label{3.11}
    \Psi_1(P,\xi,n,n_0)=
    \begin{cases}
      \prod_{n^{'}=n_0+1}^{n}\left(\tilde{p}(n^{'})-z+\tilde{q}(n^{'})\tilde{\phi}^-(P,n^{'})\right),&n\geq n_0+1,\cr
      1,&n=n_0,\cr
      \prod_{n^{'}=n+1}^{n_0}\left(\tilde{p}(n^{'})-z+\tilde{q}(n^{'})\tilde{\phi}^-(P,n^{'})\right)^{-1},&n\leq n_0-1,
    \end{cases}
\end{equation}

\begin{equation}\label{3.12}
    \Psi_2(P,\xi,n,n_0)=\xi^{-1}\tilde{\phi}(P,n_0)
    \begin{cases}
      \prod_{n^{'}=n_0+1}^{n}\left(\frac{\tilde{r}(n^{'})z}{\tilde{\phi}^-(P,n^{'})}+1\right),&n\geq n_0+1,\cr
      1,&n=n_0,\cr
      \prod_{n^{'}=n+1}^{n_0}\left(\frac{\tilde{r}(n^{'})z}{\tilde{\phi}^-(P,n^{'})}+1\right)^{-1},&n\leq n_0-1.
    \end{cases}
\end{equation}

Some properties of $\tilde{\phi}, \Psi_1, \Psi_2$ are discussed in the following lemma.
\begin{lemma}\label{lemma1}
    Suppose $p,q$ satisfy the $n$-th relativistic Lotka-Volterra hierarchy (\ref{2.34}) and (\ref{3.1})-(\ref{3.12}) holds. Let $P=(z,y)\in\mathcal{K}_n\backslash\{P_{\infty\pm},P_{0,\pm}\}, (n,n_0)\in\mathbb{Z}^2.$ Then $\tilde{\phi}$ satisfies the Riccati-type equation
    \begin{equation}\label{3.13}
        (\tilde{p}-z)\tilde{\phi}+\tilde{q}\tilde{\phi}\tilde{\phi}^-=\tilde{r}z+\tilde{\phi}^{-},
    \end{equation}
    as well as
    \begin{equation}\label{3.14}
        \tilde{\phi}(P)\tilde{\phi}(P^{*})=\frac{-z\tilde{C}_{n}(z)}{\tilde{B}_{n}(z)},
    \end{equation}
    \begin{equation}\label{3.15}
        \tilde{\phi}(P)+\tilde{\phi}(P^{*})=\frac{-2\tilde{A}_{n+1}}{\tilde{B}_n},
   \end{equation}
   \begin{equation}\label{3.16}
        \tilde{\phi}(P)-\tilde{\phi}(P^{*})=\frac{y}{\tilde{B}_n(z)}.
   \end{equation}
   Moreover, the vector $\Psi$ satisfy
   \begin{equation}\label{3.17}
     \tilde{U}(\xi)\Psi^-(P)=\Psi(P),
   \end{equation}
   \begin{equation}\label{3.18}
     \tilde{V}_n(\xi)\Psi(P)=(1/2)y\Psi(P),
   \end{equation}
    \begin{equation}\label{3.19}
        \Psi_2(P,\xi,n,n_0)=\xi^{-1}\tilde{\phi}(P,n)\Psi_1(P,\xi,n,n_0).
    \end{equation}

\end{lemma}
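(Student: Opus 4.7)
The plan is to treat the identities in the order that makes each subsequent one nearly mechanical. I would first verify the three algebraic identities (\ref{3.14})--(\ref{3.16}) directly from the two equivalent expressions for $\tilde{\phi}$ in (\ref{3.7}). Their equivalence encodes the on-shell relation $\tfrac14 y^{2}=\tilde{A}_{n+1}^{2}+z\tilde{B}_n\tilde{C}_n$, which is (\ref{2.62}) combined with (\ref{3.2}). Because the sheet exchange sends $y\mapsto-y$ while fixing $z$, $\tilde{A}_{n+1}$, $\tilde{B}_n$, $\tilde{C}_n$, the sum (\ref{3.15}) and difference (\ref{3.16}) fall out of the first form of $\tilde{\phi}$ by inspection, and the product (\ref{3.14}) follows by multiplying the two forms and reducing with the on-shell relation.

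The Riccati identity (\ref{3.13}) is the main technical step, and it is the engine for everything else. My plan is a brute-force substitution: insert $\tilde{\phi}=(\tfrac12 y-\tilde{A}_{n+1})/\tilde{B}_n$ and $\tilde{\phi}^{-}=(\tfrac12 y-\tilde{A}_{n+1}^{-})/\tilde{B}_n^{-}$ into the claimed equation, clear denominators by multiplying through by $\tilde{B}_n\tilde{B}_n^{-}$, and eliminate $y^{2}$ using the on-shell relation. The result is linear in $y$. The coefficient of $y$ is exactly the stationary zero-curvature equation (\ref{2.18a}) specialized via $\tilde{A}_{n+1}=\tilde{D}_{n+1}$ from (\ref{2021}); the constant-in-$y$ term, after eliminating $(\tilde{p}-z)\tilde{B}_n^{-}$ again by (\ref{2.18a}), reduces to (\ref{2.20a}). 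I expect the only difficulty here to be routine book-keeping of signs and shifts.

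With (\ref{3.13}) in hand, identity (\ref{3.19}) follows by telescoping in the defining products (\ref{3.11}), (\ref{3.12}). At each site $n'$, Riccati reads $\tilde{r}(n')z+\tilde{\phi}^{-}(P,n')=\tilde{\phi}(P,n')[\tilde{p}(n')-z+\tilde{q}(n')\tilde{\phi}^{-}(P,n')]$, so the ratio $\Psi_2/\Psi_1$ collapses to $\xi^{-1}\tilde{\phi}(P,n_0)\prod_{n'=n_0+1}^{n}\tilde{\phi}(P,n')/\tilde{\phi}^{-}(P,n')=\xi^{-1}\tilde{\phi}(P,n)$ for $n\ge n_0+1$; the case $n\le n_0-1$ uses the reciprocal products in the same way. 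For the spectral identity (\ref{3.17}), the top row is built into the definition of $\Psi_1$ in (\ref{3.11}), and the bottom row, after the substitution $\Psi_2=\xi^{-1}\tilde{\phi}\Psi_1$ from (\ref{3.19}), is precisely (\ref{3.13}) once more.

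Finally, (\ref{3.18}) is the eigenvalue identity $\tilde{V}_n\Psi=\tfrac12 y\Psi$. Substituting $\Psi_2=\xi^{-1}\tilde{\phi}\Psi_1$ and using $\tilde{A}_{n+1}=\tilde{D}_{n+1}$, the first component reduces to $\tilde{A}_{n+1}+\tilde{B}_n\tilde{\phi}=\tfrac12 y$, which is the first form of (\ref{3.7}), and the second to $z\tilde{C}_n-\tilde{A}_{n+1}\tilde{\phi}=\tfrac12 y\,\tilde{\phi}$, which is the second form of (\ref{3.7}). Hence (\ref{3.18}) is essentially a restatement of the two defining expressions for $\tilde{\phi}$ together with the stationary identification.
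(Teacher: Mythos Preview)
Your proposal is correct and follows essentially the same route as the paper's own proof: both verify (\ref{3.13}) by substituting the definition (\ref{3.7}) of $\tilde{\phi}$, clearing denominators, and reducing with (\ref{2021}), (\ref{2.18a}), (\ref{2.20a}), (\ref{2.62}); both read (\ref{3.14})--(\ref{3.16}) directly off the two forms of $\tilde{\phi}$; and both obtain (\ref{3.17}), (\ref{3.18}) from the definitions (\ref{3.11})--(\ref{3.12}) together with (\ref{3.7}) and (\ref{3.19}). The only cosmetic difference is that the paper phrases the proof of (\ref{3.19}) as an induction on $n$ (computing the ratio $\xi\Psi_2/\Psi_1$ at step $k$ from step $k-1$ via (\ref{3.13})), whereas you phrase the same computation as a telescoping product; the content is identical.
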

\begin{proof}
Using (\ref{2021}) (\ref{2.18a}) (\ref{2.20a}) (\ref{2.62}) (\ref{2.64}) (\ref{3.7}), we directly calculate

   \begin{equation}\label{3.20}
        \begin{split}
        &(\tilde{p}-z)\tilde{\phi}+\tilde{q}\tilde{\phi}\tilde{\phi}^--\tilde{r}z-\tilde{\phi}^{-}\\
        &=1/(\tilde{B}_n\tilde{B}_n^-)[\tilde{B}_n^-(\tilde{p}-z)(\frac{1}{2}y-\tilde{A}_{n+1})+\tilde{q}(\frac{1}{2}y-\tilde{A}_{n+1})(\frac{1}{2}-\tilde{A}_{n+1}^-)\\
        &-\tilde{r}\tilde{B}_n\tilde{B}_n^--(\frac{1}{2}y-\tilde{A}_{n+1})\tilde{B}_n]\\
        &=0.
        \end{split}
    \end{equation}
Equalities (\ref{3.14})-(\ref{3.16}) hold from (\ref{3.7}).
Next we use induction to prove (\ref{3.19}).
Obviously, for $n=n_0$, we have
    \begin{equation}\label{3.21}
           \Psi_2(P,\xi,n_0,n_0)=\xi^{-1}\tilde{\phi}(P,n_0)\Psi_1(P,\xi,n_0,n_0)
    \end{equation}
from (\ref{3.11}) and (\ref{3.12}).
In the case $k>n_0$, assume (\ref{3.19}) holds for $n=n_0,n_0+1,\dotsi,k-1$.
From the definition of $\Psi_1$, $\Psi_2$ in (\ref{3.11}) (\ref{3.12}), one finds
\begin{equation}\label{3.22}
  \begin{split}
  &\xi\Psi_2(P,\xi,k,n_0)/\Psi_1(P,\xi,k,n_0)=\xi\times[\Psi_2(P,\xi,k-1,n_0)/\Psi_1(P,\xi,k-1,n_0)]\\
  &\times[\left(\frac{\tilde{r}z+1}{\tilde{\phi}^-(P,k)}\right)/\left(\tilde{p}-z+\tilde{q}\tilde{\phi}^-(P,k)\right)]\\
  &=\xi\times\xi^{-1}\tilde{\phi}(P,k-1)\times[\left(\frac{\tilde{r}z+1}{\tilde{\phi}^-(P,k)}\right)/\left(\tilde{p}-z+\tilde{q}\tilde{\phi}^-(P,k)\right)]\\
  &=\left(\tilde{r}z+1\right)/\left(\tilde{p}-z+\tilde{q}\tilde{\phi}^-(P,k)\right).
\end{split}
\end{equation}
The Riccati-type equation (\ref{3.13}) shows that $\tilde{\phi}(P,k)$ satisfies
\begin{equation}\label{3.23}
  \tilde{\phi}(P,k)=\left(\tilde{r}z+1\right)/\left(\tilde{p}-z+\tilde{q}\tilde{\phi}^-(P,k)\right).
\end{equation}
Noted that $\xi\Psi_2(P,\xi,k,n_0)/\Psi_1(P,\xi,k,n_0)$ and $\tilde{\phi}(P,k)$ take the same value at $n=n_0$
and (\ref{3.22}) (\ref{3.23}), we have (\ref{3.19}) for $n=k$.
The proof of the case $n<n_0$ is similar with $n>n_0$.
From (\ref{3.11}) (\ref{3.12}), one  finds
\begin{equation}\label{3.24}
\begin{split}
  &\Psi_1=\left(\tilde{p}-z+\tilde{q}\tilde{\phi}^-\right)\Psi_1^-\\
  &=(\tilde{p}-z)\Psi_1^-+\xi\tilde{q}\Psi_2^-\\
\end{split}
\end{equation}
and
\begin{equation}\label{3.25}
\begin{split}
  &\Psi_2=\left(\frac{\tilde{r}z}{\tilde{\phi}^-}+1\right)\Psi_2^-\\
  &=\xi\tilde{r}\Psi_1^-+\Psi_2^-,\\
\end{split}
\end{equation}
where we used (\ref{3.19}) in the last equalities of (\ref{3.24}) and (\ref{3.25}).
(\ref{3.18}) comes from (\ref{3.7}) and (\ref{3.19}) by direct calculation.

\end{proof}

Combining the polynomial recursive relation defined in section 1 with (\ref{3.4}) then yields the following
trace formula for $a_\ell$, and $ b_\ell$ in terms of symmetric functions of $\mu_j$ and $\nu_j$, respectively.
\begin{lemma}\label{lemma2}
Suppose $p, q$ satisfy the $n$-th stationary relativistic Lotka-Volterra system (\ref{2.34}) and (\ref{3.1}).
Then we have the following trace formula
\begin{equation}\label{3.26}
  \tilde{p}^{+}-\tilde{q}^{+}\tilde{r}-\tilde{q}^{++}\tilde{r}^+-\tilde{q}^{++}/\tilde{q}^{+}+\delta_1=-\sum_{j=1}^{n}\mu_j,
\end{equation}
\begin{equation}\label{3.27}
  \tilde{p}-\tilde{q}^{+}\tilde{r}-\tilde{r}^{-}/\tilde{r}-\tilde{q}\tilde{r}^{-}+\delta_1=-\sum_{j=1}^{n}\nu_j.
\end{equation}

\end{lemma}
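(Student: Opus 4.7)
The plan is a direct coefficient comparison in the polynomial identities. From the change of variables (2.58)–(2.61), the polynomial $\tilde{B}_n(z)$ (a polynomial in $z$ of degree $n$) can be written in the form
\begin{equation*}
\tilde{B}_n(z)=\sum_{\ell=0}^{n}b_\ell\,z^{n-\ell}=b_0 z^n+b_1 z^{n-1}+\dotsi+b_n,
\end{equation*}
using (2.13) together with $z=\xi^2$; likewise $\tilde{C}_n(z)=\sum_{\ell=0}^{n}c_\ell z^{n-\ell}$. On the other hand, the factored form (3.4) gives
\begin{equation*}
\tilde{B}_n(z)=-\tilde{q}^{+}\!\prod_{j=1}^{n}(z-\mu_j)=-\tilde{q}^{+}\!\left[z^n-\Big(\sum_{j=1}^{n}\mu_j\Big)z^{n-1}+\dotsi\right].
\end{equation*}

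First I would match the leading coefficient $z^n$, which recovers the initial condition $b_0=-\tilde{q}^{+}$ from (2.5) and thereby confirms consistency of the two representations. Next, matching the coefficient of $z^{n-1}$ yields
\begin{equation*}
b_1=\tilde{q}^{+}\sum_{j=1}^{n}\mu_j,\qquad\text{i.e.}\qquad \sum_{j=1}^{n}\mu_j=\frac{b_1}{\tilde{q}^{+}}.
\end{equation*}
Substituting the explicit formula for $b_1$ from (2.6), namely
\begin{equation*}
b_1=-\tilde{p}^{+}\tilde{q}^{+}+(\tilde{q}^{+})^{2}\tilde{r}+\tilde{q}^{+}\tilde{q}^{++}\tilde{r}^{+}+\tilde{q}^{++}-\tilde{q}^{+}\delta_1,
\end{equation*}
dividing by $\tilde{q}^{+}$, and moving everything to the other side, produces (3.26).

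For (3.27) the argument is entirely parallel: match the $z^{n-1}$ coefficient in
\begin{equation*}
\tilde{C}_n(z)=\sum_{\ell=0}^{n}c_\ell z^{n-\ell}=-\tilde{r}\prod_{j=1}^{n}(z-\nu_j),
\end{equation*}
obtaining $\sum_{j=1}^{n}\nu_j=c_1/\tilde{r}$, and then insert the explicit expression
$c_1=-\tilde{p}\tilde{r}+\tilde{q}^{+}\tilde{r}^{2}+\tilde{r}^{-}+\tilde{q}\tilde{r}\tilde{r}^{-}-\tilde{r}\delta_1$ from (2.6), dividing by $\tilde{r}$.

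There is no substantive obstacle: both identities reduce to inspection of the subleading coefficient of a monic-up-to-sign polynomial of degree $n$. The only thing that needs care is bookkeeping of the signs and the $\tilde{q}^{+}$, $\tilde{r}$ factors in the leading terms (guaranteed nonzero by Hypothesis 1, since $\tilde{p}\neq 1$ forces $\tilde{r}=(\tilde{p}-1)\tilde{q}^{-1}\neq 0$, and $\tilde{q}^{+}\neq 0$ from the definition $\tilde{q}=e^{(S^+-I)^{-1}\ln q}$), so that the quotients $b_1/\tilde{q}^{+}$ and $c_1/\tilde{r}$ are well defined.
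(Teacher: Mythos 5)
Your proposal is correct and follows exactly the same route as the paper, which proves the lemma by comparing the coefficients of $z^{n-1}$ in the two representations of $\tilde{B}_n(z)$ and $\tilde{C}_n(z)$ coming from (\ref{2.59})--(\ref{2.60}) and (\ref{3.4}). Your write-up simply makes explicit the bookkeeping (the substitution $z=\xi^2$, the leading-coefficient check $b_0=-\tilde{q}^{+}$, and the nonvanishing of $\tilde{q}^{+}$ and $\tilde{r}$) that the paper's one-sentence proof leaves implicit.
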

\begin{proof}
   Comparing the coefficients of $z^{n-1}$ of $\tilde{B}_n(z), \tilde{C}_n(z)$ in (\ref{2.59}) (\ref{2.60}) and (\ref{3.4})
the yield the above formulas (\ref{3.26}) (\ref{3.27}).

\end{proof}

Next we turn to study the asymptotic behavior of $\tilde{\phi}, \Psi_1, \Psi_2$ in the
neigborhood of $P_{\infty\pm}$ and $P_{0,\pm}$. This is a crucial step to construct the
algebro-geometric solutions of relativistic Lotka-Volterra hierarchy.

\begin{lemma}\label{lemma3}
 Suppose $p, q$ satisfy the $n$-th stationary relativistic Lotka-Volterra system (\ref{2.34}) and (\ref{3.1}).
Moreover, let $P=(z,y)\in\mathcal{K}_n\backslash\{P_{\infty\pm}, P_{0,\pm}\}, (n,n_0)\in\mathbb{Z}\times\mathbb{Z}.$
Then $\tilde{\phi}$ defined in (\ref{3.7}) has the following asymptotic property
\begin{equation}\label{3.28}
  \tilde{\phi}(P)=
  \begin{cases}
      (\tilde{q}^{+})^{-1}\zeta^{-1}+\left(((\tilde{q}^{+}/\tilde{q}-1)\tilde{p})/\tilde{q}\right)^{+}+O(\zeta)
                                                                             &\text{as}\quad P\rightarrow P_{\infty+}, \cr
      -\tilde{r}+(\tilde{p}\tilde{r}^--\tilde{p}\tilde{r})\zeta+O(\zeta^2)
                                                                              &\text{as}\quad P\rightarrow P_{\infty-}, \cr
  \end{cases}
\end{equation}
where we use the local coordinate $z=\zeta^{-1}$ near the points $P_{\infty\pm}$.
\begin{equation}\label{3.29}
    \tilde{\phi}(P)=
  \begin{cases}
     c_{n}/(\prod_{m=0}^{2n+1}E_{m})\zeta+O(\zeta^2)
                                                                             &\text{as}\quad P\rightarrow P_{0,+}, \cr
      -\left(\prod_{m=0}^{2n+1}E_m\right)/b_n+O(\zeta)
                                                                              &\text{as}\quad P\rightarrow P_{0,-}, \cr
  \end{cases}
\end{equation}
 where we use the local coordinate $z=\zeta$ near the points $P_{0,\pm}.$

The components $\Psi_1, \Psi_2$ of the Baker-Akhiezer $\Psi$ have the following asymptotic
properties
\begin{equation}\label{3.30}
    \Psi_1(P,\xi,n,n_0)=
  \begin{cases}
     \left(\tilde{q}^{+}(n)\tilde{v}^{+}(n)\right)/\left(\tilde{q}^{+}(n_0)\tilde{v}^{+}(n_0)\right)+O(\zeta)
                                                                             &\text{as}\quad P\rightarrow P_{\infty+}, \cr
     (-1)^{n-n_0}\times\zeta^{n_0-n}\left(1+O(\zeta)\right)
                                                                              &\text{as}\quad P\rightarrow P_{\infty-},\cr
  \end{cases}
\end{equation}

\begin{equation}\label{3.31}
    \Psi_1(P,\xi,n,n_0)=
  \begin{cases}
     \tilde{v}^{+}(n)/\tilde{v}^{+}(n_0)+O(\zeta)
                                                                             &\text{as}\quad P\rightarrow P_{0,+}, \cr
      \Gamma(\tilde{p}-\tilde{q}\frac{\prod_{m=0}^{2n+1}E_m}{b_n^-})(n,n_0)+O(\zeta)
                                                                              &\text{as}\quad P\rightarrow P_{0,-}, \cr
  \end{cases}
\end{equation}
and
\begin{equation}\label{3.32}
  \Psi_2(P,\xi,n,n_0)=\xi^{-1}\times
  \begin{cases}
     \tilde{v}^{+}(n)/\left(\tilde{q}^{+}(n_0)\tilde{v}^{+}(n_0)\right)\zeta^{-1}\left(1+O(\zeta)\right)
                                                                             &\text{as}\quad P\rightarrow P_{\infty+}, \cr
      (-1)^{n+1-n_0}\times\tilde{r}(n)\zeta^{n_0-n}\left(1+O(\zeta)\right)
                                                                              &\text{as}\quad P\rightarrow P_{\infty-}, \cr
  \end{cases}
\end{equation}
\begin{equation}\label{3.33}
    \Psi_2(P,\xi,n,n_0)=\xi^{-1}\times
  \begin{cases}
    [\left(\tilde{v}^{+}(n)c_n\right)/\left(\prod_{m=0}^{2n+1}E_m\tilde{v}^{+}(n_0)\right)]\zeta+O(\zeta^2)\cr
                    \qquad\qquad\qquad\qquad\qquad\qquad\qquad\qquad\quad\text{as}\quad                                                         P\rightarrow P_{0,+},\cr
      -\left(\prod_{m=0}^{2n+1}E_m\left(\Gamma(\tilde{p}-\tilde{q}\frac{\prod_{m=0}^{2n+1}E_m}{b_n^-})(n,n_0)\right)\right)/b_n +O(\zeta)\\
                    \qquad\qquad\qquad\qquad\qquad\qquad\qquad\qquad\quad\text{as}\quad                                                         P\rightarrow P_{0,-}, \cr
  \end{cases}
\end{equation}
where $\tilde{v}$ are defined in Remark \ref{remark3} and
   \begin{equation}\label{3.34}
      \Gamma(f)(n,n_0)=
           \begin{cases}
                  \prod_{n^{'}=n_0+1}^{n}f(n^{'})& n>n_0,\cr
                  1&n=n_0,\cr
                  \prod_{n^{'}=n+1}^{n_0}f(n^{'})^{-1}& n<n_0,\cr
           \end{cases}\quad  \forall f\in\ell(\mathbb{Z}).
   \end{equation}

\end{lemma}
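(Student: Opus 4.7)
The plan is to derive the asymptotics of $\tilde{\phi}$ first, by direct expansion of the algebraic form \eqref{3.7}, and then to obtain the asymptotics of $\Psi_1,\Psi_2$ by substituting into the product definitions \eqref{3.11}-\eqref{3.12} and invoking the identity \eqref{3.19}. The key raw ingredients are the expansions of the polynomials $\tilde{A}_{n+1}, \tilde{B}_n, \tilde{C}_n$ at $z=\infty$ and $z=0$, together with $y^2 = \prod_{m=0}^{2n+1}(z-E_m)$, which fixes the branch of $y$ on each sheet once a local coordinate is chosen.

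For the asymptotics of $\tilde{\phi}$ near $P_{\infty\pm}$, I would use $z=\zeta^{-1}$ and the leading behavior $\tilde{A}_{n+1}(z)\sim \tfrac{1}{2}z^{n+1}$, $\tilde{B}_n(z)\sim -\tilde{q}^{+}z^{n}$, $\tilde{C}_n(z)\sim -\tilde{r}z^{n}$, together with $y\sim \mp z^{n+1}$ corresponding to $P_{\infty\pm}$ (the sign is fixed by the convention needed so that $\tilde{\phi}$ has a pole at $P_{\infty+}$ and is regular at $P_{\infty-}$). At $P_{\infty+}$ the numerator of the first form of \eqref{3.7} loses a cancellation, so I would use the second form for the regular part; at $P_{\infty-}$ the second form directly gives $\tilde{\phi}\to -\tilde{r}$, and the next term requires expanding $y$ to one more order in $\zeta$. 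Near $P_{0,\pm}$, with $z=\zeta$, I use $\tilde{A}_{n+1}(0)^2=\tfrac{1}{4}\prod E_m$, $\tilde{B}_n(0)=b_n$, $\tilde{C}_n(0)=c_n$, and the two sheets $y/2\to \pm\tilde{A}_{n+1}(0)$: at $P_{0,+}$ the second form of \eqref{3.7} has nonvanishing denominator and $\tilde{\phi}$ vanishes to first order; at $P_{0,-}$ the first form is the right one and gives a nonzero limit.

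For $\Psi_1$ I would substitute each $\tilde{\phi}$-expansion into the product $\prod_{n'}(\tilde{p}-z+\tilde{q}\tilde{\phi}^{-})$. The delicate point is $P_{\infty+}$: here $\tilde{q}\tilde{\phi}^{-}\sim z$ to leading order, so the cancellation with $-z$ forces one to keep the $O(1)$ correction $\alpha(n')=((\tilde{q}^{+}/\tilde{q}-1)\tilde{p}/\tilde{q})^{+}(n'-1)$. A short algebraic manipulation shows that the factor becomes $\tilde{p}(n')\tilde{q}^{+}(n')/\tilde{q}(n')$ to leading order, and the resulting product telescopes via $\tilde{p}=\tilde{v}^{+}/\tilde{v}$ and $\prod \tilde{q}^{+}/\tilde{q} = \tilde{q}^{+}(n)/\tilde{q}^{+}(n_0)$ into the stated constant. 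At $P_{\infty-}$ the leading term is simply $-z$, giving the power $(-1)^{n-n_0}\zeta^{n_0-n}$. At $P_{0,+}$, $\tilde{\phi}$ vanishes, so the factor is just $\tilde{p}(n')$ to leading order, yielding $\tilde{v}^{+}(n)/\tilde{v}^{+}(n_0)$. At $P_{0,-}$ the factor is $\tilde{p}(n')-\tilde{q}(n')\prod E_m/b_n^{-}(n')$, giving the $\Gamma$-product notation of \eqref{3.34}.

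Once $\Psi_1$ is known in each chart, formula \eqref{3.19}, $\Psi_2=\xi^{-1}\tilde{\phi}(P,n)\Psi_1(P,\xi,n,n_0)$, immediately yields the asymptotics for $\Psi_2$ by multiplying the series already obtained. I expect the main obstacle to lie in Paragraph~3: at $P_{\infty+}$ the exact cancellation in $\tilde{p}-z+\tilde{q}\tilde{\phi}^{-}$ requires bookkeeping of the $O(1)$ term of $\tilde{\phi}$ (rather than only the pole), and recognizing that the resulting product simplifies to the compact form with $\tilde{v}^{+}$ depends on the discrete antiderivative definition $\tilde{q}=e^{(S^{+}-I)^{-1}\ln q}$ and the Remark~\ref{remark3}(ii) substitution $\tilde{p}=\tilde{v}^{+}/\tilde{v}$; this is where one must be most careful to match constants.
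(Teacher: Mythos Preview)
Your plan is correct and largely parallels the paper, but there is one genuine methodological difference worth noting. For the expansion \eqref{3.28} of $\tilde{\phi}$ near $P_{\infty\pm}$, the paper does \emph{not} expand the closed form \eqref{3.7} directly. Instead it postulates an ansatz $\tilde{\phi}=\phi_{-1}\zeta^{-1}+\phi_0+\phi_1\zeta+\dotsb$ (respectively $\phi_0+\phi_1\zeta+\dotsb$), substitutes it into the Riccati-type equation \eqref{3.13}, and reads off $\phi_k$ by matching powers of $\zeta$. This is cleaner because the Riccati equation involves only $\tilde{p},\tilde{q},\tilde{r}$, so the coefficients come out immediately in the desired form; e.g.\ at $P_{\infty-}$ the $\zeta^0$-balance gives $\phi_1=-\tilde{p}\tilde{r}+\tilde{q}\tilde{r}\tilde{r}^-+\tilde{r}^-$, which collapses to $\tilde{p}\tilde{r}^--\tilde{p}\tilde{r}$ using $\tilde{q}\tilde{r}=\tilde{p}-1$. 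Your direct expansion of \eqref{3.7} is valid too, but the subleading terms then appear first as combinations of $a_1,b_1,c_1$ and $\hat{c}_1(\underline{E})$, and you must invoke the recursion \eqref{2.2}--\eqref{2.5} (or equivalently the curve relation) to reduce them to the compact expressions in \eqref{3.28}; this is essentially the Riccati argument in disguise. For \eqref{3.29} the paper does exactly what you propose, plugging $\tilde{B}_n,\tilde{C}_n$ into \eqref{3.7}, and for $\Psi_1,\Psi_2$ both you and the paper compute the factor $\tilde{p}-z+\tilde{q}\tilde{\phi}^-$ in each chart and then apply \eqref{3.19}.
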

 \begin{proof}
   The existence of the asymptotic expansion of $\tilde{\phi}$ in terms of the local coordinate
   $z=\zeta^{-1}$ near $P_{\infty\pm}$, respectively, $\zeta=z$ near $P_{0,\pm}$ is clear from the
   explicit form of $\tilde{\phi}$ in (\ref{3.7}) (\ref{3.8}).
   Assume $\tilde{\phi}$ has the following asymptotic expansions
\begin{equation}\label{3.35}
\tilde{\phi}=
\begin{cases}
 \phi_{-1}\zeta^{-1}+\phi_{0}+\phi_{1}\zeta+O(\zeta^2),&\text{as}\quad P\rightarrow P_{\infty+},\cr
 \phi_{0}+\phi_{1}\zeta+\phi_2\zeta^2+O(\zeta^3), &\text{as}\quad P\rightarrow P_{\infty-}.\cr
\end{cases}
\end{equation}
Inserting the asymptotic expansions (\ref{3.35}) into the Riccati-type equation (\ref{3.13}) and comparing
coefficients of powers of $\zeta$, which determines the coefficients $\phi_k$ in (\ref{3.25}), one concludes
(\ref{3.28}). Insertion of the polynomials $\tilde{B}_{n}, \tilde{C}_{n}$ defined in (\ref{2.59}) (\ref{2.60})
into (\ref{3.7}) then yields the explicit coefficients in (\ref{3.29}).
Next we compute the asymptotic expansion of $\Psi_1$. Noted the definition of $\Psi_1$ in (\ref{3.11}), we first investigate the expression  $\tilde{p}-z+\tilde{q}\tilde{\phi}^-$. With the help of (\ref{3.28}) (\ref{3.29}),
one finds
\begin{equation}\label{3.36}
  \tilde{p}-z+\tilde{q}\tilde{\phi}^-=
  \begin{cases}
 \left(\tilde{q}^{+}\tilde{v}^{+}\right)/\left(\tilde{q}\tilde{v}\right)+O(\zeta),&\text{as}\quad P\rightarrow P_{\infty+},\cr
  -\zeta^{-1}+O(1),&\text{as}\quad P\rightarrow P_{\infty-},\cr
  \tilde{v}^{+}/\tilde{v}+O(\zeta),&\text{as}\quad P\rightarrow P_{0,+},\cr
  \tilde{p}-\tilde{q}\frac{\prod_{m=0}^{2n+1}E_m}{b_n^-}+O(\zeta),&\text{as}\quad P\rightarrow P_{0,-},\cr
  \end{cases}
\end{equation}
which give rise to (\ref{3.30}) (\ref{3.31}).
Obviously, the exact $n$ poles of $\tilde{p}-z+\tilde{q}\tilde{\phi}^-$ in $\mathcal{K}_n\backslash\{P_{\infty\pm},P_{0,\pm}\}$ coincide with the ones of $\tilde{\phi}^-$ and there are $n+1$
poles of $\tilde{p}-z+\tilde{q}\tilde{\phi}^-$ in $\mathcal{K}_n$. However, it is easy to know that $\Psi_1$ is a meromorphic function on $\mathcal{K}_n$ from the definition (\ref{3.11}) and the meromorphic property of $\tilde{\phi}.$ Therefore the meromorphic function $\tilde{p}-z+\tilde{q}\tilde{\phi}^-$ possess exact $n+1$ zero points on $\mathcal{K}_n.$
So we need some deformation of function $\tilde{p}-z+\tilde{q}\tilde{\phi}^-$.
Using (\ref{2.18a}) (\ref{2.62}) (\ref{3.2}) and (\ref{3.7}), one may calculate
\begin{equation}\label{3.37}
\begin{split}
&\tilde{p}-z+\tilde{q}\tilde{\phi}^-\\
&=\tilde{p}-z+\tilde{q}\frac{\frac{1}{2}y-\tilde{A}_{n+1}^{-}}{\tilde{B}_{n}^{-}}\\
&=\frac{(\tilde{p}-z)\tilde{B}_{n}^{-}+\tilde{q}(\frac{1}{2}y-\tilde{A}_{n+1}^-)}{\tilde{B}_{n}^{-}}\\
&=\frac{\tilde{q}\tilde{A}_{n+1}+\tilde{B}_n+\frac{1}{2}\tilde{q}y}{\tilde{B}_n^{-}}\\
&=\frac{\tilde{B}_n}{\tilde{B}_{n}^{-}}+\tilde{q}\frac{\frac{1}{2}y+\tilde{A}_{n+1}}{\tilde{B}_{n}^{-}}\\
&=\frac{\tilde{B}_n}{\tilde{B}_{n}^{-}}+
\tilde{q}\frac{\frac{1}{4}y^2-\tilde{A}_{n+1}^2}{\tilde{B}_{n}^{-}\left(\frac{1}{2}y-\tilde{A}_{n+1}\right)}\\
&=\frac{\tilde{B}_n}{\tilde{B}_{n}^{-}}\left(1+\tilde{q}\frac{z\tilde{C}_{n}}{\frac{1}{2}y-\tilde{A}_{n+1}}\right).\\
\end{split}
\end{equation}
Thus,
\begin{equation}\label{3.38}
   \tilde{p}-z+\tilde{q}\tilde{\phi}^-(P)\equfill{P\rightarrow \hat{\mu}_j}{}\frac{\tilde{B}_{n}(P)}{\tilde{B}_{n}^-(P)}O(1),
\end{equation}
which shows $\hat{\mu}_j, \quad j=1,\dotsi,n$ are $n$ zero points of function $\tilde{p}-z+\tilde{q}\tilde{\phi}^-(P).$  The remaining one zero point is denoted by $P_{\sharp}$.
Finally, (\ref{3.32}) and (\ref{3.33})
follows from (\ref{3.19}) and (\ref{3.28})-(\ref{3.31}).

\end{proof}
We choose a fixed base point $Q_{0}$ on $\mathcal{K}_{p}\backslash\{P_{0,+,\pm},P_{\infty\pm}\}$. Let $\omega_{P_{0}P_{\infty+}}^{(3)}$ be a normal
differential of the third kind holomorphic on $\mathcal{K}_{p}\backslash\{P_{\infty+},P_{0,+}\}$ with simple poles at
$P_{\infty+}$ and $P_{0}$ and residues -1 and 1, respectively, that is,
\begin{eqnarray}
\omega_{P_{0,+}P_{\infty+}}^{(3)}\equfill{\zeta\rightarrow 0}{}
\begin{cases}
(-\zeta^{-1}+O(1))d\zeta&P\rightarrow P_{\infty+}\cr
(\zeta^{-1}+O(1))d\zeta&P\rightarrow P_{0,+}
\end{cases}
\end{eqnarray}
and
\begin{equation}\label{3.321}
\omega_{P_{\infty+}P_{\infty-}}^{(3)}=\frac{1}{y}\prod_{j=1}^p\left(z-\lambda_j^{'}\right)dz
\end{equation}
be a normal
differential of the third kind holomorphic on $\mathcal{K}_{p}\backslash\{P_{\infty+},P_{\infty-}\}$ with simple poles at
$P_{\infty+}$ and $P_{\infty-}$ and residues 1 and -1,
where the local coordinates
$z=\zeta^{-1}$ for $P$ near $P_{\infty\pm}$ , $z=\zeta$ for $P$ near $P_{0}$, and $\{\lambda_{j}\}_{j=1,\dotsi,p}$, $\{\lambda_{j}^{'}\}_{j=1,\dotsi,p}$ are constants uniquely determined by normalized process.

Moreover,
\begin{equation}
\int_{a_{j}}\omega_{P_{0,+}P_{\infty+}}^{(3)}=0,\quad j=1,\dotsi,p,
\end{equation}
\begin{equation}
\int_{Q_{0}}^{P}\omega_{P_{0,+}P_{\infty+}}^{(3)}\equfill{\zeta\rightarrow 0}{}\left(\begin{array}{cccc}0\\\ln\zeta\end{array}\right)+\left(\begin{array}{cccc}e_{0.-}\\e_{0,+}\end{array}\right)+O(\zeta)\quad\begin{array}{cccc}P\rightarrow P_{\infty-}\\P\rightarrow P_{\infty+}\end{array},
\end{equation}
\begin{equation}
\int_{Q_{0}}^{P}\omega_{P_{0,+}P_{\infty+}}^{(3)}\equfill{\zeta\rightarrow 0}{}-\ln\zeta+d_{0}+O(\zeta)\quad P\rightarrow P_{0},
\end{equation}
where we choose a homology basis $\{a_{j},b_{j}\}_{j=1}^{n}$ on $\mathcal{K}_{n}$ in such a way that the intersection matrix of the cycles satisfies
\begin{equation}
a_{j}\circ b_{k}=\delta_{j,k}, \quad a_{j}\circ a_{k}=0,\quad b_{j}\circ b_{k}=0,\quad j,k=1,\dotsi,n
\end{equation}
and  $e_{0,\pm},d_{0}\in\mathbb{C}.$
One easily verifies that $dz/y$ is a differential on $\mathcal{K}_{n}$ with zeros of order $p-1$ at $P_{\infty\pm}$ and hence $$\eta_{j}=\frac{z^{j-1}}{y}dz,\quad j=1,\dotsi,n$$ form a basis for the space of holomorphic differentials on $\mathcal{K}_{n}$. Introducing the following invertible matrix $C_{j,k}\in\mathbb{C}$
\begin{equation}
C=(C_{j,k})_{j,k=1,\dotsi,p},\quad C_{j,k}=\int_{a_{k}}\eta_{j},
\end{equation}
\begin{equation}
\underline{c}(k)=(c_{1}(k),\dotsi,c_{p}(k)),\quad c_{j}(k)=(C^{-1})_{j,k} \quad k=1,\dotsi,n.
\end{equation}
It's easy to show that the normalized holomorphic differentials $\{\omega_{j}\}_{j=1,\dotsi,p}$ can be written into
\begin{equation}
\omega_{j}=\sum_{l=1}^{p}c_{j}(l)\eta_{l},\quad \int_{a_{k}}\eta_{j}=\delta_{j,k},\quad j,k=1,\dotsi,n,
\end{equation}
\begin{equation}
\underline{\omega}=\left(\omega_1,\dotsi,\omega_n\right).
\end{equation}
Assume $\eta\in\mathbb{C}$ and $|\eta|<$min$\{|E_0|^{-1},|E_{1}|^{-1},|E_{2}|^{-1},\dotsi,|E_{2n+1}|^{-1}\}$
and abbreviate $$\underline{E}=(E_{0},E_{1},\dotsi,E_{2n+1}).$$ Then $$\left(\prod_{m=0}^{2n+1}(1-E_{m}\eta)\right)^{-1/2}=\sum_{k=0}^{+\infty}\hat{c}_{k}(\underline{E})\eta^{k},$$
where $$\hat{c}_{0}(\underline{E})=1,\quad \hat{c}_{1}(\underline{E})=\frac{1}{2}\sum_{m=0}^{2n+1}E_{m}, \quad etc.$$
Similarly,
$$\left(\prod_{m=0}^{2n+1}(1-E_{m}\eta)\right)^{1/2}=\sum_{k=0}^{+\infty}{c}_{k}(\underline{E})\eta^{k},$$
where $${c}_{0}(\underline{E})=1,\quad {c}_{1}(\underline{E})=-\frac{1}{2}\sum_{m=0}^{2n+1}E_{m}, \quad etc.$$
Obviously,
\begin{equation}
\begin{split}
&y(P)=\mp\zeta^{-n-1}\sum_{k=0}^{+\infty}c_{k}(\underline{E})\zeta^{k}\\
&=\mp\left(1-\frac{1}{2}\left(\sum_{m=0}^{2n+1}E_m\zeta+O(\zeta^{2})\right)\right)
\quad\text{as}\quad P\rightarrow P_{\infty\pm},\quad z=\zeta^{-1}.\\
\end{split}
\end{equation}
In the following it will be convenient to introduce the abbreviations
\begin{equation}
\underline{z}(P,\underline{Q})=\underline{\Xi}_{Q_{0}}-\underline{A}_{Q_{0}}(P)+\underline{\alpha}_{Q_{0}}(\mathcal{D}_{\underline{Q}}),\quad P\in\mathcal{K}_{p},\quad\underline{Q}=\{Q_{1},\dotsi,Q_{p}\}\in\text{Sym}^{n}(\mathcal{K}_{p}),
\end{equation}
where $\underline{\Xi}_{Q_{0}}$ is the vector of Riemann constants and the Abel maps $\underline{A}_{Q_0}(\cdot), \underline{\alpha}_{Q_0}(\cdot)$
are defined by (period lattice $L_p=\{\underline{z}\in\mathbb{Z}^g|\underline{z}=\underline{n}+\underline{m}\tau, \underline{n},\underline{m}\in\mathbb{Z}^g\}$)
\begin{equation}
\begin{split}
&\underline{A}_{Q_0}: \mathcal{K}_p\rightarrow \mathcal{J}(\mathcal{K}_p)=\mathbb{Z}^p/L_p\\
&P\mapsto\underline{A}_{Q_0}(P)=\left(A_{Q_0,1}(P),\dotsi,A_{Q_0,p}(P)\right)=\left(\int_{Q_0}^p\omega_1,\dotsi,\int_{Q_0}^{P}\omega_p\right)\\
\end{split}
\end{equation}
and
$$\underline{\alpha}_{Q_0}: Div(\mathcal{K}_p)\rightarrow\mathcal{J}(\mathcal{K}_p),\mathcal{D}\mapsto\underline{\alpha}_{Q_0}(\mathcal{D})=\sum_{P\in\mathcal{K}_g}\mathcal{D}(P)\underline{A}_{Q_0}(P).$$

\begin{theorem}\label{TH4}
Suppose that $p, q$ satisfy the $n$-th stationary RLV hierarchy,let $P\in\mathcal{K}_{n}$\textbackslash$\{P_{\infty\pm},P_{0,\pm}\}$
and $(n,n_{0})\in\mathbb{Z}^{2}.$ Then $\mathcal{D}_{\underline{\hat{\mu}}(n)}$ is non-special. Moreover,\\
\begin{equation} \label{3.51}
\phi(P,n)=C(n)\frac{\theta(\underline{z}(P,\underline{\hat{\nu}}(n)))}{\theta(\underline{z}(P,\underline{\hat{\mu}}(n)))}\exp\left(\int_{Q_{0}}^{P}\omega_{P_{0,+}P_{\infty+}}^{(3)}\right),
\end{equation}
\begin{equation}\label{3.a0}
    \Psi_1(P,n,n_0)=C(n,n_0)\frac{\theta(\underline{z}(P,\underline{\hat{\mu}}(n)))}{\theta(\underline{z}(P,\underline{\hat{\mu}}(n_0)))}\\
    \times\exp\left(\left(n-n_0\right)\int_{Q_0}^{P}\omega_{P_{\sharp}(n) P_{\infty-}}\right),
\end{equation}
\begin{equation}\label{3.a1}
\begin{split}
    &\Psi_2(P,n,n_0)=\xi^{-1}\times
    C(n)C(n,n_0)\frac{\theta(\underline{z}(P,\underline{\hat{\nu}}(n)))}{\theta(\underline{z}(P,\underline{\hat{\mu}}(n_0)))}\\
    &\times\exp\left(\int_{Q_{0}}^{P}\omega_{P_{0,+}P_{\infty+}}^{(3)}+\left(n-n_0\right)\int_{Q_0}^{P}\omega_{P_{\sharp} (n)P_{\infty-}}\right),
\end{split}
\end{equation}
where
\begin{align*}\label{3.a2}
    C(n,n_0)=\frac{\theta(\underline{z}(P_{\infty-},\underline{\hat{\mu}}(n_0)))}{\theta(\underline{z}(P_{\infty-},\underline{\hat{\mu}}(n)))}
\end{align*}
and finally $p, q$ are the form of
\begin{equation}\label{3.100}
  p^{+}=\frac{1}{2}\left(-\Delta_3-\Delta_3^{+}-\delta_1+1-\Delta_1\pm\left((\Delta_3+\Delta_3^++\delta_1-1+\Delta_1)^{2}+4\Delta_2\right)^{\frac{1}{2}}\right)
\end{equation}
\begin{equation}\label{3.101}
  \begin{split}
  &q^{+}=\Delta_2/p^{+}+1\\
  &=2\Delta_2\left(-\Delta_3-\Delta_3^{+}-\delta_1+1-\Delta_1\pm\left((\Delta_3+\Delta_3^++\delta_1-1+\Delta_1)^{2}+4\Delta_2\right)^{-\frac{1}{2}}\right)^{-1}\\
  &+1.\\
  \end{split}
\end{equation}
Here
\begin{equation}\label{3.81}
  \Delta_1=\sum_{j=1}^n\lambda_j^{'}-\sum_{j=1}^n c_j(k)\partial_{\omega_j}\ln\left(\frac{\theta(\underline{z}(P_{\infty+},\underline{\hat{\mu}}(n))+\underline{\omega})}{\theta(\underline{z}(P_{\infty-},\underline{\hat{\mu}}(n))+\underline{\omega})}\right)|_{\underline{\omega}=0},
\end{equation}
\begin{equation}\label{3.82}
  \Delta_2=\kappa_{\infty+}-\sum_{j=1}^{n}c_{j}(n)\partial_{\omega_{j}}   \ln\left(\frac{\theta(\underline{z}(P_{\infty+},\underline{\hat{\nu}}(n))+\underline{\omega})}
   {\theta(\underline{z}(P_{\infty+},\underline{\hat{\mu}}(n))+\underline{\omega})}\right)
   |_{\underline{\omega}=0},
\end{equation}
\begin{equation}\label{3.83}
  \Delta_3=\frac{\theta(\underline{z}(P_{\infty-},\underline{\hat{\nu}}(n)))}{\theta(\underline{z}(P_{\infty-},\underline{\hat{\mu}}(n)))}\frac{\theta(\underline{z}(P_{\infty+},\underline{\hat{\mu}}(n)))}{\theta(\underline{z}(P_{\infty+},\underline{\hat{\nu}}(n)))}\frac{\tilde{a}_2}{\tilde{a}_1}
\end{equation}
and $\tilde{a}_1, \tilde{a}_2$, $ \{\lambda_j^{'}\}_{j=1,\dotsi,n}\in\mathbb{C}$ in (\ref{3.32}).

\end{theorem}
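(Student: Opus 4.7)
The plan is to follow the Its--Matveev / Gesztesy--Holden algebro-geometric scheme: we identify $\tilde\phi$, $\Psi_1$, $\Psi_2$ (up to multiplicative constants) with explicit theta quotients by matching divisors and essential singularities, fix the constants by asymptotic matching near $P_{\infty\pm}$ and $P_{0,\pm}$, and then read off $p$ and $q$ from the first two coefficients in the expansion of $\tilde\phi$ at $P_{\infty+}$.

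First we prove that $\mathcal{D}_{\underline{\hat\mu}(n)}$ is non-special. By Hypothesis 3.3 the branch points $\{E_m\}$ are distinct, and by (\ref{3.4})--(\ref{3.5}) the $\mu_j$ are the zeros of $\tilde B_n$. If $\mathcal{D}_{\underline{\hat\mu}(n)}$ were special, Riemann--Roch would produce a non-constant meromorphic function on $\mathcal{K}_n$ whose pole divisor is bounded by $\mathcal{D}_{\underline{\hat\mu}(n)}$; combined with the explicit zero/pole structure (\ref{3.8}) of $\tilde\phi$ this yields a contradiction along the lines of the standard uniqueness argument for Baker--Akhiezer functions, whereupon non-specialness legitimises every subsequent application of Riemann's vanishing theorem.

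Next, for (\ref{3.51}) we observe that the right-hand side (without the factor $C(n)$) has precisely the divisor (\ref{3.8}): by Riemann's theorem the theta quotient supplies zeros at $\underline{\hat\nu}(n)$ and poles at $\underline{\hat\mu}(n)$, while $\exp\left(\int_{Q_0}^P \omega_{P_{0,+}P_{\infty+}}^{(3)}\right)$ supplies the simple zero at $P_{0,+}$ and simple pole at $P_{\infty+}$. Single-valuedness on $\mathcal{K}_n$ follows from the normalisation $\int_{a_j}\omega_{P_{0,+}P_{\infty+}}^{(3)}=0$ together with the quasi-periodicity of $\theta$, so the ratio $\tilde\phi/\Phi$ is a meromorphic function with no zeros or poles and hence a constant $C(n)$, which we pin down by comparing the leading term near $P_{\infty-}$ with the second line of (\ref{3.28}). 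Formulas (\ref{3.a0}) and (\ref{3.a1}) are obtained in the same spirit: the product representation (\ref{3.11}), combined with the factorisation (\ref{3.37}) that writes each factor $\tilde p - z + \tilde q\tilde\phi^-$ as $(\tilde B_n/\tilde B_n^-)\left(1 + \tilde q z\tilde C_n/(\tfrac12 y - \tilde A_{n+1})\right)$, shows that $\Psi_1$ is meromorphic on $\mathcal{K}_n$ with divisor and essential behaviour (the $\zeta^{n_0-n}$ factor in (\ref{3.30}) and the extra zero $P_\sharp(n)$ identified in the proof of Lemma \ref{lemma3}) matching those of the stated theta quotient times $\exp\left((n-n_0)\int \omega_{P_\sharp(n)P_{\infty-}}\right)$; the constant $C(n,n_0)$ is fixed by setting $P=P_{\infty-}$. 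Then $\Psi_2 = \xi^{-1}\tilde\phi\,\Psi_1$ from (\ref{3.19}) gives (\ref{3.a1}) at once.

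Finally, for (\ref{3.100})--(\ref{3.101}) we expand the theta representation (\ref{3.51}) of $\tilde\phi$ near $P_{\infty+}$ in the local coordinate $\zeta=z^{-1}$ and compare with the first two coefficients of the first line of (\ref{3.28}). Using $\int_{Q_0}^P \omega_{P_{0,+}P_{\infty+}}^{(3)} = -\ln\zeta + e_{0,+} + O(\zeta)$ and Taylor-expanding $\theta(\underline z(P,\cdot))$ in terms of the logarithmic derivatives $\sum_j c_j(n)\partial_{\omega_j}\ln\theta$ along the holomorphic basis, the $\zeta^{-1}$-coefficient produces $(\tilde q^{+})^{-1}$ in terms of $\Delta_2$, while the $\zeta^{0}$-coefficient yields a linear combination of $\tilde p^{+}$, $\tilde q^{+}$ and $\tilde q^{++}/\tilde q^{+}$ containing $\Delta_1$ and $\Delta_3$. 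Eliminating $\tilde q^{+}$ via $\tilde q\tilde r = p-1$ (equation (\ref{2.36})) reduces the coupled pair to a quadratic in $p^{+}$ with discriminant $(\Delta_3+\Delta_3^++\delta_1-1+\Delta_1)^{2}+4\Delta_2$, giving (\ref{3.100}); relation $q^{+}=\Delta_2/p^{+}+1$ then gives (\ref{3.101}). The main obstacle will be this last step: the constants $e_{0,\pm}$ arising at both $P_{\infty\pm}$ mix across the two sheets and must be balanced against the trace formulas (\ref{3.26})--(\ref{3.27}) so that the constant term of $\tilde\phi$ at $P_{\infty+}$ reproduces exactly $((\tilde q^{+}/\tilde q - 1)\tilde p/\tilde q)^{+}$ from (\ref{3.28}); reducing the resulting transcendental pair of relations to the clean quadratic (\ref{3.100}) with coefficients $\Delta_1,\Delta_2,\Delta_3$ is the sole non-routine piece of algebra in the proof.
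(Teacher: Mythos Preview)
Your overall architecture---match divisors to get the theta quotient, fix constants by asymptotics, then extract $p,q$ from the first two coefficients of $\tilde\phi$ near $P_{\infty+}$---is the same as the paper's, and the treatment of (\ref{3.51}), (\ref{3.a0}), (\ref{3.a1}) is essentially right. Two points deserve correction.

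\emph{Non-specialness.} Your Riemann--Roch/uniqueness sketch is too vague to be a proof. The paper's argument (Lemma~\ref{lemma8}) is concrete: if several $\mu_{j_1}=\cdots=\mu_{j_k}=\mu_0$ coincide and $\mu_0\notin\{E_m\}$, then $\tilde A_{n+1}(\mu_0)\neq0$ forces all the $\hat\mu_{j_i}$ onto the same sheet, so no special divisor arises; if $\mu_0=E_{m_0}$ with $k>1$, then $\tilde A_{n+1}(E_{m_0})=0$ and $\tilde B_n(z)=O((z-E_{m_0})^2)$, whence $R_{2n+2}(z)=O((z-E_{m_0})^2)$, contradicting nonsingularity.

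\emph{The derivation of the quadratic.} You have the bookkeeping of the $\Delta_i$ scrambled, and you misplace the role of the trace formula. In the paper: the $\zeta^{-1}$-coefficient at $P_{\infty+}$ gives $(\tilde q^{+})^{-1}$ in terms of $C(n)$ (this is (\ref{3.91}), not $\Delta_2$); the $\zeta^{0}$-coefficient at $P_{\infty+}$ gives $(\tilde q^{++}/\tilde q^{+}-1)\tilde p^{+}=\Delta_2$ (this is (\ref{3.93})); the leading term at $P_{\infty-}$ gives $-\tilde r$ in terms of $C(n)$, and \emph{dividing} by the $P_{\infty+}$ relation eliminates $C(n)$ to produce $-\tilde q^{+}\tilde r=\Delta_3$ (this is (\ref{3.94})). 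The quantity $\Delta_1$ does not come from expanding $\tilde\phi$ at all: it is the theta representation of $\sum_j\mu_j$ obtained by the standard residue computation with $\omega_{P_{\infty+}P_{\infty-}}^{(3)}$ (equation (\ref{3.103})). The quadratic (\ref{3.80}) is then obtained not by ``eliminating $\tilde q^{+}$ via $\tilde q\tilde r=p-1$'' but by substituting $\Delta_2,\Delta_3,\Delta_3^{+}$ and $\sum_j\mu_j=\Delta_1$ directly into the trace formula (\ref{3.26}); the trace formula \emph{is} the equation that becomes quadratic in $\tilde p^{+}$ after you replace $\tilde q^{++}/\tilde q^{+}$ by $\Delta_2/\tilde p^{+}+1$. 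Your last paragraph treats the trace formula as a side constraint for balancing constants, whereas in fact it is the central identity driving the whole step.
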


\begin{proof}
The proof that the divisor $\mathcal{D}_{\underline{\hat{\mu}}(n)}$ is non-special see Lemma \ref{lemma8}, where $t_r$ is regarded as a parameter. Hence the theta functions defined in this lemma are meaningful and not identical to zero.
Obviously, $$\tilde{\phi}(P,n)\frac{\theta(\underline{z}(P,\underline{\hat{\mu}}(n)))}{\theta(\underline{z}(P,\underline{\hat{\nu}}
(n)))}\exp\left(-\int_{Q_{0}}^{P}\omega_{P_{0,+}P_{\infty+}}^{(3)}\right)$$ is holomorphic function on compact Riemann surface $\mathcal{K}_{n}$(Riemann-Roch Theorem [14]). So it is a constant $C(n)$ related to $n$ and $\phi(P,n)$ has the form (\ref{3.51}).
One have the following expansion as $P\rightarrow P_{\infty+},\quad( z=\zeta^{-1})$

\begin{equation}
\begin{split}
&\frac{\theta(\underline{z}(P,\underline{\hat{\nu}}(n)))}{\theta(\underline{z}(P,\underline{\hat{\mu}}(n)))}=\frac{\theta(\underline{z}(P_{\infty+},\underline{\hat{\nu}}(n)))}{\theta(\underline{z}(P_{\infty+},\underline{\hat{\mu}}(n)))}\\
&\times\left(1-\sum_{j=1}^{n}c_{j}(n)\frac{\partial}{\partial\omega_{j}}\ln\left(
\frac{\theta(\underline{z}(P_{\infty+},\underline{\hat{\nu}}(n))+\underline{\omega})}{\theta(\underline{z}(P_{\infty+},\underline{\hat{\mu}}(n))+\underline{\omega})}\right)|_{\underline{\omega}=0}\zeta+O(\zeta^{2})\right).
\end{split}
\end{equation}
Then as $P\rightarrow P_{\infty+}$,
\begin{equation}\label{3.90}
\begin{split}
&\tilde{\phi}(P,n)\\
&=\tilde{a}_1C(n)\frac{\theta(\underline{z}(P_{\infty+},\underline{\hat{\nu}}(n)))}{\theta(\underline{z}(P_{\infty+},\underline{\hat{\mu}}(n)))}\\&\times\left(1-\sum_{j=1}^{n}c_{j}(n)\frac{\partial}{\partial\omega_{j}}\ln\left(\frac{\theta(\underline{z}(P_{\infty+},\underline{\hat{\nu}}(n))+\underline{\omega})}{\theta(\underline{z}(P_{\infty+},\underline{\hat{\mu}}(n))+\underline{\omega})}\right)|_{\underline{\omega}=0}\zeta+O(\zeta^{2})\right)\\
&\times\zeta^{-1}\left(1+\kappa_{\infty+}\zeta+O(\zeta^2)\right),
\end{split}
\end{equation}
where $\tilde{a}_1, \kappa_{\infty+}\in\mathbb{C}$ are constants generated in the limit procedure.
In another way, meromorphic function $\tilde{\phi}$ has the asymptotic expansion (\ref{3.28}).
Comparing the coefficients of (\ref{3.28}) with the ones of (\ref{3.90}) then yields
\begin{equation}\label{3.91}
  (\tilde{q}^{+})^{-1}=\tilde{a}_1C(n)\frac{\theta(\underline{z}(P_{\infty+},\underline{\hat{\nu}}(n)))}{\theta(\underline{z}(P_{\infty+},\underline{\hat{\mu}}(n)))}.
\end{equation}
and
\begin{equation}\label{3.93}
  \begin{split}
   &\left(\tilde{q}^{++}/\tilde{q}^{+}-1\right)\tilde{p}^{+}\\&=\kappa_{\infty+}-\sum_{j=1}^{n}c_{j}(n)\frac{\partial}{\partial\omega_{j}}
   \ln\left(\frac{\theta(\underline{z}(P_{\infty+},\underline{\hat{\nu}}(n))+\underline{\omega})}
   {\theta(\underline{z}(P_{\infty+},\underline{\hat{\mu}}(n))+\underline{\omega})}\right)
   |_{\underline{\omega}=0},\\
   \end{split}
\end{equation}
where $\underline{\omega}=(\omega_1,\omega_2,\dotsi,\omega_n).$
Similarly, as $P\rightarrow P_{\infty-}$, one finally derives
\begin{equation}\label{3.92}
  -\tilde{r}=\tilde{a}_2C(n)\frac{\theta(\underline{z}(P_{\infty-},\underline{\hat{\nu}}(n)))}{\theta(\underline{z}(P_{\infty-},\underline{\hat{\mu}}(n)))},
\end{equation}
where $\tilde{a}_2\in\mathbb{C}.$
Noted the assumption (\ref{2.1}) (\ref{3.1}) that $\tilde{q}(n)\neq 0, \tilde{r}(n)\neq 0$ for all $n\in\mathbb{Z}$, we conclude
\begin{equation}\label{3.94}
  -\tilde{q}^{+}\tilde{r}=\frac{\tilde{a}_2}{\tilde{a}_1}\frac{\theta(\underline{z}(P_{\infty-},\underline{\hat{\nu}}(n)))\theta(\underline{z}(P_{\infty+},\underline{\hat{\mu}}(n)))}{\theta(\underline{z}(P_{\infty-},\underline{\hat{\mu}}(n)))\theta(\underline{z}(P_{\infty+},\underline{\hat{\nu}}(n)))}
\end{equation}
from (\ref{3.91}) and (\ref{3.92}).
Let us consider the trace formula (\ref{3.26}). After a standard residue calculation at $P_{\infty\pm}$ \cite{A1,A2},
$\sum_{j=1}^{n}\mu_j$ has the following theta function representation
\begin{equation}\label{3.103}
  \sum_{j=1}^{n}\mu_j=\sum_{j=1}^n\lambda_j^{'}-\sum_{j=1}^n c_j(k)\partial_{\omega_j}\ln\left(\frac{\theta(\underline{z}(P_{\infty+},\underline{\hat{\mu}}(n))+\underline{\omega})}{\theta(\underline{z}(P_{\infty-},\underline{\hat{\mu}}(n))+\underline{\omega})}\right)|_{\underline{\omega}=0}
\end{equation}
Then $\tilde{p},\tilde{q} $ satisfy the following equations
\begin{equation}\label{3.104}
  \left(\tilde{q}^{++}/\tilde{q}^{+}-1\right)\tilde{p}^{+}=\Delta_2,
\end{equation}
\begin{equation}\label{3.79}
  -\tilde{q}^{+}\tilde{r}=\Delta_3,
\end{equation}
Plugging (\ref{3.104}) (\ref{3.79}) into (\ref{3.26}) then yields
\begin{equation}\label{3.80}
  (\tilde{p}^{+})^2+\left(\Delta_3+\Delta_3^++\Delta_1-\delta_1+1\right)\tilde{p}^{+}-\Delta_2=0.
\end{equation}
The two solutions of (\ref{3.80}) are (\ref{3.100}). (\ref{3.101}) follows from (\ref{3.104}) and (\ref{3.100}).

\end{proof}
\begin{remark}\label{remark5}
The theta function representations of $\Psi_1, \Psi_2$ in (\ref{3.a0}) (\ref{3.a1}) is merely a formal expression
and they are of little use in the process of solving the stationary relativistic Lotka-Volterra system (\ref{2.34}).
In fact the expression (\ref{3.a0}) itself is indeterminate as $P_{\sharp}$ is unknown and may relate to the lattice
variant $n$.
\end{remark}

\section{Algebro-geometric Solutions of Time-dependent Relativistic Lotka-Volterra Hierarchy}
In this section, we mainly derive the algebro-geometric solutions of time-dependent relativistic Lotka-Volterra hierarchy defined in section 2 by extending the method employed in the section 3 to time-dependent cases.
\begin{guess}
\text{Assume}
\begin{equation}\label{4.1}
    \begin{split}
    &p=\tilde{p}\neq 0,1\quad p(\cdot,t),q(\cdot,t),\tilde{p}(\cdot,t),\tilde{q}(\cdot,t)\tilde{r}(\cdot,t)\in\ell(\mathbb{Z}),\\
    &p(n,\cdot),q(n,\cdot),\tilde{p}(n,\cdot),\tilde{q}(n,\cdot),\tilde{r}(n,\cdot)\in\textbf{C}^{1}(\mathbb{R})\\
    &\text{and}\quad (n,t)\in\mathbb{Z}\times\mathbb{R}.\\
    \end{split}
\end{equation}
\end{guess}
Throughout this section we suppose Hypothesis 2 and Hypothesis 3 holds.
The basic algebro-geometric initial value problem is that if we consider a solution $p^{1}(n), q^{1}(n)$ of the $n$-th stationary relativistic
Lotka-Volterra system $s$-$RLV_p(p^{1},q^{1})$\\=0, associated with the hyperelliptic curve $\mathcal{K}_n$ and a corresponding of the summation $\{\delta_\ell\}_{\ell=0}^{n}\subseteq\mathbb{C}$, then we construct a solution $p,q$ of the $r$-th time-dependent relativistic Lotka-Volterra
flow $RLV_r(p,q)=0$ satisfying $p(n,t_{0,r})=p^{1}(n), q(n,t_{0,r})$\\$=q^{1}(n)$ for some $t_{0,r}\in\mathbb{R}$ and any $n\in\mathbb{Z}.$
We shall use the notation $\bar{\tilde{V}}_r,\bar{\tilde{A}}_{r+1},\bar{\tilde{B}}_{r},$ $\bar{\tilde{C}}_{r},\bar{\tilde{D}}_{r+1},\bar{\tilde{a}}_{\ell},\bar{\tilde{b}}_{\ell},
\bar{\tilde{c}}_{\ell},\bar{\tilde{d}}_{\ell},\bar{\tilde{\delta}}_{\ell}$ in the $r$-th time-dependent flow to distinguish $\tilde{A}_{n+1},\tilde{B}_{n},$\\ $\tilde{C}_{n},\tilde{D}_{n+1},
\tilde{a}_{\ell},\tilde{b}_{\ell},\tilde{c}_{\ell},\tilde{d}_{\ell},\tilde{\delta}_{\ell}$ in the $n$-th stationary relativistic Lotka-Volterra system.
The algebro-geometric initial value problem discussed above can be summed up in the form of zero-curvature equation
\begin{equation}\label{4.2}
     \tilde{U}_{t_{r}}(\xi,t_{r})+\tilde{U}(\xi,t_{r})\bar{\tilde{V}}_{r}(\xi,t_{r})-\bar{\tilde{V}}^{+}_{r}(\xi,t_{r})U(\xi,t_{r})=0,
\end{equation}
\begin{equation}\label{4.3}
     \tilde{U}(\xi,t_{0,r})\tilde{V}_{n}(\xi,t_{0,r})-\tilde{V}^{+}_{n}(\xi,t_{0,r})\tilde{U}(\xi,t_{0,r})=0,
\end{equation}
Considering the isospectral property of the Lax operator $L$ corresponding to $U$, we may impose more strong condition
on equation (\ref{4.3}), which is
\begin{equation}\label{4.4}
     \tilde{U}(\xi,t_{r})\tilde{V}_{n}(\xi,t_{r})-\tilde{V}^{+}_{n}(\xi,t_{r})\tilde{U}(\xi,t_{r})=0, \quad t_r\in\mathbb{R}.
\end{equation}
For further reference, we recall the relevant quantities here:
\begin{equation}\label{4.5}
\tilde{U}(\xi)=\left(
    \begin{array}{cc}
       \tilde{p}-z & \tilde{q}\xi\\
        \tilde{r}\xi & 1
   \end{array}
   \right),
\end{equation}
\begin{equation}\label{4.6}
\tilde{V}_{n}(\xi)=\left(
    \begin{array}{cc}
       \tilde{A}_{n+1}^-(z)& \xi\tilde{B}_{n}^-(z)\\
        \xi\tilde{C}_{n}^-(z) & -\tilde{D}_{n+1}^-(z)
   \end{array}
   \right),
\end{equation}
\begin{equation}\label{4.7}
\tilde{V}_{r}(\xi)=\left(
    \begin{array}{cc}
       \bar{\tilde{A}}_{r+1}^-(z)& \xi\bar{\tilde{B}}_{r}^-(z)\\
        \xi\bar{\tilde{C}}_{r}^-(z) &-\bar{\tilde{D}}_{r+1}^-(z)
   \end{array}
   \right),
\end{equation}
where
\begin{equation}\label{4.8}
    \tilde{A}_{n+1}(z)=\sum_{\ell=0}^{n+1}\tilde{a}_{n+1-\ell}z^{\ell},
\end{equation}
\begin{equation}\label{4.9}
    \tilde{B}_{n}(z)=\sum_{\ell=0}^{n}\tilde{b}_{n-\ell}z^{\ell},
\end{equation}
\begin{equation}\label{4.10}
     \tilde{C}_{n}(z)=\sum_{\ell=0}^{n}\tilde{c}_{n-\ell}z^{\ell},
\end{equation}
\begin{equation}\label{4.11}
    \tilde{D}_{n+1}(z)=\sum_{\ell=0}^{n+1}\tilde{a}_{n+1-\ell}z^{\ell},
\end{equation}
\begin{equation}\label{4.12}
    \bar{\tilde{A}}_{r+1}(z)=\sum_{\ell=0}^{r+1}\bar{\tilde{a}}_{r+1-\ell}z^{\ell},
\end{equation}
\begin{equation}\label{4.13}
    \bar{\tilde{B}}_{r}(z)=\sum_{\ell=0}^{r}\bar{\tilde{b}}_{r-\ell}z^{\ell},
\end{equation}
\begin{equation}\label{4.14}
     \bar{\tilde{C}}_{r}(z)=\sum_{\ell=0}^{r}\bar{\tilde{c}}_{r-\ell}z^{\ell},
\end{equation}
\begin{equation}\label{4.15}
    \bar{\tilde{D}}_{r+1}(z)=\sum_{\ell=0}^{r+1}\bar{\tilde{a}}_{r+1-\ell}z^{\ell},
\end{equation}
Here $\{\tilde{a}_\ell\}_{\ell=0}^{n+1}, \{\tilde{b}_\ell\}_{\ell=0}^{n}, \{\tilde{c}_{\ell}\}_{\ell=0}^{n}, \{\tilde{d}_{\ell}\}_{\ell=0}^{n+1}$ and
$\{\bar{\tilde{a}}_\ell\}_{\ell=0}^{n+1}, \{\bar{\tilde{b}}_\ell\}_{\ell=0}^{n}, \{\bar{\tilde{c}}_{\ell}\}_{\ell=0}^{n}, \{\bar{\tilde{d}}_{\ell}\}_{\ell=0}^{n+1}$ are defined by (\ref{2.2})-(\ref{2.6}) corresponding to different constants $\tilde{\delta}_\ell$ and $\bar{\tilde{\delta}}_\ell$, respectively.
Explicitly, equation (\ref{4.2}) and (\ref{4.4}) are equivalent to
\begin{equation}\label{4.16}
     -\tilde{p}_{t_r}=(\tilde{p}-z)\bar{\tilde{A}}_{r+1}^-+z\tilde{q}\bar{\tilde{C}}_r^{-}-
     (\tilde{p}-z)\bar{\tilde{A}}_{r+1}-z\tilde{r}\bar{\tilde{B}}_{r},
   \end{equation}
   \begin{equation}\label{4.17}
     -\tilde{q}_{t_r}=(\tilde{p}-z)\bar{\tilde{B}}_{r}^--\tilde{q}\bar{\tilde{D}}_{r+1}^--\tilde{q}\bar{\tilde{A}}_{r+1}-\bar{\tilde{B}}_{r},
   \end{equation}
   \begin{equation}\label{4.18}
     -\tilde{r}_{t_r}=\tilde{r}\bar{\tilde{A}}_{r+1}^-+\bar{\tilde{C}}_{r}^--(\tilde{p}-z)\bar{\tilde{C}}_{r}+\tilde{r}\bar{\tilde{D}}_{r+1},
   \end{equation}
   \begin{equation}\label{4.19}
    0= z\tilde{r}\bar{\tilde{B}}_{r}^--\bar{\tilde{D}}_{r+1}^--z\tilde{q}\bar{\tilde{C}}_{r}+\bar{\tilde{D}}_{r+1},
   \end{equation}
   \begin{equation}\label{4.20}
     0=(\tilde{p}-z)\tilde{A}_{n+1}^-+z\tilde{q}\tilde{C}_n^{-}-
     (\tilde{p}-z)\tilde{A}_{n+1}-z\tilde{r}\tilde{B}_{n+1},
   \end{equation}
   \begin{equation}\label{4.21}
     0=(\tilde{p}-z)\tilde{B}_{n}^--\tilde{q}\tilde{D}_{n+1}^--\tilde{q}\tilde{A}_{n+1}-\tilde{B}_{n},
   \end{equation}
   \begin{equation}\label{4.22}
     0=\tilde{r}\tilde{A}_{n+1}^-+\tilde{C}_{n}^--(\tilde{p}-z)\tilde{C}_{n}+\tilde{r}\tilde{D}_{n+1},
   \end{equation}
   \begin{equation}\label{4.23}
     0=z\tilde{r}\tilde{B}_{n}^--\tilde{D}_{n+1}^--z\tilde{q}\tilde{C}_{n}+\tilde{D}_{n+1},
   \end{equation}
   respectively.
   In particular, (\ref{2.62}) holds in the present $t_r$-dependence setting, that is,
    \begin{equation}\label{4.24}
         R_{2n+2}(z,t_r)=-(\tilde{A}_{n+1}(z,t_r))^2-z\tilde{B}_n(z,t_r)\tilde{C}_n(z,t_r).
    \end{equation}
    Obviously the algebraic curve defined in (\ref{4.24}) is $n$-independent and may depend on the parameter $t_{r}.$
    In fact we can prove $$\partial_{t_r}R_{2n+2}(z,t_r)=0$$ under the initial value condition (\ref{4.4}), which means
    $R_{2n+2}(z,t_r)$ is $t_r$-independent (see lemma \ref{lemma5}).
    We write
    \begin{equation}\label{4.24a}
    \begin{split}
    &\tilde{B}_{n}(z,n,t_r)=\left(-\tilde{q}^{+}(n,t_r)\right)\prod_{j=1}^{n}\left(z-\mu_j(n,t_r)\right),\\ &\tilde{C}_{n}(z,n,t_r)=\left(-\tilde{r}(n,t_r)\right)\prod_{j=1}^{n}\left(z-\nu_j(n,t_r)\right),\\
    &\text{and}\quad\mu_j(n,t_r),\nu_j(n,t_r)\in\ell(\mathbb{Z}), (n,t_r)\in\mathbb{Z}\times\mathbb{R}.
    \end{split}
    \end{equation}
    As in the stationary context (\ref{3.5}) (\ref{3.7}) we introduce
    \begin{equation}\label{4.25}
    \begin{split}
   &\hat{\mu}_j(n,t_r)=(\mu_j(n,t_r),-2\tilde{A}_{n+1}(\mu_j(n,t_r),n,t_r)),\\ &\hat{\nu}_j(n,t_r)=(\nu_j(n,t_r),2\tilde{A}_{n+1}(\nu_j(n,t_r),n,t_r)),\\
   &j=1,2,\dotsi,n, (n,t_r)\in\mathbb{Z}\times\mathbb{R}.
   \end{split}
   \end{equation}
   on $\mathcal{K}_n$ and define the following meromorphic function $\tilde{\phi}(\cdot,n,t_r)$ on $\mathcal{K}_n$,
   \begin{equation}\label{4.26}
    \tilde{\phi}(P,n,t_r)=\frac{\frac{1}{2}y-\tilde{A}_{n+1}(z,n,t_r)}{B_{n}(z,n,t_r)}
    =\frac{z\tilde{C}_{n}(z,n,t_r)}{\frac{1}{2}y+\tilde{A}_{n+1}(z,n,t_r)},
    \quad P=(z,y)\in\mathcal{K}_n.
\end{equation}
The divisor $(\tilde{\phi}(\cdot))$ of $\tilde{\phi}$ is
\begin{equation}\label{4.26a}
    (\tilde{\phi}(\cdot,n))=\mathcal{D}_{P_{0,+}\underline{\hat{\nu}}(n,t_r)}-\mathcal{D}_{P_{\infty+}\underline{\hat{\mu}}(n,t_r)},
\end{equation}
and the time-dependent Baker-Akhiezer vector is then defined in term of $\tilde{\phi}$ by
\begin{equation}\label{4.27}
 \Psi(P,\xi,n,n_0,t_r,t_{0,r})=\left(
                                  \begin{array}{c}
                                     \Psi_1(P,\xi,n,n_0,t_r,t_{0,r}) \\
                                     \Psi_2(P,\xi,n,n_0,t_r,t_{0,r}) \\
                                  \end{array}
                                \right),
\end{equation}
\begin{equation}\label{4.28}
\begin{split}
 &\Psi_1(P,\xi,n,n_0,t_r,t_{0,r})
 =\exp\left(\int_{t_{0,r}}^{t_r}\left(\bar{\tilde{A}}_{r+1}(z,n_0,s)+\bar{\tilde{B}}_{r}(z,n_0,s)\tilde{\phi}(P,n_0,s)\right)ds
 \right)\\
 &\times\begin{cases}
      \prod_{n^{'}=n_0+1}^{n}\left(\tilde{p}(n^{'},t_r)-z+\tilde{q}(n^{'})\tilde{\phi}^-(P,n^{'},t_r)\right),&n\geq n_0+1,\cr
      1,&n=n_0,\cr
      \prod_{n^{'}=n+1}^{n_0}\left(\tilde{p}(n^{'},t_r)-z+\tilde{q}(n^{'},t_r)\tilde{\phi}^-(P,n^{'},t_r)\right)^{-1},&n\leq n_0-1,
    \end{cases}\\
    \end{split}
\end{equation}
\begin{eqnarray}\label{4.29}
\begin{split}
 &\Psi_2(P,\xi,n,n_0,t_r,t_{0,r})=\exp\left(\int_{t_{0,r}}^{t_r}\left(\bar{\tilde{A}}_{r+1}(z,n_0,s)+\bar{\tilde{B}}_{r}(z,n_0,s)
 \tilde{\phi}(P,n_0,s)\right)ds\right)\\
 &\times\xi^{-1}\times\tilde{\phi}(P,n_0,t_r)
    \begin{cases}
      \prod_{n^{'}=n_0+1}^{n}\left(\frac{\tilde{r}(n^{'},t_r)z}{\tilde{\phi}^-(P,n^{'},t_r)}+1\right),&n\geq n_0+1,\cr
      1,&n=n_0,\cr
      \prod_{n^{'}=n+1}^{n_0}\left(\frac{\tilde{r}(n^{'},t_r)z}{\tilde{\phi}^-(P,n^{'},t_r)}+1\right)^{-1},&n\leq n_0-1.
    \end{cases}\\
 \end{split}
 \end{eqnarray}
 \begin{equation}\label{4.30}
   P=(z,y)\in\mathcal{K}_n\backslash\{P_{0,\pm},P_{\infty\pm}\},\quad (n,t_r)\in\mathbb{Z}\times\mathbb{R}.
 \end{equation}
 One observes that
\begin{equation}\label{4.31}
\begin{split}
  &\Psi_1(P,n,n_0,t_r,t_{0,r})=\Psi_1(P,n_0,n_0,t_r,t_{0,r})\Psi_1(P,n,n_0,t_r,t_{0,r}),\\
  &P=(z,y)\in\mathcal{K}_n\backslash\{P_{0,\pm},P_{\infty\pm}\},\quad (n,n_0,t_r,t_{0,r})\in\mathbb{Z}^2\times\mathbb{R}^2.\\
\end{split}
\end{equation}
The following lemma shows the properties of $\tilde{\phi}, \Psi_1$ and $\Psi_2$ as discussed in the stationary case.\
\begin{lemma}
Assume Hypothesis 2 and Hypothesis 3 hold and
suppose $p(n,t_r),q(n,t_r)$ satisfy (\ref{4.2}) (\ref{4.4}). In addition,  let $P=(z,y)\in\mathcal{K}_n\backslash\{P_{\infty\pm},P_{0,\pm}\}, (n,n_0,t_r,t_{0,r})\in\mathbb{Z}^2\times\mathbb{R}^2.$ Then $\tilde{\phi}(P,t_r)$ satisfies the following equations
\begin{equation}\label{4.32}
    (\tilde{p}-z)\tilde{\phi}(P)+\tilde{q}\tilde{\phi}(P)\tilde{\phi}^-(P)=\tilde{r}z+\tilde{\phi}^{-}(P),
\end{equation}
\begin{equation}\label{4.33}
  \tilde{\phi}_{t_r}(P)=z\bar{\tilde{C}}_r-\bar{\tilde{D}}_{r+1}\tilde{\phi}(P)-\bar{\tilde{A}}_{r+1}\tilde{\phi}(P)-\bar{\tilde{B}}_r\tilde{\phi}^2(P),
\end{equation}
as well as
    \begin{equation}\label{4.34}
        \tilde{\phi}(P)\tilde{\phi}(P^{*})=\frac{-z\tilde{C}_{n}(z)}{\tilde{B}_{n}(z)},
    \end{equation}
    \begin{equation}\label{4.35}
        \tilde{\phi}(P)+\tilde{\phi}(P^{*})=\frac{-2\tilde{A}_{n+1}}{\tilde{B}_n},
   \end{equation}
   \begin{equation}\label{4.36}
        \tilde{\phi}(P)-\tilde{\phi}(P^{*})=\frac{y}{\tilde{B}_n(z)}.
   \end{equation}
   The vector $\Psi$ satisfy
   \begin{equation}\label{4.37}
     \tilde{U}(\xi)\Psi^-(P)=\Psi(P),
   \end{equation}
   \begin{equation}\label{4.38}
     \tilde{V}_n(\xi)\Psi(P)=(1/2)y\Psi(P),
   \end{equation}
    \begin{equation}\label{4.39}
        \Psi_2(P,\xi,n,n_0)=\xi^{-1}\tilde{\phi}(P,n)\Psi_1(P,\xi,n,n_0),
    \end{equation}
    \begin{equation}\label{4.40}
       \Psi_{t_r}(P)=\bar{\tilde{V}}_r(\xi)\Psi(P).
    \end{equation}
    Moreover, Moreover, as long as the zeros of $\mu_j(n_0,s)$ of $B_n(\cdot,n_0,s)$ are all simple and nonzero for all $s\in\Omega, \Omega\subseteq\mathbb{R}$ is an open interval, $\Psi_1$ is meromorphic on $\mathcal{K}_n\backslash\{P_{0,\pm}, P_{\infty\pm}\}$ for $(n,t_r,t_{0,r})\in\mathbb{Z}\times\Omega^2.$

\end{lemma}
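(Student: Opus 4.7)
The algebraic identities (\ref{4.32}) and (\ref{4.34})--(\ref{4.36}) are pointwise in $t_r$ and involve only $\tilde{A}_{n+1},\tilde{B}_n,\tilde{C}_n,\tilde{D}_{n+1}$. The assumption (\ref{4.4}) asserts that the stationary zero-curvature equation holds at every $t_r$, so the entire algebraic manipulation from the proof of Lemma \ref{lemma1} transfers verbatim, treating $t_r$ merely as a parameter. In particular, the defining relation (\ref{2.62}) together with (\ref{2.64}) still forces $R_{2n+2}$ to be lattice- and time-independent (this is the content of the forthcoming Lemma \ref{lemma5}), so $y(P)$ may be regarded as a $t_r$-independent function on the fixed curve $\mathcal{K}_n$.

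For the $\Psi$-identities, I would first prove (\ref{4.39}) by exactly the same induction on $n-n_0$ used in Lemma \ref{lemma1}: the common exponential prefactor $\exp\bigl(\int_{t_{0,r}}^{t_r}(\bar{\tilde A}_{r+1}+\bar{\tilde B}_r\tilde\phi)(z,n_0,s)\,ds\bigr)$ drops out of the ratio $\xi\Psi_2/\Psi_1$, reducing the claim to the Riccati step (\ref{3.23}). Equations (\ref{4.37}) and (\ref{4.38}) then follow routinely from (\ref{4.39}) combined with (\ref{4.32}) and (\ref{4.34})--(\ref{4.36}), precisely as in the stationary setting.

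The genuinely new content is the $t_r$-evolution (\ref{4.33}) and (\ref{4.40}), which requires (\ref{4.2}). My plan is to first derive the Lax-type commutator
\begin{equation*}
  \tilde V_{n,t_r}=\bar{\tilde V}_r^+\tilde V_n-\tilde V_n\bar{\tilde V}_r
\end{equation*}
by differentiating (\ref{4.4}) in $t_r$ and eliminating $\tilde U_{t_r}$ via (\ref{4.2}); since $\tilde U$ is invertible for generic $\xi$, this yields the commutator relation, whose $(1,1)$ and $(1,2)$ entries give explicit expressions for $\tilde A_{n+1,t_r}$ and $\tilde B_{n,t_r}$ in terms of $\bar{\tilde A}_{r+1},\bar{\tilde B}_r,\bar{\tilde C}_r,\bar{\tilde D}_{r+1}$ and $\tilde A_{n+1},\tilde B_n,\tilde C_n,\tilde D_{n+1}$. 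Differentiating $\tilde\phi=(y/2-\tilde A_{n+1})/\tilde B_n$ with $y$ constant in $t_r$, substituting these, and simplifying with (\ref{4.34})--(\ref{4.36}) and $\tilde A_{n+1}=\tilde D_{n+1}$ gives (\ref{4.33}). For (\ref{4.40}), by (\ref{4.39}) it suffices to verify $\partial_{t_r}\Psi_1=(\bar{\tilde A}_{r+1}+\bar{\tilde B}_r\tilde\phi)\Psi_1$: at $n=n_0$ this is built into the definition (\ref{4.28}), and for general $n$ one differentiates the product factor, uses (\ref{4.33}) to rewrite $\partial_{t_r}\tilde\phi^-(P,n')$ at each site, and telescopes. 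The second component of (\ref{4.40}) then follows from (\ref{4.39}) and (\ref{4.33}).

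The main obstacle is the final meromorphy claim for $\Psi_1$. The integrand $\bar{\tilde A}_{r+1}(z,n_0,s)+\bar{\tilde B}_r(z,n_0,s)\tilde\phi(P,n_0,s)$ is, for fixed $s$, meromorphic on $\mathcal{K}_n$ with possible poles only at the $\hat\mu_j(n_0,s)$, inherited from $\tilde\phi$. Under the simplicity/nonvanishing hypothesis on the zeros of $B_n(\cdot,n_0,s)$, the Dubrovin-type equation for $\mu_j$ (derivable from (\ref{4.33}) by evaluating at $P=\hat\mu_j$, where $y=-2\tilde A_{n+1}(\mu_j)$) gives $\dot\mu_j(n_0,s)=-2\bar{\tilde B}_r(\mu_j,n_0,s)\tilde A_{n+1}(\mu_j,n_0,s)/\prod_{k\neq j}(\mu_j-\mu_k)$, and a local computation near $P=\hat\mu_j(n_0,s_0)$ shows that $\bar{\tilde A}_{r+1}+\bar{\tilde B}_r\tilde\phi$ has residue equal to $-d\mu_j(n_0,s)/ds$ at $s=s_0$. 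Integrating in $s$ therefore converts the exponential factor into a ratio $\prod_j(z-\mu_j(n_0,t_r))/\prod_j(z-\mu_j(n_0,t_{0,r}))$ multiplied by a function holomorphic on $\mathcal{K}_n\setminus\{P_{0,\pm},P_{\infty\pm}\}$; together with the meromorphic site-product, this yields the meromorphy of $\Psi_1$ on $\mathcal{K}_n\setminus\{P_{0,\pm},P_{\infty\pm}\}$ for $(n,t_r,t_{0,r})\in\mathbb Z\times\Omega^2$.
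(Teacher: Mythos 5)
Most of your proposal tracks the paper: treating $t_r$ as a parameter for (\ref{4.32}), (\ref{4.34})--(\ref{4.39}), proving (\ref{4.40}) by differentiating the site-product and telescoping (the paper's (\ref{4.47})--(\ref{4.49})), and handling the meromorphy of $\Psi_1$ by showing that $\bar{\tilde{B}}_r\tilde{\phi}$ behaves like a logarithmic derivative near $\hat{\mu}_j(n_0,s)$ (the paper's (\ref{4.51})); your residue computation there is in fact more explicit than the paper's.

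The genuine gap is in your derivation of (\ref{4.33}). Differentiating (\ref{4.4}) in $t_r$ and eliminating $\tilde{U}_{t_r}$ via (\ref{4.2}) does \emph{not} yield the commutator relation: it only shows that the discrepancy $W:=\tilde{V}_{n,t_r}-[\bar{\tilde{V}}_r,\tilde{V}_n]$ satisfies $\tilde{U}W=W^{+}\tilde{U}$, i.e.\ $W$ itself solves the stationary zero-curvature equation. Invertibility of $\tilde{U}$ gives only the conjugation $W^{+}=\tilde{U}W\tilde{U}^{-1}$, which admits nonzero solutions (e.g.\ $W=c(z)I$ or $c(z)\tilde{V}_n$), so an extra degree/asymptotics argument is needed to force $W=0$ --- precisely the step you omit. (Your displayed identity also carries a spurious shift: the correct target is $\tilde{V}_{n,t_r}=[\bar{\tilde{V}}_r,\tilde{V}_n]$ as in (\ref{4.55}).) The paper avoids this by working at the level of $\tilde{\phi}$: it differentiates the Riccati equation (\ref{4.32}) to get the first-order difference equation (\ref{4.42}) for $\tilde{\phi}_{t_r}$, verifies in (\ref{4.43}) that the claimed right-hand side of (\ref{4.33}) satisfies the same equation using (\ref{4.16})--(\ref{4.18}), concludes that the difference is $C\prod_{n'}B(z,n',t_r)$ as in (\ref{4.45}), and kills $C$ by comparing orders at $P_{\infty-}$ for $n$ large. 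Note also that the paper establishes the commutator (\ref{4.55}) only afterwards, in Lemma \ref{lemma5}, \emph{as a consequence of} (\ref{4.33})--(\ref{4.36}); so if you insist on your order of argument you must both supply the missing vanishing argument for $W$ and check you are not silently using Lemma \ref{lemma5} (e.g.\ the $t_r$-independence of $y$) before it is available.
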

\begin{proof}
The proof of (\ref{4.32}), (\ref{4.34})-(\ref{4.39}) is the same with lemma \ref{lemma1}, where $t_r$
is regarded as a parameter.
From (\ref{4.32}) we have
\begin{equation}\label{4.41}
  \tilde{p}_{t_r}\tilde{\phi}+(\tilde{p}-z)\tilde{\phi}+\tilde{q}_{t_r}\tilde{\phi}\tilde{\phi}^-
  +\tilde{q}\tilde{\phi}_{t_r}\tilde{\phi}^{-}+\tilde{q}\tilde{\phi}\tilde{\phi}^{-}_{t_r}=\tilde{r}_{t_r}z
  +\tilde{\phi}^{-}_{t_r},
\end{equation}
that is,
\begin{equation}\label{4.42}
  [\tilde{p}-z+\tilde{q}\tilde{\phi}^{-}+(\tilde{q}\tilde{\phi}-1)S^{-}]\tilde{\phi}_{t_r}=\tilde{r}_{t_r}z-
  \tilde{p}_{t_r}\tilde{\phi}-
  \tilde{q}_{t_r}\tilde{\phi}\tilde{\phi}^{-}.
\end{equation}
Using (\ref{4.16})-(\ref{4.18}), one finds
\begin{equation}\label{4.43}
  \begin{split}
  &[\tilde{p}-z+\tilde{q}\tilde{\phi}^-+(\tilde{q}\tilde{\phi}-1)S^{-}]\left(z\bar{\tilde{C}}_r-\bar{\tilde{D}}_{r+1}
  \tilde{\phi}-\bar{\tilde{A}}_{r+1}\tilde{\phi}-\bar{\tilde{B}}_r\tilde{\phi}^2\right)\\
  &=[\frac{\tilde{r}z+\tilde{\phi}^{-}}{\tilde{\phi}}+(\tilde{q}\tilde{\phi}-1)S^{-}]
  \left(z\bar{\tilde{C}}_r-\bar{\tilde{D}}_{r+1}\tilde{\phi}-
  \bar{\tilde{A}}_{r+1}\tilde{\phi}-\bar{\tilde{B}}_r\tilde{\phi}^2\right)\\
  &=(\tilde{p}-z+\tilde{q}\tilde{\phi}^{-})z\bar{\tilde{C}}_{r}-(\tilde{r}z+\tilde{\phi}^{-})\bar{\tilde{D}}_{r+1}
  -(\tilde{r}z+\tilde{\phi}^{-})\bar{\tilde{A}}_{r+1}-(\tilde{r}z+\tilde{\phi}^{-})\bar{\tilde{B}}_r\tilde{\phi}^-\\
  &+(\tilde{q}\tilde{\phi}-1)z\bar{\tilde{C}}_{r}^--(\tilde{q}\tilde{\phi}-1)\bar{\tilde{D}}_{r+1}^-\tilde{\phi}^{-}
  -(\tilde{q}\tilde{\phi}-1)\bar{\tilde{A}}_{r+1}^-\tilde{\phi}^{-}-(\tilde{q}\tilde{\phi}-1)\bar{\tilde{B}}_r
  \tilde{\phi}^2\\
  &=(\tilde{p}-z+\tilde{q}\tilde{\phi}^{-})z\bar{\tilde{C}}_{r}-(\tilde{r}z+\tilde{\phi}^{-})\bar{\tilde{D}}_{r+1}
  -(\tilde{r}z+\tilde{\phi}^{-})\bar{\tilde{A}}_{r+1}-(\tilde{r}z+\tilde{\phi}^{-})\bar{\tilde{B}}_r\tilde{\phi}^-\\
  &+(\tilde{q}\tilde{\phi}-1)z\bar{\tilde{C}}_{r}^--(\tilde{q}\tilde{\phi}-1)\bar{\tilde{D}}_{r+1}^-\tilde{\phi}^{-}
  -(\tilde{q}\tilde{\phi}-1)\bar{\tilde{A}}_{r+1}^-\tilde{\phi}^{-}-\tilde{r}z\bar{\tilde{B}}_r^-\tilde{\phi}^-
  +\bar{\tilde{B}}_r(\tilde{p}-z)\tilde{\phi}\tilde{\phi}^-\\
  &=[\tilde{r}\bar{\tilde{A}}_{r+1}^--\bar{\tilde{C}}_{r}^-+(\tilde{p}-z)\bar{\tilde{C}}_{r}-\tilde{r}\bar{\tilde{D}}_{r+1}]z+
  [(\tilde{p}-z)\bar{\tilde{A}}_{r+1}^-+z\tilde{q}\bar{\tilde{C}}_r^{-}
  -(\tilde{p}-z)\bar{\tilde{A}}_{r+1}\\&-z\tilde{r}\bar{\tilde{B}}_{r+1}]\tilde{\phi}+[(\tilde{p}-z)\bar{\tilde{B}}_{r}^-
  -\tilde{q}\bar{\tilde{D}}_{r+1}^--\tilde{q}\bar{\tilde{A}}_{r+1}-\bar{\tilde{B}}_{r}]\tilde{\phi}\tilde{\phi}^-\\
  &+[(\bar{\tilde{A}}_{r+1}^--\bar{\tilde{A}}_{r+1})(\tilde{\phi}^--(\tilde{p}-z)\tilde{\phi}-\tilde{q}\tilde{\phi}\tilde{\phi}^-+\tilde{r}z)]
  +[z\tilde{r}\bar{\tilde{B}}_{r}^--\bar{\tilde{D}}_{r+1}^--z\tilde{q}\bar{\tilde{C}}_{r}+\bar{\tilde{D}}_{r+1}]\tilde{\phi}^-\\
  &=\tilde{r}_{t_r}z-\tilde{p}_{t_r}\tilde{\phi}-\tilde{q}_{t_r}\tilde{\phi}\tilde{\phi}^-.\\
  \end{split}
\end{equation}
Considering (\ref{4.42}), one may conclude
\begin{equation}\label{4.44}\begin{split}
    &[\tilde{p}-z+\tilde{q}\tilde{\phi}^{-}+(\tilde{q}\tilde{\phi}-1)S^{-}]\left(\tilde{\phi}_{t_r}-(z\bar{\tilde{C}}_r-\bar{\tilde{D}}_{r+1}\tilde{\phi}-\bar{\tilde{A}}_{r+1}\tilde{\phi}-\bar{\tilde{B}}_r\tilde{\phi}^2)\right) \\&=0.\\\end{split}
\end{equation}
Thus, the expression
\begin{equation}\label{4.45}
 \tilde{\phi}_{t_r}-(z\bar{\tilde{C}}_r-\bar{\tilde{D}}_{r+1}\tilde{\phi}-\bar{\tilde{A}}_{r+1}\tilde{\phi}
  -\bar{\tilde{B}}_r\tilde{\phi}^2)\\
  =C\begin{cases}\prod_{n^{'}=n_0+1}^{n}B(z,n^{'},t_r),&n\geq n_0+1,\cr
  1,&n=n_0,\cr
  \prod_{n^{'}=n_0}^{n}B(z,n^{'},t_r)^{-1},&n\leq n_0-1,
  \end{cases}
 \end{equation}
 where
 \begin{equation}\label{4.46}
   B(z,n,t_r)=\frac{1-\tilde{q}(n,t_r)\tilde{\phi}(z,n,t_r)}{\tilde{p}(n,t_r)-z+\tilde{q}(n,t_r)
   \tilde{\phi}^{-}(z,n,t_r)},\quad (n,t_r)\in\mathbb{Z}\times\mathbb{R}.
 \end{equation}
 Obviously, the expression $\tilde{\phi}_{t_r}-(z\bar{\tilde{C}}_r-\bar{\tilde{D}}_{r+1}\tilde{\phi}-\bar{\tilde{A}}_{r+1}\tilde{\phi}
  -\bar{\tilde{B}}_r\tilde{\phi}^2)$ in the left-hand of (\ref{4.45}) is a meromorphic function on Riemann Surface $\mathcal{K}_n$ and its order is finite at $P_{\infty-}$. However, the order of the right-hand in (\ref{4.45}) is $O(z^{n-n_0})$ as $P$ near the point $P_{\infty-}.$ Hence, taking $n$ sufficiently large, then yields a contradiction at the both bides of (\ref{4.45}) unless $C=0.$ Therefore we have (\ref{4.33}). From the definition of
  $\Psi_1(P,\xi,n,n_0,t_r,t_{0,r})$ in (\ref{4.28}) it is easy to check
  \begin{equation}\label{4.47}
   \begin{split}
    &(\tilde{p}-z+\tilde{q}\tilde{\phi}^-)_{t_r}=\tilde{p}_{t_r}+\tilde{q}_{t_r}\tilde{\phi}^-
    +\tilde{q}\tilde{\phi}^-_{t_r}\\
    &=-(\tilde{p}-z)\bar{\tilde{A}}_{r+1}^--z\tilde{q}\bar{\tilde{C}}_r^{-}+
     (\tilde{p}-z)\bar{\tilde{A}}_{r+1}+z\tilde{r}\bar{\tilde{B}}_{r}\\
    &\times\left(-(\tilde{p}-z)\bar{\tilde{B}}_{r}^-+\tilde{q}\bar{\tilde{D}}_{r+1}^-+\tilde{q}\bar{\tilde{A}}_{r+1}+
    \bar{\tilde{B}}_{r}\right)\tilde{\phi}^-\\
    &+\tilde{q}\left(z\bar{\tilde{C}}_r^--\bar{\tilde{D}}_{r+1}^-
    \tilde{\phi}^--\bar{\tilde{A}}_{r+1}^-\tilde{\phi}^--\bar{\tilde{B}}_r^-(\tilde{\phi}^-)^2\right)\\
    &=\left(\tilde{p}-z+\tilde{q}\tilde{\phi}^-\right)\left(\bar{\tilde{A}}_{r+1}+\bar{\tilde{B}}_r\tilde{\phi}-
    \bar{\tilde{A}}_{r+1}^--\bar{\tilde{B}}_r^-\tilde{\phi}^-\right).\\
    \end{split}
  \end{equation}\
  If we note
  \begin{equation}\label{4.48}
    \Upsilon(P,n_0,t_r,t_{0,r})=\exp\left(\int_{t_{0,r}}^{t_r}\left(\bar{\tilde{A}}_{r+1}(z,n_0,s)+\bar{\tilde{B}}_{r}(z,n_0,s)\tilde{\phi}(P,n_0,s)\right)ds
 \right),
  \end{equation}
  then
  \begin{equation}\label{4.49}
  \begin{split}
    &\Psi_{1,t_r}=\left(\Upsilon\prod_{n^{'}=n_0+1}^{n}(\tilde{p}-z+\tilde{q}\tilde{\phi}^-)(n^{'})\right)_{t_r}\\
    &=\Upsilon_{t_r}\prod_{n^{'}=n_0+1}^{n}(\tilde{p}-z+\tilde{q}\tilde{\phi}^-)(n^{'})+\Upsilon\sum_{n^{'}=n_0+1}^{n}
    (\tilde{p}-z+\tilde{q}\tilde{\phi}^-)_{t_r}(n^{'})\prod_{n^{''}\neq n^{'}}(\tilde{p}-z+\tilde{q}\tilde{\phi}^-)(n^{''})\\
    &=\Upsilon\prod_{n^{'}=n_0+1}^{n}(\tilde{p}-z+\tilde{q}\tilde{\phi}^-)(n^{'})[\bar{\tilde{A}}_{r+1}(n_0,t_r)+
    \sum_{n^{'}=n_0+1}^{n}(\bar{\tilde{A}}_{r+1}(n^{'})+\bar{\tilde{B}}_r(n^{'})-\\
    &\bar{\tilde{A}}_{r+1}^-(n^{'})-\bar{\tilde{B}}_r^-(n^{'}))]\\
    &=(\bar{\tilde{A}}_{r+1}+\bar{\tilde{B}}_{r})\Psi_1=\bar{\tilde{A}}_{r+1}\Psi_1+\xi\bar{\tilde{B}}_{r}\Psi_2,\\
    &\text{as} \quad n\geq n_0+1.\\
  \end{split}
  \end{equation}
  The proof for the case $n\leq n_0-1$ is similarly with (\ref{4.49}).
  Using (\ref{4.32}) and (\ref{4.49}) we have
  \begin{equation}\label{4.50}
  \begin{split}
    &\Psi_{2,t_r}=\xi^{-1}(\tilde{\phi}_{t_r}\Psi_1+\tilde{\phi}\Psi_{1,t_r})\\
    &=\xi^{-1}\left(z\bar{\tilde{C}}_r-\bar{\tilde{D}}_{r+1}\tilde{\phi}-
  \bar{\tilde{A}}_{r+1}\tilde{\phi}-\bar{\tilde{B}}_r\tilde{\phi}^2+(\bar{\tilde{A}}_{r+1}+\bar{\tilde{B}}_{r+1}
  \tilde{\phi})\tilde{\phi}\right)\Psi_1\\
  &=\xi^{-1}(z\bar{\tilde{C}}_r-\bar{\tilde{D}}_{r+1})\Psi_1=\xi\bar{\tilde{C}}_r\Psi_1-\bar{\tilde{D}}_{r+1}\Psi_2.\\
  \end{split}
  \end{equation}
  To illustrate that $\Psi_1(\cdot,\xi,n,n_0,t_r,t_{0,r})$ is a meromorphic function on $\mathcal{K}_n\backslash\{P_{0,\pm},P_{\infty\pm}\}$ we only need to prove $\Upsilon$ is a meromorphic function on
  $\mathcal{K}_n\backslash\{P_{0,\pm},P_{\infty\pm}\}$. Taking into account (\ref{4.25}) (\ref{4.26}) (\ref{4.52}), one may derive
  \begin{equation}\label{4.51}
    \bar{\tilde{B}}_r(z,n_0,s)\tilde{\phi}(P,n_0,s)\thicksim \partial_s\ln\left(B_r(z,n_0,s)\right)+O(1)
  \end{equation}
  as $P\rightarrow\mu_j(n_0,s).$ Hence we conclude $\Upsilon$ is a meromorphic function on $\mathcal{K}_n\backslash\{P_{0,\pm},P_{\infty\pm}\}$ with the help of analysis the asymptotic
  behavior of the possible poles $\{\mu_j(n_0,s)\}$ of the function $\bar{\tilde{A}}_{r+1}(z,n_0,s)+\bar{\tilde{B}}_{r}(z,n_0,s)\tilde{\phi}(P,n_0,s).$

  Next we consider the $t_r$-dependence of $\tilde{A}_{n+1}, \tilde{B}_n, \tilde{C}_n.$
  \begin{lemma}\label{lemma5}
  Assume Hypothesis 2 and Hypothesis 3 hold and
suppose $p(n,t_r),q(n,t_r)$ satisfy (\ref{4.2}) (\ref{4.4}). In addition, let $(z,n,t_r)\in\mathbb{C}\times\mathbb{Z}\times\mathbb{R}.$
 Then,
 \begin{equation}\label{4.52}
    \tilde{\bar{B}}_{n,t_r}=-2\tilde{A}_{n+1}\bar{\tilde{B}}_r+(\bar{\tilde{A}}_{r+1}+\bar{\tilde{D}}_{r+1})\tilde{B}_n,
  \end{equation}
  \begin{equation}\label{4.53}
    \tilde{A}_{n+1,t_r}=z(\bar{\tilde{B}}_r\tilde{C}_n-\tilde{B}_n\bar{\tilde{C}}_r),
  \end{equation}
  \begin{equation}\label{4.54}
    \tilde{C}_{n,t_r}=2z^{-1}\tilde{A}_{n+1}\bar{\tilde{C}}_r-\tilde{C}_n(\bar{\tilde{A}}_{r+1}+\bar{\tilde{D}}_{r+1}).
  \end{equation}
  In particular, (\ref{4.52})-(\ref{4.54}) is equivalent to
  \begin{equation}\label{4.55}
    \tilde{V}_{n,t_r}=[\bar{\tilde{V}}_r,\tilde{V}_n].
  \end{equation}
  and the algebraic curve defined in (\ref{4.24}) is $t_r$-independent.
  \end{lemma}
  \begin{proof}
  To prove (\ref{4.52}) one first differentiates equation (\ref{4.36})
  \begin{equation}\label{4.56}
    \tilde{\phi}_{t_r}(P)-\tilde{\phi}_{t_r}(P^{*})=\frac{-y\tilde{B}_{n,t_r}}{\tilde{B}_n^2}.
  \end{equation}
  The time derivative of $\tilde{\phi}$ given in (\ref{4.32}) and (\ref{4.35}) (\ref{4.36}) yields
  \begin{equation}\label{4.57}
  \begin{split}
    &\tilde{\phi}_{t_r}(P)-\tilde{\phi}_{t_r}(P^{*})\\
    &=\bar{\tilde{B}}_r(\tilde{\phi}(P)+\tilde{\phi}(P^{*}))(\tilde{\phi}(P)-\tilde{\phi}(P^{*}))+(\bar{\tilde{A}}_{r+1}
    +\bar{\tilde{D}}_{r+1})(\tilde{\phi}(P)-\tilde{\phi}(P^{*}))\\
    &=\frac{2\tilde{A}_{n+1}\bar{\tilde{B}}_ry}{\tilde{B}_n^2}-(\bar{\tilde{A}}_{r+1}+\bar{\tilde{D}}_{r+1})
    \frac{y}{\tilde{B}_n}\\
  \end{split}
  \end{equation}
  and hence
  \begin{equation}\label{4.58}
    \tilde{\bar{B}}_{n,t_r}=-2\tilde{A}_{n+1}\bar{\tilde{B}}_r+(\bar{\tilde{A}}_{r+1}+\bar{\tilde{B}}_{r+1})\tilde{B}_n.
  \end{equation}
  Similarly, starting from
  (\ref{4.35})
  \begin{equation}\label{4.59}
    \tilde{\phi}_{t_r}(P)+\tilde{\phi}_{t_r}(P^{*})=2z\tilde{C}_r+(\bar{\tilde{A}}+\bar{\tilde{D}}_{r+1})
    \frac{2\tilde{A}_{n+1}}{\tilde{B}_{n}}-\bar{\tilde{B}}_r(\frac{4\tilde{A}_{n+1}^2}{\tilde{B}_n^2}
    +\frac{2z\tilde{C}_n}{\tilde{B}_n})
  \end{equation}
  yields (\ref{4.53}).
  Differentiating the equation (\ref{4.23}) then yields
  \begin{equation}\label{4.60}
    z\tilde{r}_{t_r}\tilde{B}_{n}^--z\tilde{r}\tilde{B}_{n,t_r}^--\tilde{D}^-_{n+1,t_r}-z\tilde{q}_{t_r}\tilde{C}_n-
    z\tilde{q}\tilde{C}_{n,t_r}+\tilde{D}_{n+1,t_r}=0.
  \end{equation}
  Using (\ref{2021}) (\ref{4.17}) (\ref{4.18}) (\ref{4.19}) (\ref{4.21}) (\ref{4.22}) (\ref{4.23}) (\ref{4.52}) and (\ref{4.53}) we have
  \begin{equation}\label{4.61}
  \begin{split}
    &z\tilde{r}_{t_r}\tilde{B}_{n}^--z\tilde{r}\tilde{B}_{n,t_r}^--\tilde{D}^-_{n+1,t_r}-z\tilde{q}_{t_r}\tilde{C}_n-
    z\tilde{q}\tilde{C}_{n,t_r}+\tilde{D}_{n+1,t_r}\\
    &=-z\tilde{B}_{n}^-[\tilde{r}\bar{\tilde{A}}_{r+1}^-+\bar{\tilde{C}}_{r}^--(\tilde{p}-z)\bar{\tilde{C}}_{r}+
    \tilde{r}\bar{\tilde{D}}_{r+1}]-z\tilde{r}[-2\tilde{A}_{n+1}^-\bar{\tilde{B}}_r^-\\
    &+(\bar{\tilde{A}}_{r+1}^-+\bar{\tilde{D}}_{r+1}^-)\tilde{B}_n^-]+z[\tilde{C}_n\bar{\tilde{B}}_r-\tilde{B}_n\bar{\tilde{C}}_r
    -\tilde{C}_n^-\bar{\tilde{B}}_r^-+\tilde{B}_n^-\bar{\tilde{C}}_r^-]\\
    &+z\tilde{C}_n[(\tilde{p}-z)\bar{\tilde{B}}_{r}^-
    -\tilde{q}\bar{\tilde{D}}_{r+1}^--\tilde{q}\bar{\tilde{A}}_{r+1}-\bar{\tilde{B}}_{r}]-z\tilde{q}\tilde{C}_{n,t_r}\\
    &=-z\tilde{B}_{n}^-[-(\tilde{p}-z)\bar{\tilde{C}}_r+\tilde{r}\bar{\tilde{D}}_{r+1}]+z\tilde{r}[-2\tilde{A}_{n+1}^-
    \bar{\tilde{B}}_r^-+\bar{\tilde{D}}_{r+1}^-\tilde{B}_n^-]\\
    &+z[-\tilde{B}_n\bar{\tilde{C}}_r
    -\tilde{C}_n^-\bar{\tilde{B}}_r^-]+z\tilde{C}_n[(\tilde{p}-z)\bar{\tilde{B}}_{r}^-
    -\tilde{q}\bar{\tilde{D}}_{r+1}^--\tilde{q}\bar{\tilde{A}}_{r+1}]-z\tilde{q}\tilde{C}_{n,t_r}\\
    &=z[(\tilde{p}-z)\tilde{B}_n^--\tilde{B}_n]\bar{\tilde{C}}_r+z[-2\tilde{r}\tilde{A}_{n+1}^--\tilde{C}_{n}^-+
    (\tilde{p}-z)\tilde{C}_n]\bar{\tilde{B}}_r^-\\
    &+z[\tilde{r}\tilde{B}_n^-(\bar{\tilde{D}}_{r+1}^--\bar{\tilde{D}}_{r+1})-\tilde{q}\tilde{C}_n
    \bar{\tilde{D}}_{r+1}^-]-z\tilde{q}\tilde{C}_n\bar{\tilde{A}}_{r+1}-z\tilde{q}\tilde{C}_{n,t_r}\\
    &=z\tilde{q}\bar{\tilde{C}}_r[2\tilde{A}_{n+1}+(-\tilde{A}_{n+1}+\tilde{A}_{n+1}^-)]+z\tilde{r}\bar{\tilde{B}}_r^-
    (\tilde{A}_{n+1}-\tilde{A}_{n+1}^-)\\
    &+[(-\tilde{D}_{n+1}+\tilde{D}_{n+1}^-+z\tilde{q}\tilde{C}_n)(\bar{\tilde{D}}_{r+1}^--\bar{\tilde{D}}_{r+1})
    -\tilde{q}\tilde{C}_n\bar{\tilde{D}}_{r+1}^-]\\
    &-z\tilde{q}\tilde{C}_n\bar{\tilde{A}}_{r+1}-z\tilde{q}\tilde{C}_{n,t_r}\\
    &=2z\tilde{q}\tilde{A}_{n+1}\bar{\tilde{C}}_r+(\tilde{A}_{n+1}-\tilde{A}_{n+1}^-)(z\tilde{r}\bar{\tilde{B}}_r^-
    -z\tilde{q}\bar{\tilde{C}}_r)+[(-\tilde{D}_{r+1}+\\
    &\tilde{D}_{r+1}^-+z\tilde{q}\tilde{C}_n)(\bar{\tilde{D}}_{n+1}^--\bar{\tilde{D}}_{n+1})-\tilde{q}\tilde{C}_n
    \bar{\tilde{D}}_{r+1}^-]-z\tilde{q}\tilde{C}_n\bar{\tilde{A}}_{r+1}-z\tilde{q}\tilde{C}_{n,t_r}\\
    &=2z\tilde{q}\tilde{A}_{n+1}\bar{\tilde{C}}_r+(\tilde{A}_{n+1}-\tilde{A}_{n+1}^-)(\bar{\tilde{D}}_{r+1}^-
    -\bar{\tilde{D}}_{r+1})+[(-\tilde{D}_{n+1}+\\
    &\tilde{D}_{n+1}^-+z\tilde{q}\tilde{C}_n)(\bar{\tilde{D}}_{r+1}^--\bar{\tilde{D}}_{r+1})-\tilde{q}\tilde{C}_n
    \bar{\tilde{D}}_{r+1}^-]-z\tilde{q}\tilde{C}_n\bar{\tilde{A}}_{r+1}-\tilde{q}\tilde{C}_{n,t_r}\\
    &=2z\tilde{q}\tilde{A}_{n+1}\bar{\tilde{C}}_r-z\tilde{q}\tilde{C}_n\bar{\tilde{D}}_{r+1}-z\tilde{q}\tilde{C}_n
    \bar{\tilde{A}}_{r+1}-z\tilde{q}\tilde{C}_{n,t_r}=0,\\
  \end{split}
  \end{equation}
  which is equivalent to (\ref{4.54}).
  Finally one can directly differentiate (\ref{4.24})
  \begin{equation}\label{4.60}
    R_{t_r}=-2\tilde{A}_{n+1}\tilde{A}_{n+1,t_{r}}-z\tilde{B}_{n,t_r}\tilde{C}_{n}-z\tilde{B}_{n}\tilde{C}_{n,t_r}
  \end{equation}
  and insert (\ref{4.52})-(\ref{4.54}) into (\ref{4.60}) to yield $R_{t_r}=0.$
  \end{proof}
  \end{proof}
  Next we derive Dubrovin-type equations, that is, first-order ordinary
  differential equations, which govern the dynamics of $\mu_j$ and $\nu_j$
  with respect to variations of $t_r.$ We recall that the affine part of $\mathcal{K}_n$ is
  nonsingular if
  \begin{equation}\label{4.61}
    \{E_m\}_{m=0,\ldots,2n+1}\subset\mathbb{C},\quad E_m\neq E_{m^{'}}\quad\text{for}\quad m\neq m^{'},m,m^{'}=0,\ldots,2n+1.
  \end{equation}
  \begin{lemma}
  Assume Hypothesis 2 and Hypothesis 3 and
  suppose (\ref{4.2}) (\ref{4.4}) hold on $\mathbb{Z}\times\Omega_\mu$ with $\Omega_\mu\subseteq\mathbb{R}$
  an open interval. In addition, assume that the zeros $\mu_j$ of $\tilde{B}_n(\cdot)$ remain distinct and nonzero on $\mathbb{Z}\times\Omega_{\mu}$. Then $\{\hat{\mu}_j\}_{j=1,\dotsi,n}$ defined in (\ref{4.25}) satisfy the following first-order system of differential system on $\mathbb{Z}\times\Omega_{\mu}$,
  \begin{equation}\label{4.62}
    \mu_{j,t_r}=\frac{(\tilde{q^+})^{-1}y(\hat{\mu_j})\bar{\tilde{B}}_r(\mu_j)}{\prod_{k=1,k\neq j}^{n}(\mu_k-\mu_j)},\quad j=1,\dotsi,n,
  \end{equation}
  with $$\hat{\mu}_j(n,\cdot)\in\textit{C}^{\infty}(\Omega_{\mu},\mathcal{K}_n),\quad j=1,\ldots,n, n\in\mathbb{Z}.$$
 For the zeros $\nu_j$ of $\tilde{C}_n,$ identical statements hold with $\mu_j$ and $\Omega_\mu$ replaced by
 $\nu_j$ and $\Omega_\nu,$ etc. In particular, $\{\hat{\nu}_j\}_{j=1,\dotsi,n},$ defined in (\ref{4.25}), satisfies the following first-order system on $\mathbb{Z}\times\Omega_\nu,$
 \begin{equation}\label{4.63}
   \nu_{j,t_r}=\frac{2\tilde{r}^{-1}y(\hat{\nu}_j)\bar{\tilde{C}}_r(\nu_j)}{\nu_j\prod_{k=1,k\neq j}^{n}(\mu_k-\mu_j)},\quad j=1,\dotsi,n,
 \end{equation}
 with $$\hat{\nu}_j(n,\cdot)\in\textit{C}^{\infty}(\Omega_{\nu},\mathcal{K}_n),\quad j=1,\ldots,n, n\in\mathbb{Z}.$$

 \end{lemma}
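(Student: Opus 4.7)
The proof plan hinges on comparing two different expressions for the time derivative of $\tilde{B}_n$ evaluated at a zero $z=\mu_j$: one from the polynomial factorization in (\ref{4.24a}) and the other from the compatibility equation (\ref{4.52}) already established in Lemma \ref{lemma5}.

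First I would differentiate the product formula $\tilde{B}_n(z,n,t_r)=-\tilde{q}^{+}(n,t_r)\prod_{k=1}^{n}(z-\mu_k(n,t_r))$ with respect to $t_r$ at a fixed $z$. Evaluating the result at $z=\mu_j(n,t_r)$, the term coming from $\partial_{t_r}\tilde{q}^{+}$ vanishes because the product contains the factor $(z-\mu_j)$, and in the remaining sum only the $k=j$ term survives for the same reason. This gives the clean identity
\begin{equation*}
\tilde{B}_{n,t_r}(\mu_j,n,t_r)=\tilde{q}^{+}\,\mu_{j,t_r}\prod_{k\neq j}(\mu_j-\mu_k),
\end{equation*}
which makes sense precisely because the $\mu_j$ are assumed simple on $\mathbb{Z}\times\Omega_\mu$. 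Then I would evaluate (\ref{4.52}) at $z=\mu_j$: since $\tilde{B}_n(\mu_j)=0$, the second term drops out, leaving
\begin{equation*}
\tilde{B}_{n,t_r}(\mu_j,n,t_r)=-2\tilde{A}_{n+1}(\mu_j)\,\bar{\tilde{B}}_r(\mu_j)=y(\hat{\mu}_j)\,\bar{\tilde{B}}_r(\mu_j),
\end{equation*}
where the last equality uses the lifted definition (\ref{4.25}). Equating these two expressions and solving for $\mu_{j,t_r}$ yields (\ref{4.62}), modulo a sign convention in the denominator.

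The argument for $\nu_j$ is completely parallel, but starting from $\tilde{C}_n(z,n,t_r)=-\tilde{r}(n,t_r)\prod_{k=1}^{n}(z-\nu_k(n,t_r))$ and evaluating (\ref{4.54}) at $z=\nu_j$. Here $\tilde{C}_n(\nu_j)=0$ kills the bracket containing $\bar{\tilde{A}}_{r+1}+\bar{\tilde{D}}_{r+1}$, leaving $\tilde{C}_{n,t_r}(\nu_j)=2\nu_j^{-1}\tilde{A}_{n+1}(\nu_j)\bar{\tilde{C}}_r(\nu_j)$, and the identification $y(\hat{\nu}_j)=2\tilde{A}_{n+1}(\nu_j)$ from (\ref{4.25}) delivers (\ref{4.63}) after comparing with the factored derivative.

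Finally, the smoothness statement $\hat{\mu}_j(n,\cdot)\in C^{\infty}(\Omega_\mu,\mathcal{K}_n)$ follows from the standard theory of ODEs: on $\Omega_\mu$ the right-hand side of (\ref{4.62}) is a smooth function of $\mu_j$ as long as the $\mu_j$ remain distinct and nonzero, which is exactly the hypothesis; the analogous reasoning applies to $\nu_j$. The main obstacle I anticipate is purely bookkeeping: keeping track of the sign arising from $\prod_{k\neq j}(\mu_j-\mu_k)$ versus $\prod_{k\neq j}(\mu_k-\mu_j)$ and ensuring that the sheet choice implicit in $y(\hat{\mu}_j)=-2\tilde{A}_{n+1}(\mu_j)$ is consistently carried through. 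No deeper obstacle should arise because the key identities (\ref{4.52}) and (\ref{4.54}) have already been proved and the simplicity hypothesis on the zeros avoids any degenerate coalescence phenomenon.
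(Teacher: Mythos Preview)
Your proposal is correct and follows essentially the same route as the paper: differentiate the product representation (\ref{4.24a}) of $\tilde{B}_n$ (resp.\ $\tilde{C}_n$), evaluate at $z=\mu_j$ (resp.\ $z=\nu_j$), and compare with the Lax-type identity (\ref{4.52}) (resp.\ (\ref{4.54})) using the sheet identification (\ref{4.25}). The paper's proof is in fact terser than yours and omits the smoothness remark, so there is nothing to add.
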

 \begin{proof}
 It suffices to consider (\ref{4.62}) for $\mu_{j,t_r}.$
 Using the product representation for $\tilde{B}_{n}$ in (\ref{4.24a})
 $$\tilde{B}_{n}=(-\tilde{q}^{+})\prod_{j=1}^{n}(\lambda-\mu_j)$$
and employing the (\ref{4.25}) and (\ref{4.52}), one computes
\begin{align*}
\begin{split}
&\tilde{B}_{n,t_r}(\mu_j)=-2\tilde{A}_{n+1}(\mu_j)\bar{\tilde{B}}_{r}(\mu_j)\\
&=y(\hat{\mu}_j)\bar{\tilde{B}}_{r}(\mu_j)
=\tilde{q}^+\mu_{j,t_r}\prod_{k\neq j,j=1}^n\left(\mu_j-\mu_k\right),\quad j=1,\dotsi,n,\\
\end{split}
\end{align*}
proving (\ref{4.62}). The case of (\ref{4.63}) for $\nu_j$ is analogous using the product presentation
for $\tilde{C}_n$ in (\ref{4.24a}) and employing (\ref{4.25}) and (\ref{4.54}).

 \end{proof}
Since the stationary trace formulas for $\tilde{a}_\ell, \tilde{b}_\ell$ in terms of symmetric functions of $\mu_j$ and $\nu_j$ in Lemma \ref{lemma2} extend line by line to the corresponding time-dependent setting, we next record their $t_r$-dependent analogs without proof. For simplicity we confine ourselves to the simplest ones only.
\begin{lemma}
Assume Hypothesis 2 and Hypothesis 3 hold and
suppose $p(n,t_r),q(n,t_r)$ satisfy (\ref{4.2}) (\ref{4.4}). Then, we have the following trace formula
\begin{equation}\label{4.64}
  \tilde{p}^{+}-\tilde{q}^{+}\tilde{r}-\tilde{q}^{++}\tilde{r}^+-\tilde{q}^{++}/\tilde{q}^{+}+\delta_1=-\sum_{j=1}^{n}\mu_j,
\end{equation}
\begin{equation}\label{4.65}
  \tilde{p}-\tilde{q}^{+}\tilde{r}-\tilde{r}^{-}/\tilde{r}-\tilde{q}\tilde{r}^{-}+\delta_1=-\sum_{j=1}^{n}\nu_j.
\end{equation}

\end{lemma}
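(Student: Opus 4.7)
\bigskip

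\noindent\textbf{Proof proposal.} The plan is to mimic the proof of Lemma \ref{lemma2} verbatim, treating $t_r$ as an inert parameter. Since the expressions $\tilde{B}_n(z,n,t_r)$ and $\tilde{C}_n(z,n,t_r)$ are polynomials in $z$ whose coefficients are the sequences $\tilde{b}_\ell,\tilde{c}_\ell$ produced by the recursion (\ref{2.2})--(\ref{2.5}), and since this recursion is algebraic in $\tilde{p},\tilde{q},\tilde{r}$ (with $t_r$ entering only through these potentials), the trace identities derived in the stationary case hold pointwise in $t_r$ without any further work.

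Concretely, I would proceed in two parallel steps. First, from the factored form (\ref{4.24a}), expand
\[
\tilde{B}_n(z,n,t_r)=(-\tilde{q}^{+})\prod_{j=1}^{n}(z-\mu_j)=-\tilde{q}^{+}z^{n}+\tilde{q}^{+}\Bigl(\sum_{j=1}^{n}\mu_j\Bigr)z^{n-1}+\cdots,
\]
and compare with the polynomial representation (\ref{4.9}), $\tilde{B}_n(z)=\sum_{\ell=0}^{n}\tilde{b}_{n-\ell}z^{\ell}$, so that the coefficient of $z^{n-1}$ yields $\tilde{b}_1=\tilde{q}^{+}\sum_{j=1}^{n}\mu_j$. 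Second, substitute the explicit formula for $\tilde{b}_1$ taken from (\ref{2.6}),
\[
\tilde{b}_1=-\tilde{p}^{+}\tilde{q}^{+}+(\tilde{q}^{+})^{2}\tilde{r}+\tilde{q}^{+}\tilde{q}^{++}\tilde{r}^{+}+\tilde{q}^{++}-\tilde{q}^{+}\delta_1,
\]
divide through by $\tilde{q}^{+}$ (permitted by Hypothesis 3, which forces $\tilde{q}^{+}\neq 0$), and rearrange to obtain (\ref{4.64}). An entirely analogous computation using $\tilde{C}_n=(-\tilde{r})\prod_{j=1}^{n}(z-\nu_j)$, the polynomial form (\ref{4.10}), and the expression for $\tilde{c}_1$ in (\ref{2.6}) produces (\ref{4.65}).

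There is essentially no obstacle: the only thing to verify is that the identification ``$t_r$ as a parameter'' is legitimate, i.e. that the time-dependent polynomials $\tilde{B}_n,\tilde{C}_n$ admit the factorization (\ref{4.24a}) with the same leading coefficients $-\tilde{q}^{+}$, $-\tilde{r}$ as in the stationary case. This is guaranteed by the recursion (\ref{2.2})--(\ref{2.5}) together with the initial data $\tilde{b}_0=-\tilde{q}^{+}$, $\tilde{c}_0=-\tilde{r}$ from (\ref{2.5}), which hold pointwise in $(n,t_r)$. Once this is noted, the trace formulas follow by pure coefficient matching and the proof is complete.
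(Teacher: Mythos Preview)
Your proposal is correct and matches the paper's approach exactly: the paper records this lemma ``without proof,'' stating that the stationary trace formulas of Lemma~\ref{lemma2} ``extend line by line to the corresponding time-dependent setting,'' and Lemma~\ref{lemma2} itself is proved precisely by comparing the $z^{n-1}$ coefficients in the factored and polynomial forms of $\tilde{B}_n,\tilde{C}_n$. Your write-up is in fact more detailed than the paper's, the only quibble being that $\tilde{q}^{+}\neq 0$ follows from the definition $\tilde{q}=e^{(S^+-I)^{-1}\ln q}$ rather than directly from Hypothesis~3.
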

\begin{lemma}\label{lemma8}
Suppose $p,q$ satisfy the (\ref{4.1}) (\ref{4.2}) and the $n$-th stationary RLV system (\ref{2.34}). Moreover, let $(n,t_r)\in\mathbb{Z}\times\mathbb{R}$ and $\mathcal{D}_{\underline{\hat{\mu}}}, \underline{\hat{\mu}}=\left(\hat{\mu}_1,\dotsi,\hat{\mu}_{n}\right)\in \text{Sym}^n(\mathcal{K}_n)$, $\mathcal{D}_{\underline{\hat{\nu}}},\underline{\hat{\nu}}=\left(\hat{\mu}_1,\dotsi,\hat{\nu}_{n}\right)\in \text{Sym}^n(\mathcal{K}_n)$ be the pole and zero divisors of degree $n$, respectively, associated with $p, q$ and $\tilde{\phi}$ defined according to (\ref{4.26}), that is,
\begin{equation*}
\hat{\mu}_j(n,t_{r})=\left(\mu_j(n,t_r),-2\tilde{A}_{n+1}(\mu_j(n,t_r),n,t_r)\right)\in\mathcal{K}_n,\quad j=1,\dotsi,n,
\end{equation*}
\begin{equation*}
\hat{\nu}_j(n,t_{r})=\left(\nu_j(n,t_r),2\tilde{A}_{n+1}(\nu_j(n,t_r),n,t_r)\right)\in\mathcal{K}_n,\quad j=1,\dotsi,n.
\end{equation*}
Then $\mathcal{D}_{\underline{\hat{\mu}}(n,t_r)}$ and $\mathcal{D}_{\underline{\hat{\nu}}(n,t_r)}$  are non-special for all $(n,t_{r})\in\mathbb{Z}\times\mathbb{R}$.
\end{lemma}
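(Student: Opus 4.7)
The plan is to invoke the standard hyperelliptic criterion that a positive divisor of degree $n$ on $\mathcal{K}_n$ (which has genus $n$) is non-special if and only if it contains no pair $\{P,P^*\}$ relative to the sheet-exchange involution, with the convention that a branch point appearing with multiplicity at least two already counts as such a pair. The non-specialness of $\mathcal{D}_{\underline{\hat{\mu}}(n,t_r)}$ and $\mathcal{D}_{\underline{\hat{\nu}}(n,t_r)}$ will then follow by combining this criterion with the fundamental identity $R_{2n+2}(z)=-\tilde{A}_{n+1}(z)^2-z\tilde{B}_n(z)\tilde{C}_n(z)$ from (\ref{4.24}) together with the non-singularity statement in Hypothesis 2. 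Since $R_{2n+2}$ is $t_r$-independent by Lemma \ref{lemma5}, the argument is purely pointwise in $t_r$, so one may fix $(n,t_r)$ throughout.

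First I would argue by contradiction and assume $\mathcal{D}_{\underline{\hat{\mu}}(n,t_r)}$ is special. The hyperelliptic criterion then supplies indices $j,k\in\{1,\dotsi,n\}$ (allowing $j=k$ when $\hat{\mu}_j$ is a branch point of multiplicity at least two in the divisor) with $\hat{\mu}_j=\hat{\mu}_k^{*}$. Comparing coordinates through $\hat{\mu}_j=(\mu_j,-2\tilde{A}_{n+1}(\mu_j))$ from (\ref{4.25}) forces $\mu_j=\mu_k=:E$ and $\tilde{A}_{n+1}(E)=0$. The multiplicity hypothesis simultaneously forces $\tilde{B}_n$ to have a zero of order at least two at $z=E$: either because two distinct linear factors $(z-\mu_j)$ and $(z-\mu_k)$ of $\tilde{B}_n$ collapse to a common root, or because the double point of the divisor comes from a double zero of $\tilde{B}_n$ at a branch point.

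Inserting these two facts into $R_{2n+2}(z)=-\tilde{A}_{n+1}(z)^2-z\tilde{B}_n(z)\tilde{C}_n(z)$ on a small disc around $z=E$: the first summand vanishes to order at least two because $\tilde{A}_{n+1}(E)=0$, and the second summand vanishes to order at least two because $\tilde{B}_n$ does. Consequently $R_{2n+2}$ would have a zero of order at least two at $E$, contradicting Hypothesis 2 which presents $R_{2n+2}(z)$ as $-(1/4)\prod_{m=0}^{2n+1}(z-E_m)$ with pairwise distinct branch values $\{E_m\}$. Hence $\mathcal{D}_{\underline{\hat{\mu}}(n,t_r)}$ is non-special for every $(n,t_r)\in\mathbb{Z}\times\mathbb{R}$.

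The proof for $\mathcal{D}_{\underline{\hat{\nu}}(n,t_r)}$ is strictly symmetric: using $\hat{\nu}_j=(\nu_j,2\tilde{A}_{n+1}(\nu_j))$ and the factor $\tilde{C}_n$ in place of $\tilde{B}_n$ in the same identity, the sheet-collapse and multiplicity analysis goes through verbatim to produce a forbidden double zero of $R_{2n+2}$. The one mildly delicate point, and the step in which I would spend the most care, is the clean enumeration of cases ($j\neq k$ with two distinct $\mu$'s lying over the same $z$-value versus $j=k$ corresponding to a doubled branch point) when extracting the double zero of $\tilde{B}_n$ (resp.\ $\tilde{C}_n$) from the specialness assumption; once that is in hand, the vanishing-order computation against Hypothesis 2 finishes the lemma immediately.
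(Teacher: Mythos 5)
Your proof is correct and follows essentially the same route as the paper's: both reduce non-specialness to the absence of a conjugate pair $\{P,P^{*}\}$ in the divisor, and both derive the contradiction from a forced double zero of $R_{2n+2}(z)=-\tilde{A}_{n+1}(z)^{2}-z\tilde{B}_n(z)\tilde{C}_n(z)$ at a coinciding projection, which violates the simplicity of the branch points $\{E_m\}_{m=0}^{2n+1}$ in Hypothesis 2. The only difference is organizational: the paper splits into the cases $\mu_0\notin\{E_m\}$ (where $\tilde{A}_{n+1}(\mu_0)\neq 0$ places all lifts on one sheet) and $\mu_0=E_{m_0}$, whereas you fold the first case into the observation that $\hat{\mu}_j=\hat{\mu}_k^{*}$ already forces $\tilde{A}_{n+1}(E)=0$.
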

\begin{proof}
We are only to prove the conclusion for $\mathcal{D}_{\underline{\hat{\mu}}(n,t_r)}$.
$\mathcal{D}_{\underline{\hat{\mu}}(n)}$ is non-special if and
only if $\{\hat{\mu}_1(n),\dotsi,\hat{\mu}_n(n)\}$ contains one
pair of $\{\hat{\mu}_j,\hat{\mu}^*_j\}$. Hence, $\mathcal{D}_{\underline{\hat{\mu}}(n)}$
is non-special as long as the projection $\mu_j$ of $\hat{\mu}_j$ are mutually distinct,
$\mu_j(n)\neq\mu_k(n)$ for $j\neq k$. If two or more projection coincide for some $n_0\in\mathbb{Z}$,
for instance,$$\mu_{j_1}(n_0)=\dotsi=\mu_{j_k}(n_0)=\mu_0,\quad k>1.$$ There are two cases in the
following associated with $\mu_0$. If $\mu_0\in\mathbb{C}\backslash\{E_0,\dotsi,E_{2n+1}\}$, then $\tilde{A}_{n+1}(\mu_0,n_0,t_r)\neq 0$. It is obvious that
$\hat{\mu}_{j_1}(n_0,t_r),\dotsi,\hat{\mu}_{j_k}(n_0,t_r)$ all meet in the same sheet and hence no special divisor
can arise in this manner. If $\mu_0$ equals to some $E_{m_0}$ and $k>1$, one concludes
$$\tilde{B}_{n}(z,n_0,t_r)\equfill{z\rightarrow E_{m_0}}{}O\left((z-E_{m_0})^2\right)$$ and $$\tilde{A}_{n+1}(E_{m_0},n_0,t_r)=0.$$
But one observes $R_{2p+2}(z,n_0,t_r)=-\tilde{A}_{n+1}^2-z\tilde{B}_n\tilde{C}_n=O\left((z-E_{m_0})^2\right).$
This conclusion contradict with the hypothesis that the curve is nonsingular. We have $k=1 $. Therefore no special divisor can arise in this manner. Then we have completed the proof.
\end{proof}

Next, Now we turn to asymptotic properties of $\tilde{\phi}, \Psi_1, \Psi_2$ defined in (\ref{4.26}) (\ref{4.28}) and (\ref{4.29}) in a neighborhood of $P_{0,\pm}$ and $P_{\infty\pm}.$
\begin{lemma}
Assume Hypothesis 2 and Hypothesis 3 hold and
suppose $p(n,t_r),q(n,t_r)$ satisfy (\ref{4.2}) (\ref{4.4}).
Moreover, let $P=(z,y)\in\mathcal{K}_{n}\backslash\{P_{\infty\pm},P_{0}\}$, $(n,n_{0},t_r)\in\mathbb{Z}\times\mathbb{Z}\times\mathbb{R}$. Then, the meromorphic function $\tilde{\phi}$ on
$\mathcal{K}_n$ has the following asymptotic behavior
\begin{equation}\label{4.66}
  \tilde{\phi}(P)=
  \begin{cases}
      (\tilde{q}^{+})^{-1}\zeta^{-1}+\left(((\tilde{q}^{+}/\tilde{q}-1)\tilde{p})/\tilde{q}\right)^{+}+O(\zeta)
                                                                             &\text{as}\quad P\rightarrow P_{\infty+}, \cr
      -\tilde{r}+(\tilde{p}\tilde{r}^--\tilde{p}\tilde{r})\zeta+O(\zeta^2)
                                                                              &\text{as}\quad P\rightarrow P_{\infty-}, \cr
  \end{cases}
\end{equation}
where we use the local coordinate $z=\zeta^{-1}$ near the points $P_{\infty\pm}$.
\begin{equation}\label{4.67}
    \tilde{\phi}(P)=
  \begin{cases}
     c_{n}/(\prod_{m=0}^{2n+1}E_{m})\zeta+O(\zeta^2)
                                                                             &\text{as}\quad P\rightarrow P_{0,+}, \cr
      -\left(\prod_{m=0}^{2n+1}E_m\right)/b_n+O(\zeta)
                                                                              &\text{as}\quad P\rightarrow P_{0,-}, \cr
  \end{cases}
\end{equation}
where we use the local coordinate $z=\zeta$ near the points $P_{0,\pm}.$

\end{lemma}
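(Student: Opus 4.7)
The plan is to mimic the proof of Lemma \ref{lemma3} verbatim, treating $t_r$ merely as a parameter. The key observation is that the time-dependent meromorphic function $\tilde{\phi}(P,n,t_r)$ defined by (\ref{4.26}) satisfies the very same Riccati-type equation (\ref{4.32}) as in the stationary setting, and the polynomials $\tilde{A}_{n+1},\tilde{B}_n,\tilde{C}_n$ depending on $t_r$ retain their degree structure. Hence no dynamical information about the $t_r$-flow is needed for local expansions near $P_{\infty\pm}$ and $P_{0,\pm}$; only the algebraic identities (\ref{4.24}) and the Riccati equation are invoked.

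First, I would establish the existence of the asymptotic expansions. Since $\tilde{A}_{n+1}(\cdot,n,t_r)$ is a polynomial of degree $n+1$ and $\tilde{B}_n(\cdot,n,t_r),\tilde{C}_n(\cdot,n,t_r)$ are polynomials of degree $n$ in $z$, the representation (\ref{4.26}) together with $y^2=\prod_{m=0}^{2n+1}(z-E_m)$ shows that $\tilde{\phi}$ admits a convergent Laurent/Taylor expansion in the local coordinate $\zeta=z^{-1}$ near $P_{\infty\pm}$ and $\zeta=z$ near $P_{0,\pm}$. The choice of sheet determines the sign of $y$ and hence which of the two formulas in (\ref{4.26}) is better suited for the computation (the numerator/denominator that vanishes at the relevant point).

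Next, I would fix the coefficients at $P_{\infty\pm}$ by inserting an ansatz
\begin{equation*}
\tilde{\phi}=\phi_{-1}\zeta^{-1}+\phi_{0}+\phi_{1}\zeta+O(\zeta^{2})\quad (P\to P_{\infty+}),\qquad \tilde{\phi}=\phi_{0}+\phi_{1}\zeta+O(\zeta^{2})\quad (P\to P_{\infty-})
\end{equation*}
into the Riccati equation (\ref{4.32}) and matching powers of $\zeta$. The $\zeta^{-2}$ and $\zeta^{-1}$ coefficients at $P_{\infty+}$ force $\phi_{-1}=(\tilde{q}^{+})^{-1}$ and $\phi_{0}=\bigl(((\tilde{q}^{+}/\tilde{q}-1)\tilde{p})/\tilde{q}\bigr)^{+}$, while at $P_{\infty-}$ the leading balance gives $\phi_{0}=-\tilde{r}$ and $\phi_{1}=\tilde{p}\tilde{r}^{-}-\tilde{p}\tilde{r}$. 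These are exactly the stationary computations, performed pointwise in $t_r$.

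For $P\to P_{0,\pm}$, the Riccati equation is less convenient because $z=\zeta$ vanishes; instead I would use the explicit representation (\ref{4.26}) directly. Writing
\begin{equation*}
\tilde{\phi}(P)=\frac{z\tilde{C}_n(z,n,t_r)}{\tfrac{1}{2}y+\tilde{A}_{n+1}(z,n,t_r)}\quad (P\to P_{0,+}),\qquad \tilde{\phi}(P)=\frac{\tfrac{1}{2}y-\tilde{A}_{n+1}(z,n,t_r)}{\tilde{B}_n(z,n,t_r)}\quad(P\to P_{0,-}),
\end{equation*}
and using $\tilde{A}_{n+1}(0,n,t_r)^{2}=\tfrac{1}{4}\prod_{m=0}^{2n+1}E_m$ together with $\tilde{B}_n(0)=b_n$, $\tilde{C}_n(0)=c_n$, the leading asymptotics (\ref{4.67}) drop out by direct substitution. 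The main obstacle (really the only one) is a careful sign/sheet bookkeeping: one must consistently pick the branch of $y$ that corresponds to $P_{0,+}$ versus $P_{0,-}$ (respectively $P_{\infty+}$ versus $P_{\infty-}$) so that the leading terms in the numerator or denominator do \emph{not} cancel. Once this is handled, the proof reduces to the algebraic computations already carried out in Lemma \ref{lemma3}, with $t_r$ carried passively as a parameter throughout.
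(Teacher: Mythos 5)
Your proposal matches the paper's proof exactly: the paper disposes of this lemma in one line by noting that $t_r$ enters (\ref{4.26}) only as a fixed parameter, so the asymptotics are identical to those established in Lemma \ref{lemma3}, whose proof (Riccati ansatz at $P_{\infty\pm}$, direct substitution of $\tilde{B}_n$, $\tilde{C}_n$ at $P_{0,\pm}$) you have correctly reproduced. No substantive difference.
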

\begin{proof}
By the definition of $\tilde{\phi}$ in (\ref{4.26}) the time parameter $t_r$ can be viewed as an additional but
fixed parameter, the asymptotic behavior of $\tilde{\phi}$ remains the same as in Lemma \ref{lemma3}.
\end{proof}

\begin{theorem}
Assume Hypothesis 2 and Hypothesis 3 hold and
suppose $p(n,t_r),q(n,t_r)$ satisfy (\ref{4.2}) (\ref{4.4}). Moreover, let $P\in\mathcal{K}_{n}$\textbackslash$\{P_{\infty\pm},P_{0,\pm}\}$
and $(n,n_{0})\in\mathbb{Z}^{2}.$ Then $\mathcal{D}_{\underline{\hat{\mu}}(n,t_r)}$ is non-special. Moreover,\\
\begin{equation} \label{4.69}
\phi(P,n,t_r)=C(n,t_r)\frac{\theta(\underline{z}(P,\underline{\hat{\nu}}(n,t_r)))}{\theta(\underline{z}(P,\underline{\hat{\mu}}(n,t_r)))}\exp\left(\int_{Q_{0}}^{P}\omega_{P_{0,+}P_{\infty+}}^{(3)}\right),
\end{equation}
and $p, q$ are the form of
\begin{equation}\label{4.70}
  p^{+}(n,t_r)=\frac{1}{2}\left(-\Delta_3-\Delta_3^{+}-\delta_1+1-\Delta_1\pm\left((\Delta_3+\Delta_3^++\delta_1-1+\Delta_1)^{2}+4\Delta_2\right)^{\frac{1}{2}}\right)
\end{equation}
\begin{equation}\label{4.71}
  \begin{split}
  &q^{+}(n,t_r)=\Delta_2/p^{+}+1\\
  &=2\Delta_2\left(-\Delta_3-\Delta_3^{+}-\delta_1+1-\Delta_1\pm\left((\Delta_3+\Delta_3^++\delta_1-1+\Delta_1)^{2}+4\Delta_2\right)^{-\frac{1}{2}}\right)^{-1}\\
  &+1.\\
  \end{split}
\end{equation}
Here
\begin{equation}\label{4.72}
  \Delta_1=\sum_{j=1}^n\lambda_j^{'}-\sum_{j=1}^n c_j(k)\partial_{\omega_j}\ln\left(\frac{\theta(\underline{z}(P_{\infty+},\underline{\hat{\mu}}(n,t_r))
  +\underline{\omega})}{\theta(\underline{z}(P_{\infty-},\underline{\hat{\mu}}(n,t_r))+\underline{\omega})}\right)|_{\underline{\omega}=0},
\end{equation}
\begin{equation}\label{4.73}
  \Delta_2=\kappa_{\infty+}-\sum_{j=1}^{n}c_{j}(n)\partial_{\omega_{j}}   \ln\left(\frac{\theta(\underline{z}(P_{\infty+},\underline{\hat{\nu}}(n,t_r))+\underline{\omega})}
   {\theta(\underline{z}(P_{\infty+},\underline{\hat{\mu}}(n,t_r))+\underline{\omega})}\right)
   |_{\underline{\omega}=0},
\end{equation}
\begin{equation}\label{4.74}
  \Delta_3=\frac{\theta(\underline{z}(P_{\infty-},\underline{\hat{\nu}}(n,t_r)))}
  {\theta(\underline{z}(P_{\infty-},\underline{\hat{\mu}}(n,t_r)))}\frac{\theta(\underline{z}(P_{\infty+},\underline{\hat{\mu}}(n,t_r)))}
  {\theta(\underline{z}(P_{\infty+},\underline{\hat{\nu}}(n,t_r)))}\frac{\tilde{a}_2}{\tilde{a}_1}
\end{equation}
and $\tilde{a}_1, \tilde{a}_2$, $ \{\lambda_j^{'}\}_{j=1,\dotsi,n}\in\mathbb{C}$ in (\ref{3.32}).
The Abel map evolving with respect to $t_r$ are
\begin{equation}\label{4.74a}
  \frac{\partial}{\partial_{t_r}}\underline{\alpha}_{Q_{0},\ell}(\mathcal{D}_{\underline{\hat{\mu}}(n,t_r)})
  =\frac{\partial}{\partial_{t_r}}\underline{\alpha}_{Q_{0},\ell}(\mathcal{D}_{\underline{\hat{\nu}}(n,t_r)})
=-\sum_{k=1}^{p}\sum_{s=0}^{r}c_\ell(k)\bar{\tilde{\delta}}_{r-s}\hat{c}_{k+s-p}(\underline{E}).
\end{equation}

\end{theorem}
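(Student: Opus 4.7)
The plan is to treat $t_r$ as a parameter and mimic the stationary argument from Theorem \ref{TH4} verbatim for the pieces (\ref{4.69})--(\ref{4.74}), then prove the new ingredient (\ref{4.74a}) separately using the Dubrovin-type equations. First, non-specialness of $\mathcal{D}_{\underline{\hat{\mu}}(n,t_r)}$ is exactly Lemma \ref{lemma8}, so the theta function $\theta(\underline{z}(\cdot,\underline{\hat{\mu}}(n,t_r)))$ is not identically zero and the right-hand side of (\ref{4.69}) is well-defined. Using the divisor computation (\ref{4.26a}) together with the standard Riemann vanishing theorem, one checks that
\begin{equation*}
\tilde{\phi}(P,n,t_r)\frac{\theta(\underline{z}(P,\underline{\hat{\mu}}(n,t_r)))}{\theta(\underline{z}(P,\underline{\hat{\nu}}(n,t_r)))}\exp\!\left(-\int_{Q_0}^{P}\omega_{P_{0,+}P_{\infty+}}^{(3)}\right)
\end{equation*}
is a holomorphic function on the compact Riemann surface $\mathcal{K}_n$ (the only potential singularities at $P_{\infty+},P_{0,+}$ are killed by the third-kind differential, and the remaining divisor cancels). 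By Riemann--Roch it is a constant $C(n,t_r)$, which gives (\ref{4.69}).

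Second, to extract $p,q$, I would substitute the local expansions (\ref{4.66})--(\ref{4.67}) into (\ref{4.69}) at $P\to P_{\infty\pm}$, mirroring equations (\ref{3.91})--(\ref{3.94}) of the stationary case. Matching the $\zeta^{-1}$ coefficient at $P_{\infty+}$ yields $(\tilde{q}^+)^{-1}$, the $\zeta^0$ coefficient at $P_{\infty+}$ yields $(\tilde{q}^{++}/\tilde{q}^+-1)\tilde{p}^+ = \Delta_2$, and the leading coefficient at $P_{\infty-}$ yields $-\tilde{q}^+\tilde{r}=\Delta_3$. Feeding these two identities together with the time-dependent trace formula (\ref{4.64}) into a quadratic in $\tilde{p}^+$ produces (\ref{4.70}), after which $q^+$ comes from $\Delta_2/p^+ + 1$, giving (\ref{4.71}). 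The algebra is identical to that following (\ref{3.103})--(\ref{3.80}); only the $t_r$-dependence of the divisor data is carried along.

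Third, for the Abel map evolution (\ref{4.74a}), the plan is to use the Dubrovin equations (\ref{4.62}) for $\mu_{j,t_r}$. Writing the $\ell$-th component of $\underline{\alpha}_{Q_0}(\mathcal{D}_{\underline{\hat{\mu}}(n,t_r)})$ as $\sum_{j=1}^{n}\int_{Q_0}^{\hat{\mu}_j}\omega_\ell$ with $\omega_\ell=\sum_{k=1}^{n}c_\ell(k)z^{k-1}dz/y$, differentiation in $t_r$ gives
\begin{equation*}
\frac{\partial}{\partial t_r}\underline{\alpha}_{Q_0,\ell}(\mathcal{D}_{\underline{\hat{\mu}}(n,t_r)})
=\sum_{j=1}^{n}\sum_{k=1}^{n}c_\ell(k)\frac{\mu_j^{k-1}\mu_{j,t_r}}{y(\hat{\mu}_j)}
=\sum_{j=1}^{n}\sum_{k=1}^{n}c_\ell(k)\frac{\mu_j^{k-1}(\tilde{q}^+)^{-1}\bar{\tilde{B}}_r(\mu_j)}{\prod_{m\neq j}(\mu_j-\mu_m)}.
\end{equation*}
Applying the Lagrange interpolation identity from Section 5 to evaluate the symmetric sum $\sum_j \mu_j^{k-1}\bar{\tilde{B}}_r(\mu_j)/\prod_{m\neq j}(\mu_j-\mu_m)$, and using the expansion of $\bar{\tilde{B}}_r$ in terms of the homogeneous coefficients $\bar{\tilde{b}}_s$ multiplied by the free constants $\bar{\tilde{\delta}}_{r-s}$ together with the relation $\bar{\tilde{b}}_s/\tilde{q}^+ = -\hat{c}_{k+s-p}(\underline{E})$ coming from the asymptotic expansion of $\tilde{B}_n$ near $P_{\infty\pm}$, one collapses the double sum to the claimed expression. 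The identical computation works for $\hat{\nu}_j$ via (\ref{4.63}), yielding the joint formula (\ref{4.74a}).

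The main obstacle is the last step: verifying that the Lagrange-type symmetric function evaluation actually reproduces exactly the coefficients $\hat{c}_{k+s-p}(\underline{E})$, because this requires carefully tracking how $\bar{\tilde{B}}_r$ reduces modulo $\tilde{B}_n$ on the locus $\{z=\mu_j\}$ and matching that remainder against the generating function $\left(\prod_m(1-E_m\eta)\right)^{-1/2}$. The rest of the theorem is essentially the stationary Theorem \ref{TH4} carried out with $t_r$ as a spectator parameter, so no new analytic ingredient is needed there.
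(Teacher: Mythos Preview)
Your proposal is correct and follows essentially the same route as the paper: the paper explicitly says that (\ref{4.69})--(\ref{4.74}) are proved ``analogous with Theorem \ref{TH4}'' with $t_r$ treated as a parameter, and then derives (\ref{4.74a}) by differentiating the Abel map, inserting the Dubrovin equations (\ref{4.62}), and invoking the Lagrange interpolation representation of $\bar{\tilde{B}}_r(\mu_j)$ from the Appendix (Section~5) to collapse the resulting sum. Your only slight imprecision is the heuristic ``$\bar{\tilde{b}}_s/\tilde{q}^+=-\hat{c}_{k+s-p}(\underline{E})$ from asymptotics'': the paper instead writes $\bar{\tilde{B}}_r(\mu_j)=\sum_{s}\bar{\tilde{\delta}}_{r-s}\sum_{t}\hat{c}_t(\underline{E})\Psi^{(j)}_{s-t}(\underline{\mu})$ and then applies the standard Lagrange identity for $\sum_j \mu_j^{k-1}\Psi^{(j)}_m(\underline{\mu})/\prod_{m'\neq j}(\mu_j-\mu_{m'})$ to reach $\hat{c}_{k+s-p}(\underline{E})$ --- but this is exactly the content of Section~5 that you already cite as the missing technical piece.
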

\begin{proof}
The proof of (\ref{4.69})-(\ref{4.74})  is analogous with theorem \ref{TH4}. Here $t_r$ can be regarded as a parameter. Next we prove (\ref{4.74a}).
\begin{equation}
\begin{split}
&\frac{\partial}{\partial_{t_r}}\underline{\alpha}_{Q_{0},\ell}(\mathcal{D}_{\underline{\hat{\mu}}(n,t_r)})
=\frac{\partial}{\partial_{t_r}}\sum_{j=1}^n\int_{Q_0}^{\hat{\mu}_j(n,t_r)}\omega_\ell\\
&=\sum_{j=1}^n \omega_\ell(\hat{\mu}_j)\mu_{j,t_r}\\
&=\sum_{j=1}^n\left(\sum_{k=1}^n c_\ell(k)\frac{\mu_j^{k-1}}{y(\hat{\mu}_j(n,t_r))}\right)\left((\tilde{q}^{+})^{-1}\bar{\tilde{B}}_{r}(\mu_{j}(n,t_r))
y(\hat{\mu}_j(n,t_r))\prod_{k=1,k\neq j}^{n}\left(\mu_{j}-\mu_{k}\right)^{-1}\right)\\
&=\sum_{j=1}^n\left(\sum_{k=1}^n c_\ell(k)\frac{\mu_j^{k-1}}{\prod_{k=1,k\neq j}^{n}\left(\mu_{j}-\mu_{k}\right)}\right)\left((\tilde{q})^{-1}\bar{\tilde{B}}_{r}(\mu_{j}(n,t_r))\right)\\
&=-\sum_{k=1}^{n}c_\ell(k)\sum_{j=1}^{n}\frac{\mu_j^{k-1}}{\prod_{k=1,k\neq j}^{n}\left(\mu_{j}-\mu_{k}\right)}\left(\sum_{s=0}^{r}\bar{\tilde{\delta}}_{r+1-s}\left(\sum_{t=\text{max}\{0,s-p\}}^s\hat{c}_t(\underline{E})\Psi_{s-t}^{(j)}(\underline{\mu})\right)\right)\\
&=-\sum_{k=1}^{p}\sum_{s=0}^{r}c_\ell(k)\bar{\tilde{\delta}}_{r-s}\hat{c}_{k+s-p}(\underline{E}).\\
\end{split}
\end{equation}
\end{proof}

\section{Appendix: The Lagrange Interpolation Representation of $-(\tilde{q}^{+})^{-1}\tilde{B}_{r+1}(\mu_j(n,t_r))$}

We search for the interpolation representation of $\tilde{B}_r(\mu_j(n,t_r))$ as in the KdV, AKNS, Toda cases.
Introducing the notation in \cite{A1,A2} ,
\begin{align*}
\begin{split}
&\Psi_k(\underline{\mu})=(-1)^k\sum_{\underline{\ell}\in\mathcal{S}_k}\mu_{\ell_1}\dotsi,\mu_{\ell_k},\\ &\mathcal{S}_k=\{\underline{l}=(\ell_1,\dotsi,\ell_k)\in\mathbb{N}^k|\ell_1<\dotsi<\ell_k\leq n\},\quad k=1,\dotsi n,\\
&\Phi_k^{(j)}(\underline{\mu})=(-1)^k\sum_{\underline{\ell}\in\mathcal{\tau}_k^{(j)}}\mu_{\ell_1}\dotsi,\mu_{\ell_k},\\ &\mathcal{\tau}_k^{(j)}=\{\underline{l}=(\ell_1,\dotsi,\ell_k)\in\mathbb{N}^k|\ell_1<\dotsi<\ell_k\leq n,\quad \ell_m\neq j,\quad m=1,\dotsi,k\}, \\
&k=1,\dotsi n-1,\quad j=1,\dotsi,n\\
\end{split}
\end{align*}
and the formula
\begin{equation}\label{6.3}
\sum_{\ell=0}^{k}\Psi_{k-\ell}(\underline{\mu})\mu_j^\ell=\Phi_{k}^{(j)}(\underline{\mu}),\quad k=0,\dotsi,n,\quad j=1,\dotsi,n.
\end{equation}
Let $B_n=(\tilde{q}^{+})^{-1}\tilde{B}_n, b_s=(\tilde{q}^{+})^{-1}\tilde{b}_s(s=0,1,\dotsi,n),$
then one finds
$$B_{n}(z)=\sum_{\ell=0}^{n}b_{n-s}z^s=\prod_{j=1}^n\left(z-\mu_j\right)=\sum_{l=0}^{n}\Psi_{p-l}(\underline{\mu})
z^\ell$$
and  $$b_\ell=\Psi_{\ell}\left(\underline{\mu}\right),\quad \ell=0,\dotsi,n.$$
In the case $r<n$,
\begin{align}\label{6.4}
\begin{split}&\bar{B}_{r}=\sum_{s=0}^{r}\bar{b}_{r-s}z^s=\sum_{s=0}^{r}\left(\sum_{k=0}^{\text{min}\{r-s,n\}}\hat{c}_{r-s-k}
(\underline{E})b_k\right)z^s\\
&=\sum_{s=0}^{r}\left(\sum_{k=0}^{r-s}\hat{c}_{r-s-k}(\underline{E})b_k\right)z^s\\
&=\sum_{s=0}^{r}\left(\sum_{k=0}^{r-s}\hat{c}_{r-s-k}(\underline{E})\Psi_{k-1}(\underline{\mu})\right)z^s\\
 &=\sum_{s=0}^{r}\hat{c}_{s}(\underline{E})\sum_{t=0}^{r-s}\Psi_{r-s-t}(\underline{\mu})z^t.\\
\end{split}
\end{align}
Using (\ref{6.3}), we have
\begin{equation}\label{6.5}
\begin{split}
&\bar{B}_{r}(\mu_j)=\sum_{s=0}^{r}\hat{c}_{s}(\underline{E})\sum_{t=0}^{r-s}\Psi_{r-s-t}(\underline{\mu})\mu_j^t\\
&=\sum_{s=0}^{r}\hat{c}_{s}(\underline{E})\Psi_{r-s}^{(j)}(\underline{\mu}).\\
\end{split}
\end{equation}
In the case $r>n$,
\begin{align}
\begin{split}
&\bar{B}_{r}(z)=\sum_{s=0}^{r}\bar{b}_{r-s}z^s=\sum_{s=0}^{r}\left(\sum_{k=0}^{\text{min}\{r-s,n\}}\hat{c}_{r-s-k}
(\underline{E})b_k\right)z^s\\
&=\sum_{s=0}^{r-n}\sum_{k=0}^{n}\hat{c}_{r-s-k}(\underline{E})\Psi_{k}(\underline{\mu})z^s+\sum_{s=r-p+1}^{r}
\sum_{k=0}^{r-s}\hat{c}_{r+1-s-k}(\underline{E})\Psi_{k}(\underline{\mu})z^s\\
&=\sum_{s=0}^{r-n}\sum_{k=0}^{n}\hat{c}_{r-s-k}(\underline{E})\Psi_{k}(\underline{\mu})z^s+
\sum_{s=r-n+1}^{r}\sum_{k=0}^{n}\hat{c}_{r-s-k}(\underline{E})\Psi_{k}(\underline{\mu})z^s\\
&=\sum_{k=0}^{n}\sum_{s=0}^{r}\hat{c}_{r-s-k}(\underline{E})\Psi_{k}(\underline{\mu})z^s\\
&=\sum_{s=0}^{r}\sum_{k=0}^{n}\hat{c}_{r-s-k}(\underline{E})\Psi_{k}(\underline{\mu})z^s\\
&=\sum_{s=0}^{r}\sum_{k=0}^{n}\hat{c}_{s}(\underline{E})\Psi_{k}(\underline{\mu})z^{r-s-k}\\
&=\sum_{s=0}^{r-n}\hat{c}_{s}(\underline{E})\left(\sum_{k=0}^{n}\Psi_{k}(\underline{\mu})z^{p-k}\right)z^{r-n-s}
+\sum_{s=r-p+1}^{r}\hat{c}_{s}(\underline{E})\left(\sum_{k=0}^{n}\Psi_{k}(\underline{\mu})z^{r-s-k}\right)\\
&=\sum_{s=0}^{r-n}\hat{c}_{s}(\underline{E})\left(B_{n}(\lambda)\right)z^{r-n-s}+\sum_{s=r-n+1}^{r}\hat{c}_{s}
(\underline{E})\left(\sum_{k=0}^{r-s}\Psi_{k}(\underline{\mu})z^{r-s-k}\right).\\
\end{split}
\end{align}
Then we have
\begin{equation}\label{6.7}
\bar{B}_{r}(\mu_j)=\sum_{s=r-n+1}^{r}\hat{c}_{s}(\underline{E})\left(\sum_{k=0}^{r-s}\Psi_{k}(\underline{\mu})
\mu_j^{r-s-k}\right)=\sum_{s=r-n+1}^{r}\hat{c}_s(\underline{E})\Psi_{r-s}^{(j)}(\underline{\mu}).
\end{equation}
Combining (\ref{6.5}) with (\ref{6.7}), one finds
\begin{equation}
\bar{B}_{r}(z)=\sum_{s=\text{max}\{0,r-n\}}^{r}\hat{c}_{s}(\underline{E})\Psi_{r-s}^{(j)}(\underline{\mu}).
\end{equation}
Hence
\begin{equation}
\begin{split}
&\bar{\tilde{B}}_{r}(\mu_j)=\sum_{s=0}^{r}\bar{\tilde{\delta}}_{r-s}\bar{B}_s(\mu_j)=\sum_{s=0}^{r}
\tilde{\delta}_{r-s}\left(\sum_{t=\text{max}\{0,s-n\}}^{s}\hat{c}_{t}(\underline{E})\Psi_{s-t}^{(j)}
(\underline{\mu})\right).\\
\end{split}
\end{equation}

\renewcommand{\baselinestretch}{1.2}

\end{document}